\theoremstyle{plain}
\newtheorem{theorem}{Theorem}[section]
\theoremstyle{plain}
\newtheorem{lemma}[theorem]{Lemma}
\theoremstyle{plain}
\newtheorem{corollary}[theorem]{Corollary}
\theoremstyle{plain}
\theoremstyle{plain}
\newtheorem{conjecture}[theorem]{Conjecture}
\theoremstyle{plain}
\theoremstyle{definition}
\newtheorem{definition}[theorem]{Definition}
\theoremstyle{definition}
\newtheorem{fact}[theorem]{Fact}
\theoremstyle{remark}
\newtheorem{remark}{Remark}[section]
\theoremstyle{definition}
\DeclareMathOperator{\real}{\mathbb{R}}
\DeclareMathOperator{\nat}{\mathbb{N}}
\newcommand{\cmplx}{\mathbb{C}}
\newcommand{\intg}{\mathbb{Z}}
\newcommand{\poly}{\text{poly}}
\newcommand{\conj}[1]{\overline{#1}}
\newcommand{\tr}{\text{Tr}}
\newcommand{\id}{\mathbb{I}}
\newcommand{\cliff}{\mathcal{C}}
\newcommand{\pauli}{\mathcal{P}}
\newcommand{\clifft}{\mathcal{J}}
\newcommand{\X}{\text{X}}
\newcommand{\Y}{\text{Y}}
\newcommand{\Z}{\text{Z}}
\newcommand{\had}{\text{H}}
\newcommand{\CNOT}{\text{CNOT}}
\newcommand{\phase}{\text{S}}
\newcommand{\tof}{\text{TOF}}
\newcommand{\swap}{\text{SWAP}}
\newcommand{\cs}{\text{CS}}
\newcommand{\cz}{\text{CZ}}
\newcommand{\diag}{\text{diag}}
\newcommand{\chan}[1]{\widehat{#1}}
\newcommand{\sde}{\text{sde}}
\newcommand{\gen}{\mathcal{G}}
\newcommand{\q}[1]{(#1)}
\newcommand{\Path}{\text{Path}}
\newcommand{\ham}{\text{ham}}
\newcommand{\tofcount}{\mathcal{T}^{of}}
\newcommand{\tofeps}{\mathcal{T}_{\epsilon}^{of}}
\begin{document}

\title{Synthesizing Toffoli-optimal quantum circuits for arbitrary multi-qubit unitaries}
\author{Priyanka Mukhopadhyay \thanks{mukhopadhyay.priyanka@gmail.com, priyanka.mukhopadhyay@utoronto.ca}}

\affil[1]{Department of Computer Science, University of Toronto, ON, Canada}

\date{}

\maketitle

\begin{abstract}
    In this paper we study the Clifford+Toffoli universal fault-tolerant gate set. We introduce a generating set in order to represent any unitary implementable by this gate set and with this we derive a bound on the Toffoli-count of arbitrary multi-qubit unitaries. We analyse the channel representation of the generating set elements, with the help of which we infer $|\clifft_n^{Tof}|<|\clifft_n^T|$, where $\clifft_n^{Tof}$ and $\clifft_n^T$ are the set of unitaries exactly implementable by the Clifford+Toffoli and Clifford+T gate set, respectively. We develop Toffoli-count optimal synthesis algorithms for both approximately and exactly implementable multi-qubit unitaries. With the help of these we prove $|\clifft_n^{Tof}|=|\clifft_n^{CS}|$, where $\clifft_n^{CS}$ is the set of unitaries exactly implementable by the Clifford+CS gate set. 
\end{abstract}


\section{Introduction}
\label{sec:intro}

Quantum circuits is a popular model for describing and implementing quantum algorithms, similar to the case of classical computation. These consist of a series of elementary operations or gates belonging to a universal set, which usually consists of Clifford group gates and at least one non-Clifford gate \cite{1998_G, 2010_NC}. Example of some universal gate sets include Clifford+T, V-basis, Clifford+CS, Toffoli+H. Often, choice of implementing gate set is determined by the underlying hardware technology. An important component of any quantum computer compilation process is quantum circuit synthesis and optimization. Depending upon the application the goal is often to optimize the number of a specific gate. For example, optimization of multi-qubit gates like CNOT is more relevant for NISQ era, when their implementations are more error-prone than the single qubit gates. Optimization of non-Clifford gates are more relevant for the fault-tolerant regime because most error-correction schemes implement Clifford gates transversally, allowing the logical operations to be performed precisely and with time proportional to the physical gate time. The non-Clifford gates, however, require large ancilla factories and additional operations like gate teleportation and state distillation \cite{2005_BK, 1999_GC}. These are less accurate procedures which require both additional time and space compared to a single physical gate. Thus the optimization of multi-qubit non-Clifford gate like Toffoli is important for both the NISQ as well as fault-tolerant regime. It is also worth noting that the minimum number of non-Clifford gates required to implement certain unitaries is a quantifier of difficulty in many algorithms that try to classically simulate qunatum computation \cite{2016_BG, 2016_BSS}.

The Solovay-Kitaev algorithm \cite{1997_K, 2006_DN} guarantees that given an $n$-qubit unitary $W$, we can generate a circuit with a "discrete finite" universal gate set, like Clifford+T, V-basis, Clifford+CS, Clifford+Toffoli, such that the unitary $U$ implemented by the circuit is at most a certain distance from $W$. In fact, in quantum computation a set of gates is said to be universal if any quantum operation can be approximated to arbitrary accuracy by a quantum circuit involving only these gates \cite{2010_NC}. A unitary is called exactly implementable by a gate set if there exists a quantum circuit with these gates, that implements it (up to some global phase). Otherwise, it is approximately implementable.  Accordingly, a synthesis algorithm can be (a) exact when $U=e^{i\phi}W$ ($\phi$ is the global phase); or (b) approximate when $d(U,W)\leq\epsilon$ for some $\epsilon>0$. $d(.)$ is a distance metric and for quantum circuit synthesis two popular metrics are the operator norm \cite{2015_R, 2016_RS} and global phase invariant distance \cite{2013_BGS, 2015_BBG, 2015_KMM, 2021_M, 2022_GMM2}. In this paper we use the global phase invariant distance because it ignores the global phase and hence avoids unnecessarily long approximating sequences that achieves a specific global phase. This distance is composable and inequalities relating it to operator norm can be found in \cite{2021_M}. It has also been used to synthesize unitaries in other models like topological quantum computation \cite{2014_KBS, 2021_JS}.

In this paper we consider the Clifford+Toffoli gate set, where the Toffoli gate (TOF) is defined as follows.
\begin{eqnarray}
    \tof=\begin{bmatrix}
        1 & 0 & 0 & 0 & 0 & 0 & 0 & 0 \\
        0 & 1 & 0 & 0 & 0 & 0 & 0 & 0 \\
        0 & 0 & 1 & 0 & 0 & 0 & 0 & 0 \\
        0 & 0 & 0 & 1 & 0 & 0 & 0 & 0 \\
        0 & 0 & 0 & 0 & 1 & 0 & 0 & 0 \\
        0 & 0 & 0 & 0 & 0 & 1 & 0 & 0 \\
        0 & 0 & 0 & 0 & 0 & 0 & 0 & 1 \\
        0 & 0 & 0 & 0 & 0 & 0 & 1 & 0 \\
    \end{bmatrix}
    \label{eqn:tof}
\end{eqnarray}
This gate set is universal for quantum computing, in fact universality has been shown for a more restricted gate set i.e. the Toffoli+H \cite{2003_A,2003_S,2020_AGR}. The Toffoli+H quantum circuits have been studied in the context of diagrammatic calculi \cite{2018_V}, path-sums \cite{2023_V} and quantum logic \cite{2013_DLSG}. Fault-tolerant implementations of the Toffoli gate has been shown in \cite{1997_G, 1998_CPMetal, 2012_FSBetal, 2012_RDNetal, 2013_J, 2013_J2, 2013_PR, 2017_Y, 2018_HH, 2019_BBCR, 2020_BCHK}. Toffoli gates occur frequently while synthesizing circuits for Hamiltonian simulation \cite{2023_MSW, 2023_MWZ} and many other applications where multi-controlled unitaries are used \cite{2011_MWS, 2012_BV, 2023_VAAetal, 2023_NM3}. Algorithms have been developed to synthesize Toffoli+H circuits for exactly implementable unitaries, but with no constraint on optimality of the non-Clifford Toffoli gate \cite{2023_AGLR}. Also, some of these algorithms only work for 3 qubit unitaries and none work for approximately implementable unitaries.    

\subsection{Our contributions}

In the following points we discuss briefly the main results in our paper and with these we also try to shed light on the motivations for studying the mentioned problems related to this gate set. 

We already mentioned the importance of optimizing the Toffoli gate both for the NISQ as well as fault-tolerant era. So one aim is to have some representation or formalism to express the unitaries implementable by this gate set and also to characterize them. With this we hope to compare the different universal gate sets, specifically the size of the set of unitaries, exactly implementable by these gate sets. Such questions are very relevant for resource estimations of various algorithms. Due to various reasons like the fidelity of gates, ease of implementation, etc, different hardware platforms may prefer different universal gate sets. The ability to compare the relative number of different non-Clifford gates required to implement a certain unitary with different universal gate sets, will help in making wiser decisions about which gate set to use \cite{2013_LCJ, 2021_SBLetal}. Also, in literature some popular unitaries have implementations only in a specific gate set. We can then have a more convenient estimate on the non-Clifford gate count simply by replacing each such gate with the new gate set. Another important goal is to use these formalisms in order to design Toffoli-optimal synthesis algorithms for arbitrary multi-qubit unitaries. We develop both provable and heuristic algorithms. The former type of algorithms have a rigorous proof of optimality and the claimed complexity. For the latter type both the claimed optimality and complexity depends on some unproven conjectures. The need of developing heuristic algorithms usually arise due to the lack of efficient provable algorithms. Then, depending on some observations some conjectures are coined, which if/when proven, will prove the stated claims.     

\begin{table}[h]
    \centering
    \footnotesize
    \begin{tabular}{|c|p{1.6cm}|p{2.1cm}|c|c|}
    \hline
        \textbf{Algorithm} & \textbf{Type of algorithm} & \textbf{Type of unitary} & \textbf{Time complexity} & \textbf{Space complexity}   \\
        \hline
        APPROX-TOF-OPT & Provable & Approximately implementable & $O\left(n^{2m_{\epsilon}}2^{(6n-12)m_{\epsilon}} \right)$ & $O\left(n^22^{6n-12}\right)$ \\
        \hline
        Nested MITM & Provable & Exactly implementable & $O\left(n^{2(c-1)\left\lceil\frac{m}{c}\right\rceil}2^{(6n-12)(c-1)\left\lceil\frac{m}{c}\right\rceil} \right)$ & $O\left(n^{2\left\lceil\frac{m}{c}\right\rceil}2^{(6n-12)\left\lceil\frac{m}{c}\right\rceil} \right)$  \\
        \hline
        EXACT-TOF-OPT & Heuristic & Exactly implementable & $\poly(n^22^{6n-12},m)$ & $\poly(n^22^{6n-12},m)$    \\
        \hline
    \end{tabular}
    \caption{Summary of space and time complexity of the different algorithms developed in this paper. Here $m$ and $m_{\epsilon}$ are the Toffoli-count and $\epsilon$-Toffoli-count of the input $n$-qubit unitary ($n > 0$), respectively. $c\geq 2$ is the level of nesting.}
    \label{tab:results}
\end{table}

\begin{enumerate}
    \item We define a generating set $\gen_{Tof}$ such that any unitary exactly implementable by the Clifford+Toffoli gate set can be written as product of unitaries from this gate set and a Clifford (Section \ref{subsec:genTOF}). We show that $|\gen_{Tof}|\in O\left(n^22^{6n-12} \right)$. 
    
    \item With the help of $\gen_{Tof}$ we also derive a lower bound of $\Omega\left(\log_4\left(\frac{1}{\epsilon}\right) \right)$ on the $\epsilon$-Toffoli count of arbitrary multi-qubit unitaries (Section \ref{subsec:bound}). For a unitary $W$, its $\epsilon$-Toffoli count is the minimum number of Toffoli gates required to implement any unitary that is within $\epsilon$ distance of $W$.

    Most of the previous works have proven bounds on the total gate count or 1-qubit non-Clifford gate count like T \cite{2015_KMM, 2016_RS} and V \cite{2002_HRC, 2013_BGS, 2015_BBG, 2015_R}. Most of these are upper bounds and work for 1-qubit gates, specifically z-rotation gates. The bound on T-count derived in \cite{2015_KMM} is empirical, meaning it is interpolated from experimental data and there are no rigorous analysis. In fact, it is not clear how to extend the number-theoretic arguments used in \cite{2013_BGS, 2015_BBG, 2015_R, 2016_RS} for multi-qubit gates. 
    To the best of our knowledge, this is the first paper that derives bound on the count of a multi-qubit non-Clifford gate, while implementing arbitrary multi-qubit unitary. An upper bound on the CS-count of 2-qubit unitaries, has been shown in \cite{2021_GRT}.

    For the special case of exactly implementable unitaries, an upper bound on the Toffoli-count of $n$-qubit unitaries has been derived in \cite{2023_AGLR}. Our lower bound for this special case is quadratically smaller (in $n$) than this upper bound. 

    \item  We then derive the channel representation of unitaries in this set (Section \ref{subsec:chanRep}). This representation was introduced in the context of Clifford+T gate set in \cite{2014_GKMR}, and then extensively studied in \cite{2021_MM, 2022_GMM}. Its main use was in the design of efficient algorithms for exactly implementable unitaries. In this paper we have found another interesting aspect. It can also be used to characterize unitaries that can be exactly implemented by these gate sets. For example, we are able to derive interesting conclusions, like the impossibility of the exact implementation of the T gate. While, we know that the Toffoli gate can be implemented exactly with 7 T gates \cite{2014_GKMR, 2021_MM} or 4 T gates \cite{2013_J} if we allow extra ancillae and classical resources like measurement. Such inferences are not possible with the results derived in earlier works \cite{2020_AGR, 2023_AGLR}, that did not work with this representation. 

    This implies that $\left|\clifft_n^{Tof}\right|<\left|\clifft_n^{T}\right|$, where $\clifft_n^{Tof}$ and $\clifft_n^T$ are the set of unitaries exactly implementable by the Clifford+Toffoli and Clifford+T gate set, respectively. Using similar arguments we conclude that the V-basis gates cannot be implemented exactly by the Clifford+Toffoli gate set. In fact, Toffoli also cannot be implemented exactly by the V-basis gate set.

    It is worth mentioning here that one of the advantages of using channel representation in order to derive such exact implementability results is Fact \ref{fact:chanRepCliff}, which ensures that a unitary is the channel representation of a Clifford operator if and only if it has exactly one $\pm 1$ in each row and column. This implies that to check implementability of a group of the form Clifford+non-Clifford we need to focus on the channel representation of the non-Clifford operators only. This is not possible if we work with the unitary matrices only, as done in \cite{2020_AGR, 2023_AGLR, 2013_GS}.
    
    \item The channel representation also inherits the advantages already established in \cite{2021_MM}, for designing algorithms.  First, channel representation of $\gen_{Tof}$ helps us develop Toffoli-count-optimal synthesis algorithms, using the nested meet-in-the-middle (MITM) algorithm, introduced in \cite{2021_MM}. This algorithm was developed as a recursive generalization of the MITM algorithm of \cite{2014_GKMR}. The space and time complexity of our algorithm (Section \ref{subsubsec:nestMITM})  is  $O\left(n^{2\lceil\frac{m}{c}\rceil}2^{(6n-12)\left\lceil\frac{m}{c}\right\rceil} \right)$ and $O\left(n^{2(c-1)\lceil\frac{m}{c}\rceil}2^{(6n-12)(c-1)\left\lceil\frac{m}{c}\right\rceil} \right)$, respectively. Here $m=\tofcount(U)$ is the Toffoli-count of the input unitary $U$ and $c\geq 2$ is the level of nesting used. 
    Second, we define the smallest denominator exponent ($\sde_2$) with respect to 2 and discuss how it can be used to design exponentially faster heuristic algorithms (Section \ref{subsubsec:heuristic}). The time and space complexity of our algorithm is $\poly(n^22^{6n-12},\tofcount(U))$. Third, we can develop clever data structures, such that no floating  point operations are required at any step of our algorithms for exactly implementable unitaries. Fourth, we design a fast multiplication algorithm to multiply a unitary with the channel representation of a generating set element, which basically boils down to some row additions involving part of the matrices. This is especially useful when we have to perform many such multiplications.
    
    To the best of our knowledge, these are the first Toffoli-count-optimal-synthesis algorithms for exactly implementable unitaries. With the help of these algorithms we prove that CS can be implemented optimally with 3 Toffoli gates. In a separate paper we show that Toffoli can be optimally implemented with 3 CS gates. This implies that $\left|\clifft_n^{Tof}\right|=\left|\clifft_n^{CS}\right|$, where $\clifft_n^{Tof}$ and $\clifft_n^{CS}$ are the set of unitaries exactly implementable by the Clifford+Toffoli and Clifford+CS gate set, respectively.

    \item We design a provable Toffoli-count-optimal synthesis algorithm for approximately implementable unitaries (Section \ref{subsec:algoApprox}), analogous to the one in \cite{2022_GMM2} for Clifford+T. 
    The time complexity of this algorithm is $O\left(n^{2\tofeps(W)}2^{(6n-12)\tofeps(W)} \right)$, where $\tofeps(W)$ is the $\epsilon$-Toffoli-count of the input unitary $W$. The space complexity is $O\left(n^22^{6n-12}\right)$. To the best of our knowledge, this is the first algorithm that synthesizes Toffoli-count-optimal circuits for approximately implementable unitaries. We have summarized the complexities of the various algorithms developed in this paper in Table \ref{tab:results}.
\end{enumerate}

\subsection{Relevant work}

Much work has been done to characterize the unitaries exactly implementable by the various universal gate sets \cite{2020_AGR} like Clifford+T \cite{2013_GS, 2014_GKMR, 2021_MM}, V-basis \cite{2002_HRC}, Clifford+CS \cite{2021_GRT} (2 qubits only), Toffoli+H \cite{2023_AGLR}. Extensive work has been done to synthesize arbitrary multi-qubit unitary \cite{1997_K, 2002_KSVV, 2006_DN, 2011_F, 2020_dBBVA, 2021_MIC}, without optimality constraint on any particular gate.
Some synthesis algorithms have been developed for Toffoli+H \cite{2023_AGLR} gate set, but these work only for exactly implementable unitaries on 3 qubits, without any optimality constraint. In \cite{2021_GRT} a CS-count-optimal synthesis algorithm has been developed for exactly implementable 2-qubit unitaries. More work has been done for the synthesis of V-count-optimal circuits \cite{2013_BGS, 2015_BBG, 2015_R}, but these work only for 1-qubit unitaries, specifically for the z-rotations. 
Extensive work has been done for T-count-optimal synthesis of exactly implementable multi-qubit unitaries \cite{2013_KMM, 2014_GKMR, 2021_MM} and approximately implementable 1-qubit unitaries \cite{2013_KMM2, 2015_KMM, 2016_RS}. The algorithm in \cite{2022_GMM2} is the only one that returns T-count-optimal circuits for arbitrary multi-qubit unitaries. The algebraic procedures introduced in this work circumvents the limitations of the number-theoretic techniques in other optimal-synthesis papers, like \cite{2013_BGS, 2015_BBG, 2015_R, 2015_KMM, 2016_RS}, which exploit the structure of 1-qubit unitaries and hence it is not known how to generalize these for multi-qubit unitaries. In fact, they often guarantee the optimal count for specific unitaries like z-rotations.
Work has also been done for T-depth-optimal synthesis \cite{2013_AMMR, 2022_GMM}. 

So far we only discussed about optimal-synthesis algorithms where the input is an $n$-qubit unitary and output is a circuit. The complexity of these algorithms cannot avoid the exponential dependence on $n$. If additional optimality constraints are imposed then complexity increases even further, and often it also exponentially depends on other parameters, for example the minimum non-Clifford gate count. Heuristics have been designed that reduces the dependence on these other parameters to polynomial \cite{2021_MM}. However, with exponential dependence on $n$, these algorithms become intractable on a personal computer. So significant amount of work has been done to develop re-synthesis algorithms where the input is usually an already synthesized circuit and the output is a circuit with reduce gate count. Due to relaxed constraint and more input information, the complexity of these algorithms are polynomial in input size. However, in literature this reported complexity deos not include the cost of generating the input circuit which itself is exponential in $n$ \cite{2014_AMM}. Usually, optimal-synthesis algorithms are used to synthesize smaller unitaries that are more frequent in quantum algorithms, for example rotations. One can do piece-wise optimal synthesis of a larger circuit and then estimate the non-Clifford gate cost.


\section{Preliminaries}
\label{sec:prelim}

We denote the $n\times n$ identity matrix by $\id_n$ or $\id$ if dimension is clear from the context. We denote the $(i,j)^{th}$ entry of a matrix $M$ by $M[i,j]$. Here we note that we denote the commutator bracket by $[A,B]=AB-BA$, where $A,B$ are operators. Though we use the same kind of brackets, the meaning should be clear from the context. We denote the $i^{th}$ row of $M$ by $M[i,.]$. We denote the set of $n$-qubit unitaries by $\mathcal{U}_n$. The size of an $n$-qubit unitary is $N\times N$ where $N=2^n$. The qubits on which a gate acts is mentioned in the subscript with brackets. For example, $X_{\q{q}}$ implies an X gate acting on qubit $q$. For multi-qubit controlled gate the control qubit(s) is (are) mentioned first, followed by the target qubit(s), separated by semi-colon. For example, $\CNOT_{(i;j)}$ denotes CNOT gate controlled on qubit $i$ and target on qubit $j$ and $\tof_{(i,j;k)}$ denotes the Toffoli gate with controls on qubits $i,j$ and target on qubit $k$. For symmetric multi-qubit gates like CS, where the unitary does not change if we interchange the control and target qubit, we replace the semi-colon with a comma. For convenience, we skip the subscript, when it is clear from the context. We have given detail description about the n-qubit Pauli operators ($\pauli_n$) and the Clifford group ($\cliff_n$) in Appendix \ref{app:clifford}. The Clifford+Toffoli group is generated by $\cliff_n$ and the Toffoli gate (Equation \ref{eqn:tof}).

A unitary $U$ is \textbf{exactly implementable by the Clifford+Toffoli} gate set if there exists an implementing circuit (up to some global phase) consisting of these gates, else it is \textbf{approximately implementable by} this gate set. We denote the set of $n$-qubit exactly implementable (by Clifford+Toffoli) unitaries by $\clifft_n^{Tof}$. Specifically, we say $W$ is $\epsilon$-\textbf{approximately implementable by the Clifford+Toffoli} gate set if there exists an exactly implementable unitary $U$ such that $d(U,W)\leq\epsilon$. The Solovay-Kitaev algorithm \cite{1997_K, 2006_DN} guarantees that any unitary is $\epsilon$-approximately implementable, for arbitrary precision $\epsilon\geq 0$. In this paper we use the following distance measure $d(.,.)$, which has been used in previous works like \cite{2011_F, 2015_KMM, 2022_GMM2} (qubit based computing), \cite{2014_KBS, 2021_JS} (topological quantum computing).
\begin{definition}[\textbf{Global phase invariant distance}]
 Given two unitaries $U,W\in\mathcal{U}_n$, we define the global phase invariant distance between them as follows.
 \begin{eqnarray}
  d(U,W)=\sqrt{1-\frac{\left|\tr\left(U^{\dagger}W\right)\right|}{N}}   \nonumber
 \end{eqnarray}
\end{definition}
Like the operator norm, this distance is composable and inequalities relating these two distance metrics have been derived in \cite{2021_M}.

\subsection{Toffoli-count of circuits and unitaries}
\label{subsec:TofCountDefn}

\subsubsection*{Toffoli-count of circuits}

The \emph{Toffoli-count of a circuit} is the number of Toffoli-gates in it. 

\subsubsection*{Toffoli-count of exactly implementable unitaries}

The \emph{Toffoli-count of an exactly implementable unitary} $U$, denoted by $\tofcount(U)$, is the minimum number of Toffoli-gates required to implement it (up to a global phase) with a Clifford+Toffoli circuit. 

\subsubsection*{$\epsilon$-Toffoli-count of approximately implementable unitaries}

Let $W\in\mathcal{U}_n$ be an approximately implementable unitary. The \emph{$\epsilon$-Toffoli-count} of $W$, denoted by $\tofeps(W)$, is the minimum number of Toffoli gates required to implement any unitary that is within distance $\epsilon$ of $W$. In other words $\tofeps(W) =\tofcount(U)$, where $U\in\clifft_n^{Tof}$, $d(U,W)\leq\epsilon$ and $\tofcount(U)\leq\tofcount(U')$ for any exactly implementable unitary $U'$ within distance $\epsilon$ of $W$. 
We call a Toffoli-count-optimal circuit for any such $U$ as the \emph{$\epsilon$-Toffoli-count-optimal} circuit for $W$. 

It is not hard to see that the above definition is very general and can be applied to any unitary $W\in\mathcal{U}_n$, exactly or approximately implementable. If a unitary is exactly implementable then $\epsilon=0$.

\section{Results and Discussions}
\label{sec:results}

\subsection{Generating set}
\label{subsec:genTOF}

Any unitary $U$, exactly implementable by Clifford+Toffoli can be written as alternative product of a Clifford and a Toffoli gate. We can alternatively express $U$ as product of unitaries that are Clifford conjugation of Toffoli gates. We remember that Toffoli is a self-inverse unitary and so we need not consider its inverse.
\begin{eqnarray}
    U&=&e^{i\phi}C_m(\tof)C_{m-1}(\tof)\ldots C_1(\tof)C_0;\qquad \left[C_i\in\cliff_n,\phi\in [0,2\pi) \right]  \nonumber \\
    &=&e^{i\phi}\left(C_m(\tof)C_m^{\dagger}\right)\left(C_mC_{m-1}(\tof)C_{m-1}^{\dagger}C_{m}^{\dagger}\right)\ldots\left(C_m\ldots C_1(\tof)C_1^{\dagger}\ldots C_m^{\dagger}\right)C_m\ldots C_1C_0   \nonumber  \\
    &=&e^{i\phi}\left(\prod_{j=m}^1C_j'(\tof)C_j'^{\dagger}\right)C_0'\qquad [C_j'\in\cliff_n; j=0,\ldots,m]   \label{eqn:cliffConj}
\end{eqnarray}
Let $\tof_{\q{i,j;k}}$ be a Toffoli gate with controls on qubits $i, j$ and target on qubit $k$. The following Pauli expansion can be obtained by observing that any $n$-qubit unitary $U$ can be expressed in the Pauli basis as $U=\sum_i\alpha_iP_i$, where $P_i\in\pauli_n$ and $\alpha_i=\frac{1}{2^n}\tr(UP_i)$.
\begin{eqnarray}
   \tof_{\q{i,j;k}}&=&\frac{3}{4}\id+\frac{1}{4}\left(\X_{\q{i}}+\Z_{\q{j}}+\Z_{\q{k}}-\X_{\q{i}}\Z_{\q{j}}-\Z_{\q{j}}\Z_{\q{k}}-\Z_{\q{k}}\X_{\q{i}} +\X_{\q{i}}\Z_{\q{j}}\Z_{\q{k}}\right)    
   \label{eqn:tofPauli}
\end{eqnarray}
If $C\in\cliff_n$ is an $n$-qubit Clifford then,
\begin{eqnarray}
    C(\tof_{\q{i,j;k}})C^{\dagger}&=&\frac{3}{4}\id+\frac{1}{4}\left(P_1+P_2+P_3-P_1P_2-P_2P_3-P_3P_1+P_1P_2P_3\right) := G_{P_1,P_2,P_3}
    \label{eqn:Gp1p2p3}
\end{eqnarray}
where $P_1=C\X_{\q{i}}C^{\dagger}$, $P_2=C\Z_{\q{j}}C^{\dagger}$ and $P_3=C\Z_{\q{k}}C^{\dagger}$. Since $[\Z_{\q{j}},\Z_{\q{k}}]=[\X_{\q{i}},\Z_{\q{j}}] =0$ when $i\neq j\neq k$, so using Fact \ref{fact:commute} we have $[P_1,P_2]=[P_2,P_3]=[P_3,P_1]=0$. Also, $P_1\neq P_2$, $P_2\neq P_3$, $P_3\neq P_1$ and $P_1,P_2,P_3\neq\id$. In Appendix \ref{app:genTof} (Lemma \ref{app:lem:GtofProperties}) we have given detailed proof about the following properties of the unitaries $G_{P_1,P_2,P_3}$.

\begin{lemma}
If $P_1,P_2,P_3\in\pauli_n\setminus\{\id\}$ pair-wise commutes, then
\begin{enumerate}
    \item $G_{P_1,P_2,P_3}$ remains same for any permutation of the Paulis;

    \item $G_{P_1,-P_1P_2,P_3}=G_{P_1,P_2,P_3}$;

    \item $G_{P_1,-P_2,P_3}=G_{P_1,P_2,P_3}C$, for some $C\in\cliff_n$;

    \item $G_{P_1,P_2,P_1P_2}=\id$;

    \item $G_{P_1,P_2,P_1P_2P_3}=G_{P_1,P_2,P_3}$.
\end{enumerate}
    \label{lem:GtofProperties}
\end{lemma}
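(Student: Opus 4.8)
The plan is to work directly with the Pauli expansion in Equation \ref{eqn:Gp1p2p3}, treating $G_{P_1,P_2,P_3}$ as a polynomial in the three commuting Paulis $P_1,P_2,P_3$. Since these Paulis pairwise commute and each squares to $\id$, the algebra they generate behaves like a set of commuting involutions, so I expect most properties to follow by straightforward symbolic substitution into the seven-term expression $\tfrac{3}{4}\id+\tfrac14(P_1+P_2+P_3-P_1P_2-P_2P_3-P_3P_1+P_1P_2P_3)$, together with the identity $P_i^2=\id$ and the fact that commuting Paulis satisfy $(P_iP_j)^2=\id$. For part (1), permutation invariance, I would simply observe that the expression is symmetric under any relabeling of the three Paulis: the three single-Pauli terms, the three pairwise-product terms, and the triple-product term are each setwise invariant under permutations, so no computation beyond this bookkeeping is needed.

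For the substitution identities (parts 2, 4, 5), I would plug the modified triple into the defining formula and simplify using $P_i^2=\id$ and commutativity. For part (4), setting $P_3=P_1P_2$ and using $P_1P_2\cdot P_1 = P_2$, $P_1P_2\cdot P_2 = P_1$, and $(P_1P_2)^2=\id$, I expect the single and pairwise terms to collapse so that all the Pauli contributions cancel, leaving exactly $\id$; this is the cleanest sanity check and I would do it first. Part (5) should follow similarly by substituting $P_3 \mapsto P_1P_2P_3$, expanding, and using the relations to show the result equals the original $G_{P_1,P_2,P_3}$; I anticipate this reduces to recognizing that $P_1P_2P_3$ plays the same structural role as $P_3$ after absorbing the $P_1P_2$ factor. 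Part (2), $G_{P_1,-P_1P_2,P_3}=G_{P_1,P_2,P_3}$, is the same kind of substitution with a sign: replacing $P_2 \mapsto -P_1P_2$ should flip signs in just the right pattern to restore the original, and I would verify the signs term-by-term.

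The genuinely different step is part (3), $G_{P_1,-P_2,P_3}=G_{P_1,P_2,P_3}C$ for some Clifford $C$. Here the right-hand side is not claimed to equal $G_{P_1,P_2,P_3}$ on the nose — a correcting Clifford appears — so I cannot finish by pure symmetry. The natural approach is to form the product $G_{P_1,P_2,P_3}^{\dagger}\,G_{P_1,-P_2,P_3}$ (or equivalently $G_{P_1,P_2,P_3}^{-1}G_{P_1,-P_2,P_3}$, recalling from the text that these $G$'s are Clifford conjugates of the self-inverse Toffoli and hence unitary) and show the result is a Clifford operator. I would compute this product as a Pauli polynomial and argue it reduces to something with exactly one $\pm1$ per row and column, which by the criterion the paper invokes (the Clifford channel-representation fact, Fact \ref{fact:chanRepCliff}) certifies it as Clifford; alternatively, one recognizes $G_{P_1,-P_2,P_3}$ as $C(\tof)C'^{\dagger}$ for a suitably sign-adjusted Clifford conjugation and reads off $C$ directly.

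I expect part (3) to be the main obstacle, since it requires identifying the explicit correcting Clifford rather than just verifying an algebraic identity, and one must be careful that the sign change on $P_2$ corresponds to conjugation by a Clifford that negates the appropriate Pauli (e.g. a Pauli operator anticommuting with $P_2$ but commuting with $P_1,P_3$, whose existence should be guaranteed since $P_1,P_2,P_3$ are distinct non-identity commuting Paulis). All the other parts are routine polynomial manipulations in commuting involutions and I would dispatch them quickly, reserving the care for tracking signs and for the Clifford identification in part (3).
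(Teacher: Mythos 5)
Your handling of parts (1), (2), (4) and (5) by direct substitution into the seven-term Pauli expansion is exactly what the paper does (Lemma \ref{app:lem:GtofProperties}), and those parts are fine as routine computations with commuting involutions. For part (3) your primary plan---show that $G_{P_1,P_2,P_3}^{\dagger}G_{P_1,-P_2,P_3}$ is a Clifford---has the same goal as the paper's proof, but you omit the device that makes it tractable: the paper rewrites $\tof_{(i,j;k)}=\exp\bigl(\tfrac{i\pi}{8}(\id-\X_{(i)})(\id-\Z_{(j)})(\id-\Z_{(k)})\bigr)$, so that $G_{P_1,P_2,P_3}=\exp\bigl(\tfrac{i\pi}{8}(\id-P_1-P_2-P_3+P_1P_2+P_2P_3+P_3P_1-P_1P_2P_3)\bigr)$. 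Since all terms in the exponent commute, flipping $P_2\mapsto -P_2$ immediately factors the discrepancy as $G_{P_1,-P_2,P_3}=G_{P_1,P_2,P_3}\exp\bigl(\tfrac{i\pi}{4}P_2\bigr)\exp\bigl(-\tfrac{i\pi}{4}P_1P_2\bigr)\exp\bigl(-\tfrac{i\pi}{4}P_2P_3\bigr)\exp\bigl(\tfrac{i\pi}{4}P_1P_2P_3\bigr)$, and each factor $\exp\bigl(\pm\tfrac{i\pi}{4}P\bigr)=\tfrac{1}{\sqrt{2}}(\id\pm iP)$ is checked to be Clifford by conjugating an arbitrary Pauli $B$ (it returns $B$ or $\pm iPB$, a Pauli in either case). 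Your proposed route---expand the product of two eight-term Pauli polynomials and verify the channel-representation criterion of Fact \ref{fact:chanRepCliff}---would in principle reach the same conclusion, but it is a sixty-four-term computation with no structural reason given for why it collapses; the exponential factorization is precisely what supplies that structure, and without it part (3) remains a plan rather than a proof.

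Your parenthetical fallback for (3), conjugation by a Pauli $R$ anticommuting with $P_2$ and commuting with $P_1,P_3$, has two concrete defects. First, it yields a relation of the wrong shape: $RG_{P_1,P_2,P_3}R^{\dagger}=G_{P_1,-P_2,P_3}$ expresses the left-hand unitary as $CGC^{\dagger}$, not as $GC$; converting this to the claimed form requires $G^{\dagger}RGR\in\cliff_n$, which holds because Toffoli lies in the third level of the Clifford hierarchy (so $G^{\dagger}RG$ is Clifford), a fact you never invoke. Second, the existence claim is false under the lemma's hypotheses: distinctness and pairwise commutation do not rule out $P_3=P_1P_2$ (indeed part (4) is exactly this case), and then any Pauli commuting with $P_1$ and anticommuting with $P_2$ necessarily anticommutes with $P_3=P_1P_2$, so no such $R$ exists. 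The statement of (3) is still true there---one computes $G_{P_1,-P_2,P_1P_2}=\tfrac{1}{2}(\id+P_1-P_2+P_1P_2)$, which is itself a Clifford, consistent with $G_{P_1,P_2,P_1P_2}=\id$---but your $R$-based argument cannot produce it, whereas the paper's exponential factorization covers all cases uniformly.
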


From Equation \ref{eqn:cliffConj} and using the above lemma we can define the following set, which we call a \textbf{generating set} (modulo Clifford), because we can express any exactly implementable unitary (up to a global phase) as product of unitaries from this set and a Clifford. 
\begin{eqnarray}
    \gen_{Tof}&=&\left\{ G_{P_1,P_2,P_3} : P_1,P_2,P_3\in\pauli_n\setminus\{\id\};\quad P_1\neq P_2\neq P_3;\quad [P_1,P_2]=[P_2,P_3]=[P_3,P_1]=0;\right.   \nonumber \\
    && \left. P_3\neq P_1P_2;\quad (P_1,P_2,P_3)\equiv (P_1,\pm P_1P_2,P_3)\equiv (P_1,P_2,P_1P_2P_3)\equiv\pi_3(P_1,P_2,P_3);  \right. \nonumber \\
    &&\left. \pi_3 \text{ is a permutation of } 3\text{ elements}. \right\}
    \label{eqn:genTOF}
\end{eqnarray}
We use $(P_1,P_2,P_3)\equiv (P_1',P_2',P_3')$ to imply that only one pair is included in the set. In Algorithm \ref{alg:genTOF} (Appendix \ref{app:pseudocode}) we have given the pseudocode for constructing this set. Specifically, we have the following result.
\begin{theorem}
Any unitary $U$ that is exactly implementable by the Clifford+Toffoli gate set can be expressed as,
\begin{eqnarray}
    U=e^{i\phi} \left( \prod_{j=m}^1 G_{P_{1j},P_{2j},P_{3j} } \right)C_0;\qquad [C_0\in\cliff_n; \phi\in [0,2\pi) ]; \nonumber
\end{eqnarray}
where $G_{P_{1j},P_{2j},P_{3j}} \in\gen_{Tof}$, a set defined in Equation \ref{eqn:genTOF}.
    \label{thm:decompose}
\end{theorem}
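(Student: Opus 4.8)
The plan is to start from the decomposition already exhibited in Equation \ref{eqn:cliffConj}, which writes any exactly implementable $U$ as $e^{i\phi}\left(\prod_{j=m}^1 C_j'(\tof)C_j'^\dagger\right)C_0'$ with each $C_j'\in\cliff_n$, and to massage each Clifford-conjugate of the Toffoli gate into a bona fide element of $\gen_{Tof}$, sweeping the left-over Clifford factors to the right where they are absorbed into the trailing Clifford. The argument is an induction on $m$, the number of Toffoli factors appearing in that form; the base case $m=0$ gives $U=e^{i\phi}C_0'$, which is already of the claimed shape with an empty product.

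For the inductive step I would peel off the leftmost factor. By Equation \ref{eqn:Gp1p2p3} it equals $G_{\tilde P_1,\tilde P_2,\tilde P_3}$ with $\tilde P_1=C_m'\X_{\q{i}}C_m'^\dagger$, $\tilde P_2=C_m'\Z_{\q{j}}C_m'^\dagger$ and $\tilde P_3=C_m'\Z_{\q{k}}C_m'^\dagger$. First I would check that this triple satisfies the defining constraints of $\gen_{Tof}$: since conjugation by a Clifford is an automorphism of $\pauli_n$, the three conjugates are again non-identity and pairwise distinct, pairwise commutativity is inherited from $[\X_{\q{i}},\Z_{\q{j}}]=[\Z_{\q{j}},\Z_{\q{k}}]=[\Z_{\q{k}},\X_{\q{i}}]=0$ via Fact \ref{fact:commute}, and $\tilde P_3\neq\tilde P_1\tilde P_2$ because $\X_{\q{i}}\Z_{\q{j}}\neq\pm\Z_{\q{k}}$ for distinct $i,j,k$ and the automorphism preserves products. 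Hence only the signs carried by the $\tilde P_a$ prevent $G_{\tilde P_1,\tilde P_2,\tilde P_3}$ from literally being one of the stored representatives.

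Next I would use Lemma \ref{lem:GtofProperties} to normalise the signs and the ordering. The permutation invariance (part 1) and the identities $G_{P_1,-P_1P_2,P_3}=G_{P_1,P_2,P_3}$ and $G_{P_1,P_2,P_1P_2P_3}=G_{P_1,P_2,P_3}$ (parts 2 and 5) are precisely the equivalences folded into the definition \ref{eqn:genTOF}, so they cost nothing; the one genuinely non-trivial move is a lone sign change, which by part 3 obeys $G_{P_1,-P_2,P_3}=G_{P_1,P_2,P_3}C$ for some $C\in\cliff_n$. Flipping the signs of the individual $\tilde P_a$ one coordinate at a time (using part 1 to target any slot) therefore yields $G_{\tilde P_1,\tilde P_2,\tilde P_3}=G\cdot C''$, where $G$ is the canonical representative lying in $\gen_{Tof}$ and $C''\in\cliff_n$ is the accumulated Clifford. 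Substituting gives $U=e^{i\phi}\,G\,\bigl(C''\,V\bigr)$, where $V=\left(\prod_{j=m-1}^1 C_j'(\tof)C_j'^\dagger\right)C_0'$ is the tail.

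Finally I would push $C''$ through $V$. Conjugating each Clifford by $C''$ produces $C''V=\left(\prod_{j=m-1}^1 (C''C_j')(\tof)(C''C_j')^\dagger\right)C''C_0'$, which is again of the form of Equation \ref{eqn:cliffConj} but with only $m-1$ Toffoli factors; by the induction hypothesis it equals $e^{i\phi'}\left(\prod_{j=m-1}^1 G_{P_{1j},P_{2j},P_{3j}}\right)C_0$ with every factor in $\gen_{Tof}$. Absorbing $e^{i\phi'}$ into the global phase and prepending $G$ closes the induction. The main obstacle, and the only place requiring care, is the sign bookkeeping of the previous paragraph: one must verify that every signed Pauli triple produced by Clifford conjugation collapses, under the relations of Lemma \ref{lem:GtofProperties}, onto a stored representative of $\gen_{Tof}$ at the cost of at most a Clifford, and that the constraints $P_a\neq\id$, pairwise commutativity, distinctness, and $P_3\neq P_1P_2$ survive each normalisation step. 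Everything else is the routine straightening that converts the alternating Clifford/Toffoli product into the stated canonical form.
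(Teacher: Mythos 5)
Your proof is correct and follows essentially the same route as the paper: starting from the alternating decomposition in Equation \ref{eqn:cliffConj}, identifying each Clifford-conjugated Toffoli as a $G_{\tilde P_1,\tilde P_2,\tilde P_3}$ via Equation \ref{eqn:Gp1p2p3}, and using the relations of Lemma \ref{lem:GtofProperties} (in particular part 3, which trades a sign flip for a Clifford) to land on stored representatives of $\gen_{Tof}$ while sweeping the accumulated Cliffords into the trailing $C_0$. Your explicit induction and sign bookkeeping merely flesh out the argument the paper leaves implicit, so no further changes are needed.
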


\begin{remark}
Let us consider what happens if we change the position of the target in the Toffoli unitary described in Equation \ref{eqn:tofPauli}.
\begin{eqnarray}
    \tof_{i,k;j}&=&\frac{3}{4}\id+\frac{1}{4}\left(\X_{\q{j}}+\Z_{\q{i}}+\Z_{\q{k}}-\Z_{\q{i}}\X_{\q{j}}-\Z_{\q{k}}\X_{\q{j}}-\Z_{\q{i}}\Z_{\q{k}}+\X_{\q{j}}\Z_{\q{i}}\Z_{\q{k}}\right)    \nonumber \\
   \tof_{j,k;i}&=&\frac{3}{4}\id+\frac{1}{4}\left(\X_{\q{k}}+\Z_{\q{i}}+\Z_{\q{j}}-\Z_{\q{j}}\X_{\q{k}}-\Z_{\q{i}}\X_{\q{k}}-\Z_{\q{i}}\Z_{\q{j}}+\X_{\q{k}}\Z_{\q{i}}\Z_{\q{j}}\right) 
   \label{eqn:tofPauli2}
\end{eqnarray}
So, changing the controls and targets, simply permutes the position of the Pauli operators. We can use the following Clifford conjugation equalities in order to change the position of the operators.
\begin{eqnarray}
\swap (\id\otimes \X)\swap = \X\otimes\id;\qquad \swap (\id\otimes \Z)\swap = \Z\otimes\id;\qquad
 \had\Z\had=\X;\qquad \had\X\had=\Z     \label{eqn:cliffConj1}    
\end{eqnarray}
Thus in the generating set $\gen_{Tof}$ we need not cosider Clifford conjugations of the unitaries in Equation \ref{eqn:tofPauli2}, that are obtained by changing the position of the target of the unitary in Equation \ref{eqn:tofPauli}.  
\label{remark:targetChange}
\end{remark}

\paragraph{Cardinality of $\gen_{Tof}$ : }We prove the following bound on $|\gen_{Tof}|$. The proof has been given in Appendix \ref{app:genTof} (Theorem \ref{app:thm:genTOFsize}).
\begin{theorem}
\begin{eqnarray}
    |\gen_{Tof}|&\leq& \frac{1}{384}[64^n-52^n-17(16^n-4^n)]+\frac{1}{576}[48^n-2\cdot 24^n]-\frac{1}{36}[12^n-2\cdot 6^n]+\frac{1}{24}(13^n-1)    \nonumber \\
    &\in& O\left( n^264^{n-2} \right) \nonumber
\end{eqnarray}
    \label{thm:genTOFsize}
\end{theorem}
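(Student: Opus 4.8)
The plan is to recast $|\gen_{Tof}|$ as a count of equivalence classes of ordered triples $(P_1,P_2,P_3)$ and then execute that count via a character-sum (per-qubit product) evaluation of the commutation constraints, corrected by inclusion--exclusion for the distinctness constraints. First I would record the consequence of the defining conditions in Equation \ref{eqn:genTOF}. Identifying each Pauli (modulo phase) with its symplectic vector in $\fld_2^{2n}$, commutation becomes symplectic orthogonality and the product $P_iP_j$ becomes the vector sum. The conditions $P_1,P_2,P_3\neq\id$, $P_1\neq P_2\neq P_3$, $P_1\neq P_3$, and $P_3\neq P_1P_2$ then rule out exactly every nontrivial $\fld_2$-linear relation among the three vectors, so a valid triple is precisely a linearly independent, pairwise-commuting triple (equivalently, an ordered basis of a totally isotropic $3$-dimensional subspace). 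This reduces the theorem to counting such ordered triples and then dividing by the number of triples representing each element $G_{P_1,P_2,P_3}$.

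The counting step is where the exponential bases originate. I would write the number of ordered triples of pairwise-commuting Paulis as a signed sum over the three commutation parities, $\tfrac{1}{8}\sum_{\sigma}(\cdot)$, where each summand factors as a product over the $n$ qubits of a local Pauli count. Each per-qubit factor evaluates to a small integer (for a single commutation parity one gets the local values $16$ and $4$, etc.), so each summand contributes a term of the form $b^n$ with $b$ one of the bases appearing in the statement. Imposing the distinctness conditions and $P_3\neq P_1P_2$ then requires a further inclusion--exclusion that subtracts the coincidence configurations (where two of the Paulis, or $P_3$ and $P_1P_2$, agree); these coincidence counts again factor per qubit and supply the remaining bases. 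Assembling all contributions yields the exact ordered-triple count, whose leading term is $\tfrac18\cdot 64^n$.

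To pass from ordered triples to elements of $\gen_{Tof}$, I would use Lemma \ref{lem:GtofProperties} to exhibit a guaranteed family of distinct triples mapping to the same $G_{P_1,P_2,P_3}$: the $3!=6$ permutations (part 1), the binary substitution $P_2\leftrightarrow P_1P_2$ (part 2), the binary substitution $P_3\leftrightarrow P_1P_2P_3$ (part 5), and the sign freedom $P_2\leftrightarrow -P_2$ absorbed into a Clifford (part 3). This produces at least $6\cdot 2\cdot 2\cdot 2=48$ representatives per class, so $|\gen_{Tof}|\leq(\text{ordered-triple count})/48$, where the inequality reflects that some orbits are in fact larger. Dividing the assembled count by $48$ gives the stated polynomial, whose leading term $64^n/384$ yields $|\gen_{Tof}|\in O\!\left(n^2 64^{n-2}\right)$.

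The main obstacle is the combinatorial bookkeeping of the second step: correctly evaluating each per-qubit signed local count and, more delicately, carrying out the inclusion--exclusion for the distinctness and $P_3\neq P_1P_2$ conditions simultaneously with the commutation parities, since the coincidence strata interact and must be counted without double-subtraction. A secondary point requiring care is verifying that the $48$ representatives in the final step are genuinely distinct and that the multiplicity bound holds uniformly across all triples, including degenerate ones, so that the division produces a rigorous upper bound rather than merely an asymptotic estimate.
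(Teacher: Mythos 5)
Your first step is sound: over the unsigned Pauli group the conditions in Equation \ref{eqn:genTOF} are exactly linear independence of the symplectic vectors plus pairwise commutation, so $\gen_{Tof}$ is indexed by equivalence classes of ordered bases of totally isotropic rank-$3$ subgroups. But the counting-and-dividing step fails, and with it the theorem's first inequality. The exact ordered-triple count needs no character sums or delicate inclusion--exclusion at all: successive centralizer sizes give
\begin{eqnarray}
T=(4^n-1)\left(\frac{4^n}{2}-2\right)\left(\frac{4^n}{4}-4\right)=\frac{(4^n-1)(4^n-4)(4^n-16)}{8},\nonumber
\end{eqnarray}
a polynomial in $64^n,16^n,4^n$ only. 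Consequently no constant division of $T$ can produce the bases $52^n,48^n,24^n,13^n,12^n,6^n$ in the stated bound, so your assertion that ``dividing the assembled count by $48$ gives the stated polynomial'' is false. Numerically, at $n=3$ you get $T/48=22680/48=472.5$, whereas the theorem's polynomial evaluates to $337.5$; your bound is strictly weaker there and for all large $n$, because the stated expression subtracts a $52^n/384$ term that your expression lacks. What your route does deliver is a valid but weaker upper bound with the same leading term $64^n/384$, which suffices for the $O\left(n^264^{n-2}\right)$ clause but not for the displayed inequality.

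The orbit bookkeeping is also off. The paper's $\pauli_n$ is unsigned, so your factor of $2$ for the ``sign freedom $P_2\leftrightarrow -P_2$'' is vacuous (the triple is unchanged); on the other hand, the listed equivalences --- permutations together with the transvections $P_2\mapsto P_1P_2$ and $P_3\mapsto P_1P_2P_3$ --- generate the full $\GL(3,\fld_2)$ change-of-basis action, whose orbit on ordered bases of $\langle P_1,P_2,P_3\rangle$ has size exactly $168$, not (at least) $48$; indeed $T=22680=135\cdot 168$ at $n=3$, matching the $135$ Lagrangian subspaces. The paper's proof is structurally different from yours: it never counts all triples and divides by an orbit size. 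Instead it stratifies over the support size $m$ of the first Pauli (Theorem \ref{app:thm:genCSsize}), builds the equivalence $(P,Q)\equiv(P,\pm PQ)$ into the count via binomial sums --- which is precisely where $13^n$, $12^n$, $6^n$ arise, and hence $52^n$, $48^n$, $24^n$ after multiplying by the $4^{n-2}-1$ choices of the third generator's coset --- and only divides by $3$ at the end for the placement of the pair. If you want to salvage your cleaner route, you would need to prove the orbit size is $168$ and then verify that $T/168=\frac{(4^n-1)(4^n-4)(4^n-16)}{1344}$ is dominated by the stated polynomial for all $n$; as written, your final step is asserted rather than derived, and with the divisor $48$ it is simply wrong.
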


In Table \ref{tab:genTOF} we have given the cardinality of the generating set obtained in practice, for different number of qubits, using Algorithm \ref{alg:genTOF} (Appendix \ref{app:pseudocode}).

\begin{table}[h]
    \centering
    \begin{tabular}{|c|c|c|}
    \hline
       $n$  & 3 & 4  \\
       \hline
        $|\gen_{tof}|$ & 129 & 7024  \\
        \hline
        Time & 1.087 sec & 1 hr 44 secs     \\
        \hline
    \end{tabular}
    \caption{Cardinality of $\gen_{tof}$ for different number of qubits $n$.}
    \label{tab:genTOF}
\end{table}

\paragraph{Circuit construction for $G_{P_1,P_2,P_3}$ : } We can construct a circuit implementing $G_{P_1,P_2,P_3}$ by deriving the conjugating Clifford and determining the control and target qubits of the Toffoli gate. Given 3 commuting non-identity Paulis $P_1,P_2,P_3$, we use the algorithm in \cite{2020_dBT} in order to derive a conjugating Clifford $C'\in\cliff_n$ such that $C'P_1C'^{\dagger}=P_1'$, $C'P_2C'^{\dagger}=P_2'$ and $C'P_3C'^{\dagger}=P_3'$,, where $P_1'=\bigotimes_{j=1}^nQ_j$, $P_2'=\bigotimes_{j=1}^nR_j$, $P_3'=\bigotimes_{j=1}^nS_j$ and $Q_j,R_j,S_j\in\{\id,\Z\}$. That is, the output of this algorithm is a triple of $\Z$-operators i.e. $n$-qubit Paulis that are tensor product of either $\id$ or $\Z$. Then we use the following conjugation relations,
\begin{eqnarray}
  &&  \swap (\id\otimes\Z)\swap=\Z\otimes\id;\quad \CNOT_{\q{j;k}} (\id_{\q{j}}\otimes\Z_{\q{k}})\CNOT_{\q{j;k}}=\Z_{\q{j}}\otimes\Z_{\q{k}}    \nonumber \\
   && \CNOT_{\q{j;k}} (\Z_{\q{j}}\otimes\id_{\q{k}})\CNOT_{\q{j;k}}=\Z_{\q{j}}\otimes\id_{\q{k}}; \nonumber
\end{eqnarray}
in order to derive Clifford $C''\in\cliff_n$ such that $C''P_1'C''^{\dagger}=\Z_{\q{a}}$, $C''P_2'C''^{\dagger}=\Z_{\q{b}}$ and $C''P_3'C''^{\dagger}=\Z_{\q{c}}$, where $1\leq a,b,c\leq n$ and $a\neq b \neq c$.  Now we use $\had$ gate (Equation \ref{eqn:cliffConj1}) in order to apply $\X$ in any one of the qubits. Suppose we apply $\had_{\q{a}}$. If $C=C'C''\had_{\q{a}}$, then we get $C'C''\had_{\q{a}}\X_{\q{a}}\had_{\q{a}}C''^{\dagger}C'^{\dagger}=P_1$, $C'C''\had_{\q{a}}\Z_{\q{b}}\had_{\q{a}}C''^{\dagger}C'^{\dagger}=P_2$ and $C'C''\had_{\q{a}}\Z_{\q{c}}\had_{\q{a}}C''^{\dagger}C'^{\dagger}=P_3$. Thus a circuit for $G_{P_1,P_2,P_3}$ consists of $\tof_{\q{a,b;c}}$, conjugated by Clifford $C=C'C''\had_{\q{a}}$. From Remark \ref{remark:targetChange} we can infer that interchanging the position of the target and control qubits will change the conjugating Clifford.

\subsection{Bound on Toffoli-count of unitaries }
\label{subsec:bound}

In this section we derive lower bound on the Toffoli-count of arbitrary $n$-qubit unitaries. We know that any unitary can be expanded in the Pauli basis. Let $W$ be an $n$-qubit unitary and
\begin{eqnarray}
    W = \sum_{i=1}^{4^n} \conj{\alpha_i}P_i \qquad [P_i\in\pauli_n]. \label{eqn:Wpauli}
\end{eqnarray}

\paragraph{Not exactly implementable :} First we consider the case when $W$ is strictly approximately implementable.
Let $U = \left(\prod_{j=1}^mG_{P_{1_j},P_{2_j},P_{3_j}} \right) C_0e^{i\phi}$ is a unitary such that $W = UE$, for some $C_0\in\cliff_n$, $\phi\in [0,2\pi)$ and unitary $E$. Also $d(U, W) \leq\epsilon$, implying $|\tr(E)|\geq N(1-\epsilon^2)$, where $N=2^n$ and $m=\tofeps(W)$. Let $\widetilde{U} = \prod_{j=1}^mG_{P_{1_j},P_{2_j},P_{3_j}}$. Then,
\begin{eqnarray}
    \left| \tr\left(\widetilde{U} W^{\dagger}  \right) \right| = \left| \tr(EC_0) \right| \label{eqn:URec}
\end{eqnarray}
and if $C_0=\sum_{P\in\pauli_n} r_PP$, $M=\left| \left\{P:r_P \neq 0  \right\} \right|$, then from Theorem 3.1 of \cite{2022_GMM2} we have the following.
   \begin{eqnarray}
       && \frac{1-\epsilon^2}{\sqrt{M}} - \sqrt{M(2\epsilon^2-\epsilon^4)} \leq | \tr(EC_0P'/N) | \leq \frac{1}{\sqrt{M}} + \sqrt{M(2\epsilon^2-\epsilon^4)} \quad [\text{if }\quad r_{P'}\neq 0] \label{eqn:nonZeroP} \\
       && 0 \leq | \tr(EC_0P'/N) | \leq \sqrt{M(2\epsilon^2-\epsilon^4)} \quad [\text{if }\quad r_{P'} = 0] \label{eqn:zeroP}
   \end{eqnarray}
Here $1\leq M\leq N^2$. We use this in order to lower bound $m=\tofeps(W)$ as a function of $\epsilon$. First we observe that we can expand $\widetilde{U}$ as in the following lemma, the proof of which has been given in Appendix \ref{app:sec:lowBound}.
\begin{lemma}
If $\widetilde{U}= \prod_{j=1}^mG_{P_{1_j},P_{2_j},P_{3_j}}$, where $G_{P_{1_j},P_{2_j,P_{3_j}}} = a\id+bQ_j $, $Q_j = P_{1_j}+P_{2_j}+P_{3_j}-P_{1_j}P_{2_j}-P_{2_j}P_{3_j}-P_{3_j}P_{1_j}+P_{1_j}P_{2_j}P_{3_j}$, $a = \frac{3}{4}$ and $b = \frac{1}{4}$, then
\begin{eqnarray}
    \widetilde{U} &=& a^m\id+a^{m-1}b\left(\sum_{j}Q_j\right)+a^{m-2}b^2\left(\sum_{j_1 < j_2} Q_{j_1}Q_{j_2} \right)+a^{m-3}b^3\left(\sum_{j_1< j_2 < j_3}Q_{j_1}Q_{j_2}Q_{j_3}  \right)+\cdots    \nonumber \\
    &&\cdots+ab^{m-1} \left(\sum_{j_1< j_2< \cdots < j_{m-1}} Q_{j_1}Q_{j_2}\cdots Q_{j_{m-1}}  \right) + b^m \prod_{j=1}^mQ_j.  \nonumber
\end{eqnarray}
\label{lem:widetildeU}
\end{lemma}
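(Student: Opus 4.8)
The plan is to recognize the claimed right-hand side as nothing more than the full distributive expansion of the product $\prod_{j=1}^m(a\id+bQ_j)$, reorganized according to how many of the $m$ factors contribute their $Q$-term rather than their $a\id$-term. Concretely, when one multiplies out $(a\id+bQ_1)(a\id+bQ_2)\cdots(a\id+bQ_m)$ by the distributive law, each resulting monomial arises from selecting, independently for each factor $j$, either the summand $a\id$ or the summand $bQ_j$. Such a selection is encoded by the subset $S=\{j_1<j_2<\cdots<j_k\}\subseteq\{1,\ldots,m\}$ of indices at which $bQ_j$ is chosen, and there is exactly one monomial for each of the $2^m$ subsets.

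The one point that genuinely needs care is non-commutativity: the $Q_j$ need not commute with one another, so the order in which the selected factors appear matters. This is handled by the observation that the unselected contributions are scalar multiples of $\id$ and hence central. Therefore, for a fixed selection $S$, all the scalars $a$ (from the $m-k$ unselected positions) and $b$ (from the $k$ selected positions) pull to the front as $a^{m-k}b^k$, while the operators $Q_{j_1},\ldots,Q_{j_k}$ retain the left-to-right order inherited from the product, i.e. increasing index order. This produces precisely the monomial $a^{m-k}b^k\,Q_{j_1}Q_{j_2}\cdots Q_{j_k}$, and grouping all selections by $k=|S|$ gives
\[
\widetilde{U}=\sum_{k=0}^m a^{m-k}b^k\left(\sum_{j_1<j_2<\cdots<j_k}Q_{j_1}Q_{j_2}\cdots Q_{j_k}\right),
\]
which is exactly the stated identity, with $k=0$ yielding $a^m\id$ and $k=m$ yielding $b^m\prod_{j=1}^mQ_j$. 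Note this is a purely algebraic fact about products of binomials $a\id+bQ_j$; the specific Pauli structure of the $Q_j$ and the values $a=\tfrac34$, $b=\tfrac14$ play no role in the expansion itself.

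To make the argument fully rigorous I would recast it as an induction on $m$. The base case $m=1$ is immediate, being just $G_{P_{1_1},P_{2_1},P_{3_1}}=a\id+bQ_1$. For the inductive step I would write $\prod_{j=1}^{m}(a\id+bQ_j)=\left(\prod_{j=1}^{m-1}(a\id+bQ_j)\right)(a\id+bQ_m)$, apply the induction hypothesis to the first factor, distribute the trailing $(a\id+bQ_m)$, and observe that right-multiplying a $k$-fold product $Q_{j_1}\cdots Q_{j_k}$ (all $j_i\le m-1$) by $bQ_m$ yields a $(k+1)$-fold product still in increasing order, since $m$ is the largest available index. Collecting the resulting terms by subset size then reproduces the formula for $m$.

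I do not anticipate any substantive obstacle: the only thing to watch is that the ordering convention of the product $\prod_{j=1}^m$ matches the increasing-index convention on the right-hand side, and the centrality of $\id$ guarantees this consistency regardless of whether the product is read as $G_1\cdots G_m$ or in the reverse order used in Equation \ref{eqn:cliffConj}.
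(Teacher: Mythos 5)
Your proof is correct and takes essentially the same route as the paper, whose Appendix C proof is precisely the induction on $m$ you outline: apply the hypothesis to $\prod_{j=1}^{m-1}G_{P_{1_j},P_{2_j},P_{3_j}}$, right-multiply by $a\id+bQ_m$, and collect monomials by subset size, with the scalars central and $Q_m$ appended as the largest index. One caveat on your closing aside: for noncommuting $Q_j$ the identity is \emph{not} insensitive to the reading order of the product --- expanding $G_m\cdots G_1$ instead would yield decreasing-order monomials $Q_{j_k}\cdots Q_{j_1}$, which need not equal the stated increasing-order sums --- but this remark plays no role in your actual argument, which consistently uses the increasing-index convention matching the lemma's $\prod_{j=1}^m$.
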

Now from Equations \ref{eqn:Wpauli} and \ref{eqn:URec} we have 
\begin{eqnarray}
 | \tr(EC_0) | = \left|\tr\left( \widetilde{U}W^{\dagger} \right) \right|= \left|\sum_{i=1}^{4^n}\alpha_i\tr\left(\widetilde{U}P_i\right) \right|,   \nonumber 
\end{eqnarray}
and thus using triangle inequality and Equation \ref{eqn:nonZeroP}, we have
\begin{eqnarray}
     \sum_{i=1}^{4^n} |\alpha_i| |\tr(\widetilde{U}P_i P'/N)| &\leq& \frac{1}{\sqrt{M}} +\sqrt{M(2\epsilon^2-\epsilon^4)} \leq \frac{1}{\sqrt{M}} +\sqrt{2M}\epsilon  \label{eqn:ECPub} \\
     \left| |\alpha_k| |\tr(\widetilde{U}P_kP'/N)| - \sum_{i\neq k}|\alpha_i| |\tr(\widetilde{U}P_i P'/N)| \right| &\geq& \frac{1-\epsilon^2}{\sqrt{M}} -\sqrt{M(2\epsilon^2-\epsilon^4)} \geq \frac{1-\epsilon^2}{\sqrt{M}} -\sqrt{2M}\epsilon \nonumber \\
     -\left| |\alpha_k| |\tr(\widetilde{U}P_kP'/N)| - \sum_{i\neq k}|\alpha_i| |\tr(\widetilde{U}P_i P'/N)| \right| &\leq&  \frac{-1+\epsilon^2}{\sqrt{M}} +\sqrt{2M}\epsilon. \label{eqn:ECPlb}
\end{eqnarray}
Without loss of generality let us assume that $|\alpha_k| |\tr(\widetilde{U}P_kP'/N)| - \sum_{i\neq k}|\alpha_i| |\tr(\widetilde{U}P_i P'/N)| > 0$. So, adding the above inequalities we get
\begin{eqnarray}
     \sum _{i\neq k} |\alpha_i| |\tr(\widetilde{U}P_i P' /N )| \leq \frac{\epsilon^2}{2\sqrt{M}} + \sqrt{2M}\epsilon. \nonumber
\end{eqnarray}
Now $\sum _{i\neq k} |\alpha_i| |\tr(\widetilde{U}P_i P' /N )| \geq |\alpha_{\ell}| |\tr(\widetilde{U}P_{\ell} P' /N )|$, for some $P_{\ell}$ such that the absolute value of coefficients on the right is non-zero. For $M\geq 1$ and for non-Clifford $\widetilde{U}$, $EC_0$ there exists at least one such $P_{\ell}$. So,
\begin{eqnarray}
    |\alpha_{\ell}| |\tr(\widetilde{U}P_{\ell} P' /N )| \leq \frac{\epsilon^2}{2\sqrt{M}} + \sqrt{2M}\epsilon. \label{eqn:boundUZP}
\end{eqnarray}
Now we observe that $\tr(\widetilde{U} P_{\ell}P'/N)$ is the coefficient of $P_{\ell}P'$ in the Pauli expansion of $\widetilde{U}$, as given in Lemma \ref{lem:widetildeU}. 
Consider group of r-product terms of the form $a^{m-r}b^r \left( \sum_{j_1<j_2<\cdots<j_r} Q_{j_1}Q_{j_2}\ldots Q_{j_r} \right)$. We remember that each $Q_j$ is a linear combination of 7 Paulis and $Q_j^2=7\id-6Q_j$ (Lemma \ref{lem:trQ_Q2} in Appendix \ref{app:chanRep}), which implies that $Q^m$ is a linear combination of 8 Paulis, for any $m\geq 2$. Thus each summand in a r-product group is a product of at most $r$ Paulis and no two summands have the same combination of Paulis. We remember that we are interested in the minimum $m$ for Toffoli-count. The expression of $\widetilde{U}$ is sum of $m+1$ such groups of $r$-products, where each combination is distinct from the other. Overall, there can be at least a constant number of terms that are $P_{\ell}P'$. Since $|b|<|a|<1$ so minimum value of the coefficient can be $c|b|^m$, where $c$ is a constant. Thus from Equation \ref{eqn:boundUZP},
\begin{eqnarray}
    |\alpha_{\ell}| c|b|^m &\leq& |\alpha_{\ell}| |\tr(\widetilde{U}P_{\ell} P' /N )| \leq \frac{\epsilon^2}{\sqrt{2M}} +\sqrt{2M}\epsilon = \epsilon \left( \frac{\epsilon}{2\sqrt{M}} +\sqrt{2M} \right). \nonumber    \\ 
    |b|^m &\leq& \frac{\epsilon}{c|\alpha_{\ell}|} \left( \frac{\epsilon}{2\sqrt{M}} +\sqrt{2M} \right)    \nonumber \\
    m \log\frac{1}{|b|} &\geq& \log\frac{1}{\epsilon} + \log\left(\frac{ c|\alpha_{\ell}| }{ \frac{\epsilon}{2\sqrt{M}} +\sqrt{2M} } \right) \nonumber \\
    m &\geq& \log_{\frac{1}{|b|}} \left(\frac{1}{\epsilon} \right)- \log_{\frac{1}{|b|}} \left(\frac{ \frac{\epsilon}{2\sqrt{M}} +\sqrt{2M} }{ c|\alpha_{\ell}| } \right). \nonumber
\end{eqnarray}
Since $|b|=\frac{1}{4}$, we have the following asymptotic lower bound, with respect to $\epsilon$.
\begin{theorem}
    Let $W = \sum_i\alpha_iP_i$ is an $n$-qubit unitary, approximately implementable by the Clifford+Toffoli gate set. Here $P_i\in\pauli_n$. Then for some constant $c$,
    \begin{eqnarray}
        \tofeps\left(W\right) &\geq& \log_{4} \left(\frac{1}{\epsilon} \right)- \log_4 \left(\frac{ \frac{\epsilon}{2\sqrt{M}} +\sqrt{2M} }{ c|\alpha_{max}| } \right) \in \Omega \left( \log_4 \left( \frac{1}{\epsilon} \right) \right).  \nonumber
    \end{eqnarray}
    \label{thm:tofBound}
\end{theorem}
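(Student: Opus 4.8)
The plan is to convert the approximation hypothesis $d(U,W)\leq\epsilon$ into a lower bound on how small the Pauli coefficients of the Toffoli part $\widetilde{U}$ are allowed to be, and then read off $m=\tofeps(W)$ from the fact that each generator from $\gen_{Tof}$ can only shrink those coefficients by a factor of $4$. I would start from the setup already in place: write $W=UE$ with $U$ exactly implementable, $m=\tofeps(W)$, and decompose $U=\widetilde{U}C_0e^{i\phi}$ via Theorem \ref{thm:decompose}, so that $|\tr(\widetilde{U}W^{\dagger})|=|\tr(EC_0)|$ as in (\ref{eqn:URec}). The distance bound gives $|\tr(E)|\geq N(1-\epsilon^2)$, and feeding $E$ together with $C_0=\sum_P r_PP$ into Theorem 3.1 of \cite{2022_GMM2} produces the two regimes (\ref{eqn:nonZeroP}) and (\ref{eqn:zeroP}) that control $|\tr(EC_0P'/N)|$ according to whether $r_{P'}\neq 0$.

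Next I would expand $\tr(\widetilde{U}W^{\dagger})=\sum_i\alpha_i\tr(\widetilde{U}P_i)$, insert a fixed Pauli $P'$ with $r_{P'}\neq 0$, and apply the triangle inequality to separate one dominant term $P_k$ from the rest, exactly as in (\ref{eqn:ECPub})--(\ref{eqn:ECPlb}). Adding the upper and lower bounds (after the harmless sign normalization) cancels the dominant contribution and leaves $\sum_{i\neq k}|\alpha_i||\tr(\widetilde{U}P_iP'/N)|\leq\frac{\epsilon^2}{2\sqrt{M}}+\sqrt{2M}\epsilon$. From this I would retain a single surviving summand $|\alpha_{\ell}||\tr(\widetilde{U}P_{\ell}P'/N)|$ with $\alpha_{\ell}\neq 0$ and $\tr(\widetilde{U}P_{\ell}P'/N)\neq 0$, as in (\ref{eqn:boundUZP}); such an $\ell$ exists precisely because $\widetilde{U}$ (equivalently $EC_0$) is non-Clifford, which is the only interesting case.

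The crux, and the step I expect to be the main obstacle, is a clean lower bound $|\tr(\widetilde{U}P_{\ell}P'/N)|\geq c\,|b|^m$ on this single nonzero coefficient. Using the expansion of Lemma \ref{lem:widetildeU}, where $\widetilde{U}=\prod_{j=1}^m(a\id+bQ_j)$ with $a=\tfrac34,\ b=\tfrac14$ and each $Q_j$ an integer combination of $7$ Paulis, I would argue that every Pauli coefficient of $\widetilde{U}$ lies in $\tfrac{1}{4^m}\intg[i]$: pulling out $\tfrac14$ from each factor leaves $\widetilde{U}=\tfrac{1}{4^m}\prod_j(3\id+4\cdot\tfrac14 Q_j)$, an integer combination of Paulis whose products only contribute phases in $\{\pm1,\pm i\}$. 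Hence any nonzero coefficient has modulus at least $4^{-m}=|b|^m$, giving the required bound with $c=1$. This denominator argument is cleaner than the naive count of how many $r$-product summands equal $P_{\ell}P'$, because the delicacy lies entirely in ruling out total cancellation of a relevant coefficient; the denominator viewpoint bypasses that issue, while the non-Clifford assumption and the genuinely new Paulis supplied by the top-order term $b^m\prod_jQ_j$ guarantee some relevant coefficient is nonzero.

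Finally I would substitute $|\tr(\widetilde{U}P_{\ell}P'/N)|\geq c\,|b|^m$ into (\ref{eqn:boundUZP}) to get $|\alpha_{\ell}|\,c\,|b|^m\leq\epsilon\bigl(\tfrac{\epsilon}{2\sqrt{M}}+\sqrt{2M}\bigr)$, take logarithms in base $1/|b|=4$, and solve for $m$. Since the additive correction $\log_4\!\bigl(\tfrac{\epsilon/(2\sqrt{M})+\sqrt{2M}}{c|\alpha_{max}|}\bigr)$ converges to the constant $\log_4\!\bigl(\sqrt{2M}/(c|\alpha_{max}|)\bigr)$ as $\epsilon\to 0$, it is $O(1)$ in $\epsilon$ and therefore irrelevant to the asymptotics (which also makes the precise choice of representative coefficient immaterial for the final claim). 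The dominant term is $\log_4(1/\epsilon)$, yielding $\tofeps(W)\in\Omega(\log_4(1/\epsilon))$.
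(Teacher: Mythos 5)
Your proposal is correct and follows the paper's proof step for step in its overall skeleton: the same factorization $W=UE$ with $U=\widetilde{U}C_0e^{i\phi}$ via Theorem \ref{thm:decompose}, the same invocation of Theorem 3.1 of \cite{2022_GMM2} to get the two regimes for $|\tr(EC_0P'/N)|$, the same triangle-inequality manipulation that cancels the dominant term and leaves $\sum_{i\neq k}|\alpha_i||\tr(\widetilde{U}P_iP'/N)|\leq\frac{\epsilon^2}{2\sqrt{M}}+\sqrt{2M}\epsilon$, and the same final base-$4$ logarithm computation. Where you genuinely diverge is at the crux, the lower bound $|\tr(\widetilde{U}P_{\ell}P'/N)|\geq c|b|^m$ on a nonzero coefficient. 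The paper argues this combinatorially from Lemma \ref{lem:widetildeU}: it groups the expansion into $r$-product terms, asserts that only a constant number of summands can land on $P_{\ell}P'$, and concludes the coefficient is at least $c|b|^m$ since $|b|<|a|<1$. That argument is informal and leaves open exactly the worry you flag: distinct summands carry weights $a^{m-r}b^r$ times phases and could partially or totally cancel, so counting summands does not by itself bound the modulus from below. Your denominator argument closes this cleanly: writing $\widetilde{U}=4^{-m}\prod_{j=1}^m(3\id+Q_j)$, where each $Q_j$ is a $\{\pm 1\}$-combination of Paulis and Pauli products only introduce phases in $\{\pm 1,\pm i\}$, every Pauli coefficient $\tr(\widetilde{U}P)/N$ lies in $4^{-m}\intg[i]$; a nonzero Gaussian integer has modulus at least $1$, so any nonzero coefficient is at least $4^{-m}=|b|^m$, with the explicit constant $c=1$. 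This is very much in the spirit of the paper's own dyadic-ring machinery (Lemma \ref{lem:chanRepRing} and the $\sde_2$ of Definition \ref{defn:sde}), but applied at the level of the unitary rather than its channel representation, and it is the more rigorous route, buying an explicit constant where the paper has an unspecified one. Both arguments share the same residual (and in the paper equally brief) obligations: the existence of $\ell\neq k$ with $\alpha_{\ell}\neq 0$ and $\tr(\widetilde{U}P_{\ell}P'/N)\neq 0$ when $\widetilde{U}$ and $EC_0$ are non-Clifford, and the substitution of $|\alpha_{max}|$ for $|\alpha_{\ell}|$ in the final statement, which only shifts the additive constant and, as you note, is immaterial to the $\Omega\left(\log_4\left(\frac{1}{\epsilon}\right)\right)$ asymptotics.
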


\paragraph{Exactly implementable : } Now we consider the case when $W$ is exactly implementable i.e. $E=\id$ and $\epsilon = 0$. Then, from Fact \ref{fact:cliffCoeff} in Appendix \ref{app:prelim} (\cite{2010_BS}) or plugging in $\epsilon = 0$ in Equation \ref{eqn:nonZeroP} ,we obtain 
\begin{eqnarray}
    \sum_{i=1}^{4^n} |\alpha_i| |\tr(\widetilde{U}P_i P'/N)| &\leq& \frac{1}{\sqrt{M}}. \label{eqn:CPub}    \nonumber 
\end{eqnarray}
Now, $|\alpha_{\ell}| |\tr(\widetilde{U}P_{\ell} P'/N)| \leq \sum_{i=1}^{4^n} |\alpha_i| |\tr(\widetilde{U}P_i P'/N)| $ , where $|\alpha_{\ell}|\neq 0$. As previously argued $ c|b|^m\leq  |\tr(\widetilde{U}P_{\ell} P'/N)|$, for some constant $c$. So,
\begin{eqnarray}
    |\alpha_{\ell}| c|b|^m &\leq&\frac{1}{\sqrt{M}}  \nonumber \\
    \left(\frac{1}{|b|}\right)^m &\geq& |\alpha_{\ell}|c\sqrt{M}    \nonumber \\
    \implies m &\geq& \log_{\frac{1}{|b|}} \left(|\alpha_{max}|c\sqrt{M} \right) 
\end{eqnarray}
Thus we have the following result.
\begin{theorem}
    Let $W = \sum_i\alpha_iP_i$ is an $n$-qubit unitary, exactly implementable by the Clifford+Toffoli gate set. Here $P_i\in\pauli_n$. Then for some constant $c$ and $1\leq M\leq 4^n$,
    \begin{eqnarray}
        \tofcount\left(W\right) &\geq& \log_4 \left(|\alpha_{max}|c\sqrt{M} \right).  \nonumber
    \end{eqnarray}
    \label{thm:tofBoundExact}
\end{theorem}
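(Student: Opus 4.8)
The plan is to run the argument that established Theorem \ref{thm:tofBound}, specialized to the exact case $\epsilon = 0$, where the error unitary is trivial ($E = \id$) and every $\sqrt{2M}\epsilon$ term vanishes. By Theorem \ref{thm:decompose} I would write $W = e^{i\phi}\widetilde{U}C_0$ with $\widetilde{U} = \prod_{j=1}^{m}G_{P_{1_j},P_{2_j},P_{3_j}}$ and $m = \tofcount(W)$, and then invoke the Clifford-coefficient bound: setting $\epsilon = 0$ in Equation \ref{eqn:nonZeroP} (equivalently Fact \ref{fact:cliffCoeff}) gives $\sum_{i=1}^{4^n}|\alpha_i|\,|\tr(\widetilde{U}P_iP'/N)| \leq 1/\sqrt{M}$, where $M = |\{P : r_P \neq 0\}|$ for the expansion $C_0 = \sum_P r_P P$ and $1 \leq M \leq 4^n$.

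First I would keep only a single summand on the left-hand side: for any index $\ell$ with $\alpha_\ell \neq 0$, the nonnegative term $|\alpha_\ell|\,|\tr(\widetilde{U}P_\ell P'/N)|$ is at most the full sum and hence at most $1/\sqrt{M}$; I would choose $\ell$ so that $|\alpha_\ell| = |\alpha_{max}|$. Next I would lower-bound the amplitude $|\tr(\widetilde{U}P_\ell P'/N)|$, which by Lemma \ref{lem:widetildeU} is the coefficient of the Pauli $P_\ell P'$ in the multinomial expansion of $\prod_j(a\id + bQ_j)$ with $a = \tfrac{3}{4}$ and $b = \tfrac{1}{4}$. Since $|b| < |a| < 1$ and $Q_j^2 = 7\id - 6Q_j$ (Lemma \ref{lem:trQ_Q2}), the smallest-magnitude surviving amplitude is of order $|b|^m$, so $|\tr(\widetilde{U}P_\ell P'/N)| \geq c|b|^m$ for a constant $c$.

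Combining the two estimates gives $|\alpha_{max}|\,c\,|b|^m \leq 1/\sqrt{M}$, i.e. $(1/|b|)^m \geq |\alpha_{max}|c\sqrt{M}$; taking $\log_{1/|b|} = \log_4$ of both sides (since $|b| = \tfrac{1}{4}$) yields $\tofcount(W) = m \geq \log_4(|\alpha_{max}|c\sqrt{M})$, as claimed.

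I expect the genuinely delicate step to be the coefficient lower bound $|\tr(\widetilde{U}P_\ell P'/N)| \geq c|b|^m$, exactly as in the $\epsilon > 0$ case. The risk is destructive interference: a fixed Pauli $P_\ell P'$ can receive contributions from several of the $m+1$ degree-graded groups in Lemma \ref{lem:widetildeU}, and one must rule out their cancelling to something asymptotically below $|b|^m$. The resolution is that within each $r$-product group the summands $Q_{j_1}\cdots Q_{j_r}$ carry distinct Pauli content, so any fixed $P_\ell P'$ collects only a constant number of terms whose dominant contribution has magnitude $\Theta(|b|^m)$. One must also certify that an admissible $P_\ell$ with $\alpha_\ell \neq 0$ exists, which holds for non-Clifford $\widetilde{U}$ and $M \geq 1$.
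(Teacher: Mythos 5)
Your proposal matches the paper's own proof essentially step for step: the paper also specializes the $\epsilon>0$ argument by setting $E=\id$ and $\epsilon=0$ in Equation \ref{eqn:nonZeroP} (equivalently invoking Fact \ref{fact:cliffCoeff}), isolates a single summand $|\alpha_{\ell}|\,|\tr(\widetilde{U}P_{\ell}P'/N)| \leq 1/\sqrt{M}$, applies the same coefficient lower bound $c|b|^m \leq |\tr(\widetilde{U}P_{\ell}P'/N)|$ from the Lemma \ref{lem:widetildeU} expansion with $|b|=\tfrac{1}{4}$, and concludes $m \geq \log_4\left(|\alpha_{max}|c\sqrt{M}\right)$. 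Even the delicate point you identify --- that distinct Pauli content within each $r$-product group limits contributions to $P_{\ell}P'$ to a constant number of terms, and that an admissible $P_{\ell}$ with nonzero amplitude must exist --- is handled in the paper at exactly the same level of detail, so your argument is faithful to the original.
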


\paragraph{Upper bound : } An upper bound on the Toffoli-count can be derived from the Solovay-Kitaev theorem \cite{1997_K, 2002_KSVV, 2006_DN}, which gives an upper bound of $O\left(\log^c\left(\frac{1}{\epsilon}\right)\right)$ on the total number of gates, where $3\leq c\leq 4$ is a constant. This trivially implies an upper bound on the non-Clifford gate count, as well. Our lower bound in Theorem \ref{thm:tofBound} is polynomially smaller than this upper bound.

For exactly implementable unitaries over the Toffoli-Hadamard gate set, an upper bound of $O(n^2\log(n)k)$ (k is a non-negative integer) on the total gate count was shown in \cite{2023_AGLR}. This implies an upper bound on the Toffoli-count, though for a restricted gate set (not Clifford+Toffoli). If we plug in $M=4^n$ then our lower bound in Theorem \ref{thm:tofBoundExact} is quadratically smaller than this upper bound.

\subsubsection{Illustration : Toffoli-count of rotation gates}
\label{subsubsec:boundRot}

As examples, we consider the following rotation gates. In Appendix \ref{app:subsec:pauliBasis} we have explicitly derived the Pauli basis expansions that we show below, for each unitary.
\begin{eqnarray}
    R_z(\theta) &=& e^{-i\frac{\theta}{2}} \begin{bmatrix}
    1 & 0 \\
    0 & e^{i\theta}
    \end{bmatrix} = e^{-i\frac{\theta}{2}} \left(\alpha_1\id+\alpha_2\Z  \right)  
    \nonumber \\
   \id\otimes\id\otimes R_z(\theta) &=& e^{-i\frac{\theta}{2}} \left(\alpha_1 \id\id\id+\alpha_2\id\id\Z  \right)
   \label{eqn:RZ}
\end{eqnarray}
Here $\alpha_1 = \frac{1+e^{i\theta}}{2} $, $\alpha_2 = \frac{1-e^{i\theta}}{2} $ and so
\begin{eqnarray}
    |\alpha_1| &=& \left| \frac{1}{2}\left(1+\cos\theta+i\sin\theta  \right)  \right| = \frac{1}{2}\sqrt{((1+\cos\theta)^2+\sin^2\theta)} \nonumber \\
    &=&\frac{1}{2}\sqrt{2(1+\cos\theta)  }=\frac{1}{2}\sqrt{4\cos^2\frac{\theta}{2}} = \cos\frac{\theta}{2}   \nonumber \\
    |\alpha_2| &=& \left| \frac{1}{2}\left(1-\cos\theta+i\sin\theta  \right)  \right| = \frac{1}{2}\sqrt{((1-\cos\theta)^2+\sin^2\theta)} \nonumber \\
    &=&\frac{1}{2}\sqrt{2(1-\cos\theta)  }=\frac{1}{2}\sqrt{4\sin^2\frac{\theta}{2}} = \sin\frac{\theta}{2}.
    \label{eqn:alphaBeta}
\end{eqnarray}
Next, we consider a controlled $R_z(\theta)$ unitary, as defined below.
\begin{eqnarray}
    cR_z(\theta) &=& \begin{bmatrix}
    1 & 0 & 0 & 0 \\
    0 & 1 & 0 & 0 \\
     0 & 0 & e^{-i\theta/2} & 0 \\
    0 & 0 & 0 & e^{i\theta/2}
    \end{bmatrix} = \alpha_1'\id\id+\alpha_2'\Z\id+\alpha_3'\id\Z+\alpha_4'\Z\Z \nonumber \\
   \id\otimes cR_z(\theta) &=& \alpha_1'\id\id\id+\alpha_2'\id\Z\id+\alpha_3'\id\id\Z+\alpha_4'\id\Z\Z
    \label{eqn:cRZ}
\end{eqnarray}
Here $\alpha_1' = \frac{1+\cos\frac{\theta}{2}}{2}$, $\alpha_2'=\frac{1-\cos\frac{\theta}{2}}{2}$, $\alpha_3' = \frac{i}{2}\sin\frac{\theta}{2}$, $\alpha_4' = -\alpha_3'$. So,
\begin{eqnarray}
    |\alpha_1'|&=&\frac{1+\cos\frac{\theta}{2}}{2};\qquad |\alpha_2'|=\frac{1-\cos\frac{\theta}{2}}{2}  \nonumber \\
     |\alpha_3'| = |\alpha_4'| &=& \frac{\sin\frac{\theta}{2}}{2} .    \label{eqn:cRzAlphaBeta}
\end{eqnarray}

Next, we consider the following controlled rotation gate, that is widely used in Quantum Fourier Transform \cite{2010_NC}.
\begin{eqnarray}
    cR_n(\theta) &=& \begin{bmatrix}
    1 & 0 & 0 & 0 \\
    0 & 1 & 0 & 0 \\
    0 & 0 & 1 & 0 \\
    0 & 0 & 0 & e^{i\theta}
    \end{bmatrix} = \alpha_1''\id\id+\alpha_2''\id\Z+\alpha_3''\Z\id+\alpha_4''\Z\Z \nonumber \\
 \id\otimes cR_n(\theta) &=&  \alpha_1''\id\id\id+\alpha_2''\id\id\Z+\alpha_3''\id\Z\id+\alpha_4''\id\Z\Z
   \label{eqn:cRn}
\end{eqnarray}
Here $\alpha_1'' = \frac{3+e^{i\theta}}{4}$, $\alpha_2''=\alpha_3''=\frac{1-e^{i\theta}}{4}$, $\alpha_4'' = -\alpha_2''$. So,
\begin{eqnarray}
    |\alpha_1''|&=&\frac{1}{4}\sqrt{ (3+\cos\theta)^2 +\sin^2\theta } = \frac{1}{4}\sqrt{10+6\cos\theta}=\frac{1}{2}\sqrt{1+3\cos^2\frac{\theta}{2}}  \nonumber \\
    |\alpha_2''| = |\alpha_3''| = |\alpha_4''| &=& \frac{1}{4}\sqrt{(1-\cos\theta)^2+\sin^2\theta  } = \frac{1}{4}\sqrt{2-2\cos\theta}=\frac{1}{2}\sin\frac{\theta}{2}.    \label{eqn:cRnAlphaBeta}
\end{eqnarray}

Next, we consider the 2-qubit Given's rotation unitary, that is used in quantum chemistry \cite{2004_VMS, 2022_AdMQetal}, quantum simulations \cite{2018_KMWetal, 2020_google}, quantum machine learning \cite{2022_KP}, variational quantum algorithms \cite{2021_DAJetal}.
\begin{eqnarray}
    Givens(\theta) &=& \begin{bmatrix}
    1 & 0 & 0 & 0 \\
    0 & \cos(\theta) & -\sin(\theta) & 0 \\
    0 & \sin(\theta) & \cos(\theta) & 0 \\
    0 & 0 & 0 & 1
    \end{bmatrix} = \alpha_1'''\id\id+\alpha_2'''\X\Y+\alpha_3'''\Y\X+\alpha_4'''\Z\Z \nonumber \\
 \id\otimes Givens(\theta) &=&  \alpha_1'''\id\id\id+\alpha_2'''\id\X\Y+\alpha_3'''\id\Y\X+\alpha_4'''\id\Z\Z
   \label{eqn:givens}
\end{eqnarray}
Here $\alpha_1''' = \frac{1+\cos\theta}{2}$, $\alpha_2'''=i\frac{\sin\theta}{2}$, $\alpha_3'''=-i\frac{\sin\theta}{2}$ $\alpha_4''' = \frac{1-\cos\theta}{2}$. So,
\begin{eqnarray}
    |\alpha_1'''|&=&\frac{1+\cos\theta}{2};\qquad |\alpha_4'''| = \frac{1-\cos\theta}{2}  \nonumber \\
    |\alpha_2'''| = |\alpha_3'''| &=& \frac{\sin\theta}{2} .    \label{eqn:givensAlphaBeta}
\end{eqnarray}

We use $\diag(d_1,d_2,\ldots,d_N)$ to denote a $N\times N$ diagonal matrix with $d_1, d_2,\ldots,d_N$ as diagonal entries, and 0 in the remaining places. Now we consider the following rotation gates, controlled on 2 qubits.
\begin{eqnarray}
    ccR_n(\theta) &=&\diag\left(1,1,1,1,1,1,1,e^{i\theta}  \right)  \nonumber \\
    &=&\widehat{\alpha_1}\id\id\id + \widehat{\alpha_2}\Z\id\id+\widehat{\alpha_3}\id\Z\id+\widehat{\alpha_4}\id\id\Z+\widehat{\alpha_5}\Z\Z\id+\widehat{\alpha_6}\id\Z\Z+\widehat{\alpha_7}\Z\id\Z+\widehat{\alpha_8}\Z\Z\Z
    \label{eqn:ccRn}
\end{eqnarray}
Here $\widehat{\alpha_1}=\left(\frac{7+e^{i\theta}}{8}\right)$, $\widehat{\alpha_2}=\widehat{\alpha_3}=\widehat{\alpha_4}=\widehat{\alpha_8}=\left(\frac{1-e^{i\theta}}{8}\right)$, $\widehat{\alpha_5}=\widehat{\alpha_6}=\widehat{\alpha_7}=-\widehat{\alpha_8}$ and so,
\begin{eqnarray}
    |\widehat{\alpha_j}| &=& \frac{1}{8}\sqrt{(7+\cos\theta)^2+\sin^2\theta }=\frac{1}{4}\sqrt{9+7\cos^2\frac{\theta}{2}}   \qquad [j=1]    \nonumber \\
    &=& \frac{1}{8}\sqrt{(1-\cos\theta)^2+\sin^2\theta}=\frac{1}{4}\sin\frac{\theta}{2} \qquad [j\neq 1]. \label{ccRnAlphaBeta}   
\end{eqnarray}

Finally, we consider the following rotation gate.
\begin{eqnarray}
    ccR_z(\theta) &=&\diag\left(1,1,1,1,1,1,e^{-i\theta/2},e^{i\theta/2}  \right)   \nonumber \\
    &=& \widetilde{\alpha_1}\id\id\id + \widetilde{\alpha_2}\Z\id\id+\widetilde{\alpha_3}\id\Z\id+\widetilde{\alpha_4}\id\id\Z+\widetilde{\alpha_5}\Z\Z\id+\widetilde{\alpha_6}\id\Z\Z+\widetilde{\alpha_7}\Z\id\Z+\widetilde{\alpha_8}\Z\Z\Z
    \label{eqn:ccRz}
\end{eqnarray}
Here $\widetilde{\alpha_1}=\frac{3+\cos\frac{\theta}{2} }{4}$, $\widetilde{\alpha_2} = \widetilde{\alpha_3} = \frac{1-\cos\frac{\theta}{2}}{4}$, $\widetilde{\alpha_5}=-\widetilde{\alpha_2}$, $\widetilde{\alpha_6} = \widetilde{\alpha_7}= i\frac{\sin\frac{\theta}{2}}{4}$ and $\widetilde{\alpha_4}=\widetilde{\alpha_8}=-\widetilde{\alpha_6}$, and so,
\begin{eqnarray}
    |\widetilde{\alpha_j}| &=&\frac{3+\cos\frac{\theta}{2} }{4}\qquad [j=1]   \nonumber \\
    &=&\frac{1-\cos\frac{\theta}{2}}{4}\qquad [j=2,3,5] \nonumber \\
    &=&\frac{\sin\frac{\theta}{2}}{4}\qquad [j = 4,6,7,8]. \label{eqn:ccRzAlphaBeta}   \nonumber
\end{eqnarray}
Since each of these are approximately implementable unitaries, so we can obtain a lower bound on their $\epsilon$-Toffoli-count by plugging in the respective values in Theorem \ref{thm:tofBound}.

\subsection{Channel representation}
\label{subsec:chanRep}

An $n$-qubit unitary $U$ can be completely determined by considering its action on a Pauli $P_s\in\pauli_n : P_s\rightarrow UP_sU^{\dagger}$. The set of all such operators (with $P_s\in\pauli_n$) completely determines $U$ up to a global phase. Since $\pauli_n$ is a basis for the space of all Hermitian $2^n\times 2^n$ matrices we can write
\begin{eqnarray}
    UP_sU^{\dagger}&=&\sum_{P_r\in\pauli_n}\chan{U}_{rs}P_r,\qquad\text{where }\quad \chan{U}_{rs}=\frac{1}{2^n}\tr\left(P_rUP_sU^{\dagger}\right).
    \label{eqn:chanRepDefn}
\end{eqnarray}
This defines a $4^n\times 4^n$ matrix $\chan{U}$ with rows and columns indexed by Paulis $P_r,P_s\in\pauli_n$. We refer to $\chan{U}$ as the \textbf{channel representation} of $U$. By Hermitian conjugation each entry of $\chan{U}$ is real. The channel representation respects matrix multiplication and tensor product i.e. $\chan{UW}=\chan{U}\chan{W}$ and $\left(\chan{U\otimes W}\right)=\chan{U}\otimes\chan{W}$. Setting $V=U^{\dagger}$ it follows that $\chan{U^{\dagger}}=\left(\chan{U}\right)^{\dagger}$, and we see that the channel representation $\chan{U}$ is unitary.
\begin{fact}
The channel representation inherits the decomposition from Theorem \ref{thm:decompose} (and in this decomposition there is no global phase factor).
\begin{eqnarray}
    \chan{U}=\left(\prod_{j=m}^1\chan{G_{P_{1j},P_{2j},P_{3j}}}\right)\chan{C_0}\qquad [G_{P_{1j},P_{2j},P_{3j}}\in\gen_{Tof}]    \nonumber
\end{eqnarray}
    \label{fact:decompose}
\end{fact}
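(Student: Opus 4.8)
The plan is to apply the channel representation map $\chan{\cdot}$ directly to the decomposition supplied by Theorem \ref{thm:decompose} and to exploit two properties already recorded in the surrounding text: multiplicativity, $\chan{AB}=\chan{A}\chan{B}$, and the fact that $\chan{\cdot}$ erases global phase. The statement is therefore essentially a corollary rather than an independent result, which is why it is flagged as a Fact.

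First I would recall the decomposition of Theorem \ref{thm:decompose}, namely $U=e^{i\phi}\left(\prod_{j=m}^1 G_{P_{1j},P_{2j},P_{3j}}\right)C_0$ with each $G_{P_{1j},P_{2j},P_{3j}}\in\gen_{Tof}$ and $C_0\in\cliff_n$. Applying $\chan{\cdot}$ to both sides and invoking multiplicativity $m+1$ times gives
\[
\chan{U}=\chan{e^{i\phi}\id}\left(\prod_{j=m}^1\chan{G_{P_{1j},P_{2j},P_{3j}}}\right)\chan{C_0},
\]
where the product is taken in the same order $j=m,\dots,1$, since the homomorphism property preserves the ordering of factors.

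The only substantive point is to show that the global phase contributes trivially, i.e. $\chan{e^{i\phi}\id}=\id$. This follows immediately from the definition in Equation \ref{eqn:chanRepDefn}: for any $P_s\in\pauli_n$ the conjugation $\left(e^{i\phi}U\right)P_s\left(e^{i\phi}U\right)^{\dagger}=e^{i\phi}UP_sU^{\dagger}e^{-i\phi}=UP_sU^{\dagger}$, so every matrix entry $\chan{e^{i\phi}U}_{rs}=\tfrac{1}{2^n}\tr\left(P_r UP_sU^{\dagger}\right)=\chan{U}_{rs}$ is unchanged. Taking $U=\id$ yields $\chan{e^{i\phi}\id}=\id$, and substituting into the displayed equation produces the claimed identity with no phase factor.

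There is no real obstacle here; the result reduces to Theorem \ref{thm:decompose} combined with the homomorphism property of $\chan{\cdot}$. The only items worth double-checking are the bookkeeping of the product order and the well-definedness of $\chan{\cdot}$ as a map on $\mathcal{U}_n$ modulo global phase, both of which become transparent once the global-phase invariance above is established.
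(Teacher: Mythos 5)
Your proof is correct and follows the same route the paper implicitly relies on: the fact is stated without a separate proof precisely because it is an immediate consequence of the multiplicativity $\chan{UW}=\chan{U}\chan{W}$ and the global-phase invariance of the channel representation, both recorded just before the statement. Your explicit verification that $\chan{e^{i\phi}\id}=\id$ via conjugation in Equation \ref{eqn:chanRepDefn} correctly fills in the one detail the paper leaves tacit.
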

A decomposition in which $m=\tofcount(U)$ is called a \textbf{Toffoli-count-optimal decomposition}. The channel representation of the unitaries in the generating set $\gen_{Tof}$ has some important properties, which we summarize in the following theorem. These facilitate the design of proper data structure and efficient algorithms for matrix operations like multiplication, inverse, involving only integer arithmetic. These properties also give the intuition to develop clever heuristics for Toffoli-count-optimal synthesis algorithms. Further, we can draw inferences about the exact implementability of certain unitaries. 

\begin{theorem}
Let $G_{P_1,P_2,P_3}\in\gen_{Tof}$, where $P_1,P_2,P_3\in\pauli_n\setminus\{\id\}$. Then its channel representation $\chan{G_{P_1,P_2,P_3}}$ has the following properties.
\begin{enumerate}
    \item The diagonal entries are $1$ or $\frac{1}{2}$.

    \item If a diagonal entry is 1 then all other entries in the corresponding row and column is 0.

    \item If a diagonal entry is $\frac{1}{2}$ then there exists three entries in the corresponding row and column that are equal to $\pm\frac{1}{2}$, rest is 0.

    \item Exactly $2^{2n-3}$ i.e. $\frac{1}{8}^{th}$ of the diagonal elements are $1$, while the remaining are $\frac{1}{2}$.
\end{enumerate}
    \label{thm:chanRep}
\end{theorem}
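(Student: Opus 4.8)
The plan is to exploit a clean closed form for the generating-set elements. Because $P_1,P_2,P_3$ pairwise commute, are Hermitian, and square to $\id$, the operator in Equation \ref{eqn:Gp1p2p3} is a reflection
\[
 G_{P_1,P_2,P_3} = \id - 2\Pi_{(-,-,-)}, \qquad \Pi_{(-,-,-)} = \tfrac12(\id-P_1)\,\tfrac12(\id-P_2)\,\tfrac12(\id-P_3),
\]
where $\Pi_{(-,-,-)}$ is the orthogonal projector onto the common $(-1)$-eigenspace of $P_1,P_2,P_3$; expanding the product recovers Equation \ref{eqn:Gp1p2p3} exactly. In particular $G_{P_1,P_2,P_3}$ is Hermitian, so $\chan{G}_{rs}=\frac{1}{2^n}\tr(P_rGP_sG)$ is symmetric in $r,s$ by cyclicity of the trace. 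This symmetry lets me reduce every claim about a column of $\chan{G}$ to the matching claim about the corresponding row, so I only analyse columns.

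Next I would simultaneously diagonalize $P_1,P_2,P_3$, decomposing the space as $\mathcal H=\bigoplus_{\epsilon\in\{\pm1\}^3}\mathcal H_\epsilon$ into joint eigenspaces, each of dimension $2^{n-3}$ since the three Paulis are independent. Writing $\Pi_\epsilon$ for the projector onto $\mathcal H_\epsilon$ and $G=\sum_\epsilon g_\epsilon\Pi_\epsilon$, a one-line evaluation of $\tfrac34+\tfrac14(\epsilon_1+\epsilon_2+\epsilon_3-\epsilon_1\epsilon_2-\epsilon_2\epsilon_3-\epsilon_3\epsilon_1+\epsilon_1\epsilon_2\epsilon_3)$ shows $g_\epsilon=+1$ for all $\epsilon$ except $g_{(-,-,-)}=-1$. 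The key computational lemma is that a Pauli $P_s$, with commutation pattern $c(s)=(c_1,c_2,c_3)$ (where $c_j=+1$ iff $P_s$ commutes with $P_j$), sends $\mathcal H_{\epsilon}$ to $\mathcal H_{c(s)\epsilon}$ (componentwise product of sign vectors). Pushing the spectral form through the conjugation and using $\Pi_{c(s)\epsilon}P_s\Pi_\epsilon=P_s\Pi_\epsilon$ gives the compact identity
\[
 G P_s G = P_s - 2P_s\big(\Pi_{(-,-,-)}+\Pi_{\delta}\big),\qquad \delta=c(s)\cdot(-,-,-).
\]

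From here the argument splits on $c(s)$. If $c(s)=(+,+,+)$, i.e. $P_s$ centralizes $P_1,P_2,P_3$, then $P_s$ commutes with $G$ and $GP_sG=P_s$, so column $s$ is $e_s$: the diagonal entry is $1$ and the rest of the row and column vanish (property 2). Counting such $P_s$ is a symplectic linear-algebra step: commuting with each $P_j$ is one $\mathbb F_2$-linear condition, and the three conditions are independent precisely because the defining constraints of $\gen_{Tof}$ (the $P_j$ distinct, none the product of the other two) force their symplectic vectors to be independent; hence there are $2^{2n-3}$ such Paulis, yielding properties 1 and 4 and the fraction $1/8$. If $c(s)\neq(+,+,+)$, I expand $\Pi_{(-,-,-)}+\Pi_{\delta}$ in the basis $\{P_S\}_{S\subseteq\{1,2,3\}}$; the coefficient of $P_S$ is $\tfrac14(-1)^{|S|}\big(1+\chi_S(s)\big)$ with $\chi_S(s)=\prod_{j\in S}c_j$, which survives only when $\chi_S(s)=+1$. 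Substituting back gives $GP_sG=\tfrac12 P_s-\tfrac12\sum_{\emptyset\neq S:\,\chi_S(s)=1}(-1)^{|S|}P_sP_S$, a sum of exactly four distinct Paulis each with coefficient $\pm\tfrac12$, which is properties 1 and 3.

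The main obstacle, and really the only content beyond bookkeeping, is the two counting facts fixing the number of surviving terms. In the second case I must show exactly three nonempty $S$ satisfy $\chi_S(s)=1$: writing $A=\{j:c_j=-1\}\neq\emptyset$, the condition is that $|S\cap A|$ be even, and $2^{a-1}\cdot 2^{3-a}=4$ subsets satisfy it for every $a=|A|\in\{1,2,3\}$, hence three nonempty ones. I also need the four Paulis $P_sP_S$ to be pairwise distinct, which again reduces to independence of $P_1,P_2,P_3$ (the eight products $P_S$ are distinct, so $P_sP_S=P_sP_{S'}$ forces $S=S'$, and $P_S\neq\id$ forces $P_sP_S\neq P_s$); the first case uses the same independence to get centralizer codimension $3$. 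All four properties follow. An equivalent, shorter route, if preferred, is to note that every $G_{P_1,P_2,P_3}$ is Clifford-conjugate to a fixed Toffoli (the circuit-construction paragraph), that $\chan{C}$ is a signed permutation matrix by Fact \ref{fact:chanRepCliff}, and that conjugation by a signed permutation preserves all four properties, so it suffices to verify them once for the $64\times64$ channel matrix of a single Toffoli gate.
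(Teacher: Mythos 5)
Your proof is correct, but it takes a genuinely different route from the paper's. The paper proves Theorem \ref{thm:chanRep} (Appendix \ref{app:chanRep}) by brute force: it writes $G_{P_1,P_2,P_3}=\frac{3}{4}\id+\frac{1}{4}Q$, expands each entry $\chan{G_{P_1,P_2,P_3}}[P_r,P_s]$ into four trace terms $a+b+c+d$, and runs an exhaustive sign-chasing case analysis (Cases i--iv for the diagonal, Cases Ia--Ig and IIa--IIg for off-diagonal entries, according to whether $P_r=\pm(i)^xP_sP'$ for each of the seven products $P'$ and the commutation pattern of $P_s$). You instead observe the structural fact the paper never states: $G_{P_1,P_2,P_3}=\id-2\Pi_{(-,-,-)}$ is a reflection about the joint $(-1)$-eigenspace, so $GP_sG$ can be computed spectrally, and the surviving Pauli positions $P_sP_S$ with their count (exactly three nonempty $S\subseteq\{1,2,3\}$ with $\chi_S(s)=+1$) fall out of a short $\mathbb{F}_2$ character-sum argument rather than fourteen cases. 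Your route is shorter and more explanatory -- it makes transparent \emph{why} the nonzero off-diagonal entries sit at the positions $P_sP_S$, why the surviving products commute with $P_s$ (hence give Hermitian Paulis and real $\pm\frac{1}{2}$ entries), and why the fraction of $1$'s is $2^{-3}$ (centralizer of three independent Paulis has symplectic codimension $3$, which is also the counting the paper uses, stated multiplicatively). The paper's route is more elementary and self-contained, needing nothing beyond trace identities and Lemmas \ref{lem:trQ_Q2}--\ref{lem:tr_PrQ2}. Your closing alternative (conjugate to a fixed Toffoli and verify one $64\times 64$ matrix, since signed-permutation conjugation by $\chan{C}$ preserves all four properties) is also valid and is arguably the cheapest correct argument of all.

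Two small slips, neither fatal. First, your displayed identity $GP_sG=P_s-2P_s\bigl(\Pi_{(-,-,-)}+\Pi_{\delta}\bigr)$ omits the cross term $4\Pi_{(-,-,-)}P_s\Pi_{(-,-,-)}$, which vanishes only when $c(s)\neq(+,+,+)$; since you treat the fully commuting case separately via $[P_s,G]=0$, nothing breaks, but the identity should be stated with that restriction. Second, the intermediate coefficient of $P_S$ in $\Pi_{(-,-,-)}+\Pi_{\delta}$ should be $\frac{1}{8}(-1)^{|S|}\bigl(1+\chi_S(s)\bigr)$, not $\frac{1}{4}(-1)^{|S|}\bigl(1+\chi_S(s)\bigr)$; your final formula $GP_sG=\frac{1}{2}P_s-\frac{1}{2}\sum_{\emptyset\neq S:\,\chi_S(s)=1}(-1)^{|S|}P_sP_S$ is nevertheless the correct one, so this is a typo rather than an error. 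Finally, note that your independence hypothesis on $P_1,P_2,P_3$ requires $P_3\neq\pm P_1P_2$ (and permutations thereof), whereas Equation \ref{eqn:genTOF} literally excludes only $P_3\neq P_1P_2$; the paper's Algorithm \ref{alg:genTOF} does enforce the $\pm$ version, and the paper's own proof implicitly assumes it too (e.g.\ $\tr(Q)=0$ needs all seven products non-identity), so you are on equal footing there.
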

The proof has been given in Appendix \ref{app:chanRep}. Since Cliffords map Paulis to Paulis, up to a possible phase factor of -1, so we have the following.
\begin{fact}[\cite{2014_GKMR}]
    Let $\chan{\cliff_n}=\{\chan{C}:C\in\cliff_n\}$. A unitary $Q\in\chan{\cliff_n}$ if and only if it has one non-zero entry in each row and column, equal to $\pm 1$.
    \label{fact:chanRepCliff}
\end{fact}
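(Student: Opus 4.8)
The plan is to prove both directions of the equivalence by translating the signed-permutation condition on the matrix $\chan{Q}$ into the statement that the underlying operator normalizes the Pauli group, which is exactly the defining property of the Clifford group. The whole argument rests on the definition $\chan{U}_{rs}=\frac{1}{2^n}\tr\left(P_r U P_s U^{\dagger}\right)$ together with the trace-orthogonality relation $\tr(P_r P_{r'})=2^n\delta_{rr'}$.

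First, for the forward direction, I would start from a Clifford $C\in\cliff_n$. Since $\cliff_n$ is the normalizer of $\pauli_n$, conjugation sends each basis Pauli to a signed Pauli, $C P_s C^{\dagger}=\sigma_s P_{r(s)}$ with $\sigma_s\in\{\pm 1\}$ (the sign is real, not $\pm i$, because both sides are Hermitian) and a unique index $r(s)$. Substituting this into the defining formula and using trace-orthogonality, every entry collapses to $\chan{C}_{rs}=\sigma_s\,\delta_{r,r(s)}$, so each column of $\chan{C}$ carries exactly one nonzero entry, equal to $\pm 1$. Because $\chan{C}$ inherits unitarity from $C$ and has real entries, it is real orthogonal; orthonormality of its columns forces the map $s\mapsto r(s)$ to be injective, hence there is also exactly one nonzero entry per row. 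Thus $\chan{C}$ is a signed permutation matrix.

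For the converse, I would take a channel representation $Q=\chan{U}$ satisfying the one-$\pm 1$-per-row-and-column condition. Reading column $s$ off the relation $U P_s U^{\dagger}=\sum_r\chan{U}_{rs}P_r$, the single nonzero entry gives $U P_s U^{\dagger}=\pm P_{r(s)}$ for every $P_s\in\pauli_n$. Therefore conjugation by $U$ maps $\pauli_n$ into itself, i.e. $U$ lies in the normalizer of the Pauli group; by the definition of $\cliff_n$ as this normalizer (modulo global phase), $U$ is Clifford and so $Q=\chan{U}\in\chan{\cliff_n}$.

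The argument is short, and the only points needing care are bookkeeping rather than genuine obstacles. One is the global-phase quotient: the channel representation is phase-invariant, so ``$U$ is Clifford'' must be read up to phase, which is precisely the equivalence class that $\chan{U}$ tracks. The other is the small linear-algebra step upgrading ``one nonzero entry per column'' to a genuine permutation via orthonormality of the columns of $\chan{C}$; in the converse this step is not even required, since the column data alone already shows $U$ normalizes each $P_s$. The substance of the proof is entirely the normalizer characterization of the Clifford group.
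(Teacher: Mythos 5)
Your proof is correct; the paper states this fact without proof, citing \cite{2014_GKMR}, and your normalizer-based argument (Clifford conjugation sends Hermitian Paulis to $\pm 1$-signed Paulis, collapsing the trace formula to a signed permutation; conversely, one $\pm\frac{1}{1}\cdot 1$ entry per column of $\chan{U}$ forces $UP_sU^{\dagger}=\pm P_{r(s)}$, so $U$ normalizes $\pauli_n$ and is Clifford up to phase) is essentially the proof in that reference. One point worth making explicit: in the converse you correctly read the hypothesis as $Q=\chan{U}$ for some unitary $U$, which is the intended reading and the way the fact is applied in this paper, since a bare $4^n\times 4^n$ signed permutation matrix need not be a channel representation at all — for instance, every channel representation has the column indexed by $\id$ equal to the standard basis vector $e_{\id}$, because $U\id U^{\dagger}=\id$.
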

The above fact, along with Theorem \ref{thm:chanRep} and Fact \ref{fact:decompose}, implies the following.
\begin{lemma}
If $U$ is a unitary exactly implementable by the Clifford+Toffoli gate set, then each entry of $\chan{U}$ belongs to the ring $\intg\left[\frac{1}{2}\right]=\{\frac{a}{2^k} : a\in 2\intg+1, k\in\nat\}$.
\label{lem:chanRepRing}
\end{lemma}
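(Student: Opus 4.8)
The plan is to read the lemma directly off the decomposition of $\chan{U}$ provided by Fact \ref{fact:decompose}, combined with the entrywise descriptions of the factors given in Theorem \ref{thm:chanRep} and Fact \ref{fact:chanRepCliff}. The single structural ingredient that makes everything work is that $\intg\left[\frac{1}{2}\right]$ — the dyadic rationals, with $0$ included — is a commutative ring, hence closed under the additions, subtractions, and multiplications that appear when one multiplies matrices. So the whole argument reduces to: each factor has entries in $\intg\left[\frac{1}{2}\right]$, and ring closure propagates this through the product.

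First I would fix the decomposition. Since $U$ is exactly implementable, Fact \ref{fact:decompose} gives $\chan{U} = \left(\prod_{j=m}^1 \chan{G_{P_{1j},P_{2j},P_{3j}}}\right)\chan{C_0}$ with each $G_{P_{1j},P_{2j},P_{3j}} \in \gen_{Tof}$ and $C_0 \in \cliff_n$. I then check each factor. By Theorem \ref{thm:chanRep}, every entry of each $\chan{G_{P_{1j},P_{2j},P_{3j}}}$ lies in $\{0, \pm\frac{1}{2}, 1\}$: the diagonal entries are $1$ or $\frac{1}{2}$, and the only off-diagonal nonzeros are the $\pm\frac{1}{2}$'s accompanying a $\frac{1}{2}$ on the diagonal. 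By Fact \ref{fact:chanRepCliff}, every entry of $\chan{C_0}$ lies in $\{0, \pm 1\}$. All of these values are dyadic rationals, so each matrix appearing in the product lies in the matrix ring $M_{4^n}(\intg\left[\frac{1}{2}\right])$.

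The conclusion is then by closure. An entry of a product of matrices over $\intg\left[\frac{1}{2}\right]$ is a finite sum of products of entries of the factors; since $\intg\left[\frac{1}{2}\right]$ is closed under these ring operations, the entries of any such product again lie in $\intg\left[\frac{1}{2}\right]$. A short induction on the number $m+1$ of factors (or simply the remark that $M_{4^n}(\intg\left[\frac{1}{2}\right])$ is itself a ring) gives $\chan{U} \in M_{4^n}(\intg\left[\frac{1}{2}\right])$, which is the claim.

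There is essentially no hard step here: the content of the lemma is that the only non-integer entries ever introduced are halves, so no prime other than $2$ can appear in any denominator. The only point needing a word of care is the reduced form in the stated definition $\{\frac{a}{2^k} : a \in 2\intg+1,\, k\in\nat\}$: this enumerates exactly the nonzero dyadic rationals, since every nonzero element of $\intg\left[\frac{1}{2}\right]$ has a unique such representation with odd numerator $a$, while the value $0$ is of course also permitted. With that reading the closure argument above is complete, and I expect the entire proof to run to only a couple of lines once Theorem \ref{thm:chanRep} and Fact \ref{fact:chanRepCliff} are invoked.
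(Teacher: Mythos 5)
Your proposal is correct and follows exactly the paper's route: the paper derives this lemma directly from Fact \ref{fact:decompose} (the decomposition $\chan{U}=\left(\prod_{j=m}^1\chan{G_{P_{1j},P_{2j},P_{3j}}}\right)\chan{C_0}$), Theorem \ref{thm:chanRep} (entries of each $\chan{G_{P_1,P_2,P_3}}$ lie in $\{0,\pm\tfrac{1}{2},1\}$), and Fact \ref{fact:chanRepCliff} (entries of $\chan{C_0}$ lie in $\{0,\pm 1\}$), with the ring-closure step left implicit, which you merely spell out. Your side remark about the set notation is harmless (strictly, the displayed set also omits even integers, not just $0$, but since channel representations are unitary their entries have magnitude at most $1$, so this never matters).
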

Since the channel representation of the non-Clifford T gate do not belong to this ring \cite{2021_MM}, so  it cannot be exactly implemented by Clifford+Toffoli. On the other hand, we know that Toffoli can be implemented exactly with at most 7 T gates \cite{2021_MM} or with 4 T gates \cite{2013_J} if extra ancilla and classical resources like measurement is used. Thus every unitary exactly implementable by the Clifford+Toffoli gate set can be exactly implemented by the Clifford+T gate set. But there exists unitaries exactly implementable by the Clifford+T gate set that cannot be exactly implemented by the Clifford+Toffoli gate set. So the following theorem follows.

\begin{theorem}
Let $\clifft_n^T$ and $\clifft_n^{Tof}$ be the set of unitaries exactly implementable by the Clifford+T and Clifford+Toffoli gate set, respectively. Then
$
    |\clifft_n^{Tof}|< |\clifft_n^T|. 
$
    \label{thm:notImplement}
\end{theorem}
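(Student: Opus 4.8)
The plan is to prove the two set inclusions that together yield the strict inequality: first $\clifft_n^{Tof}\subseteq\clifft_n^T$, and then that this containment is proper, so that $\clifft_n^{Tof}\subsetneq\clifft_n^T$ (which is what the notation $|\clifft_n^{Tof}|<|\clifft_n^T|$ records).

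For the containment, I would take an arbitrary $U\in\clifft_n^{Tof}$ and use its defining decomposition as an alternating product of Cliffords and Toffoli gates (Equation \ref{eqn:cliffConj}). Every Clifford factor already lies in the Clifford+T gate set, so it suffices to rewrite each Toffoli factor over Clifford+T. I would invoke the standard exact decomposition of $\tof$ into a Clifford+T circuit using at most $7$ T gates \cite{2021_MM, 2014_GKMR}; substituting this circuit for every Toffoli factor yields a Clifford+T circuit implementing $U$ up to the same global phase. Hence $U\in\clifft_n^T$, establishing $\clifft_n^{Tof}\subseteq\clifft_n^T$.

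For strictness I would exhibit an explicit witness in $\clifft_n^T\setminus\clifft_n^{Tof}$, the simplest being the single T gate acting on one qubit and the identity elsewhere. It is trivially in $\clifft_n^T$. To see it is not in $\clifft_n^{Tof}$, I would argue by contradiction using Lemma \ref{lem:chanRepRing}: if $\T$ were exactly implementable by Clifford+Toffoli, then every entry of its channel representation $\chan{\T}$ would lie in $\intg[\frac{1}{2}]$. However, the action of $\T$ on $\X$ produces the combination $\tfrac{1}{\sqrt 2}\X+\tfrac{1}{\sqrt 2}\Y$, so $\chan{\T}$ has entries equal to $\pm\tfrac{1}{\sqrt 2}$, which are irrational and therefore do not belong to $\intg[\frac{1}{2}]$ \cite{2021_MM}. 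This contradiction shows $\T\notin\clifft_n^{Tof}$, so the inclusion is strict.

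The argument is essentially immediate once the channel-representation ring characterization (Lemma \ref{lem:chanRepRing}) and the Toffoli-over-T decomposition are in hand; there is no hard combinatorial core, so I do not expect a genuine obstacle. The only points requiring care are that exact implementability is defined up to a global phase — handled automatically because the channel representation is phase invariant and the Toffoli substitution preserves the phase — and the reading of the cardinality symbol: both $\clifft_n^{Tof}$ and $\clifft_n^T$ are countably infinite, so $|\clifft_n^{Tof}|<|\clifft_n^T|$ should be interpreted as the proper set inclusion just established rather than as a comparison of cardinal numbers.
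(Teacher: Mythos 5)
Your proof is correct and follows essentially the same route as the paper: containment via the ancilla-free $7$-T-gate exact decomposition of the Toffoli, and strictness via Lemma \ref{lem:chanRepRing} together with the fact that $\chan{\T}$ has entries $\pm\frac{1}{\sqrt{2}}\notin\intg\left[\frac{1}{2}\right]$ (since $\T\X\T^{\dagger}=\frac{1}{\sqrt{2}}(\X+\Y)$). Your closing remark that the notation $|\clifft_n^{Tof}|<|\clifft_n^T|$ must be read as proper set inclusion rather than a comparison of cardinal numbers is a correct refinement the paper leaves implicit.
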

In a separate paper we prove that the channel representation of unitaries exactly implementable by the V-basis gate set, has entries in the ring $\intg\left[\frac{1}{\sqrt{5}}\right]$. Thus these gates are not exactly implementable by the Clifford+Toffoli gate set. In fact, Toffoli also cannot be exactly implemented by the V-basis gate set. Using Theorem \ref{thm:tofBound} we can say that we require $\Omega\left(\log_4\left(\frac{1}{\epsilon}\right)\right)$ Toffoli gates in order to implement an $\epsilon$-approximation of T and V-basis gates.
In Section \ref{subsec:implementation} we show that the CS gate unitary, which is defined as,
\begin{eqnarray}
    \cs=\begin{bmatrix}
        1 & 0 & 0 & 0 \\
        0 & 1 & 0 & 0 \\
        0 & 0 & 1 & 0 \\
        0 & 0 & 0 & i
    \end{bmatrix},
    \label{eqn:cs}
\end{eqnarray}
can be implemented optimally with 3 Toffoli gates.  In a separate paper we show that Toffoli can be implemented with 3 CS gates. Thus we have the following theorem.
\begin{theorem}
Let $\clifft_n^{CS}$ and $\clifft_n^{Tof}$ be the set of unitaries exactly implementable by the Clifford+CS and Clifford+Toffoli gate set, respectively. Then
$
    |\clifft_n^{Tof}| = |\clifft_n^{CS}|. \nonumber
$
    \label{thm:implementCS}
\end{theorem}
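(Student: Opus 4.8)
The plan is to prove the stronger statement that the two sets coincide, $\clifft_n^{Tof}=\clifft_n^{CS}$, from which equality of cardinalities is immediate. I would establish this by a double-inclusion (gate-substitution) argument, leaning on the two single-gate facts quoted above: that CS is exactly implementable with $3$ Toffoli gates (Section \ref{subsec:implementation}) and that Toffoli is exactly implementable with $3$ CS gates (the companion paper). In both directions the point is that a product of exact-up-to-phase substitutions is again exact up to a global phase, so the composed circuit implements the same unitary up to phase.

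For the inclusion $\clifft_n^{Tof}\subseteq\clifft_n^{CS}$, I would take an arbitrary $U\in\clifft_n^{Tof}$ and write it, up to a global phase, as a product of Clifford gates and Toffoli gates on the $n$-qubit register. I then replace each Toffoli factor by its Clifford+CS circuit, acting on the same three qubits that the Toffoli touches. Since the CS implementation of Toffoli needs no qubits beyond those three, each substitution is local and the composed circuit is a genuine $n$-qubit Clifford+CS circuit implementing $U$ up to phase, so $U\in\clifft_n^{CS}$.

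The reverse inclusion $\clifft_n^{CS}\subseteq\clifft_n^{Tof}$ is the direction I expect to require the most care. Writing $U\in\clifft_n^{CS}$ as a product of Cliffords and CS gates and substituting the $3$-Toffoli gadget for each CS factor is the natural move, but here the qubit-count arithmetic bites: a CS gate lives on $2$ qubits whereas a Toffoli needs $3$, so the gadget for CS must act on an auxiliary third qubit. The key property to verify is that this gadget is \emph{clean} (catalytic): it returns the borrowed qubit to its initial state for every input, so that for $n\geq 3$ any of the remaining $n-2$ register qubits can serve as the helper without disturbing the overall action. I would state explicitly that the construction of Section \ref{subsec:implementation} has this borrowed-ancilla property, and conjugate it by SWAP-type Cliffords (cf. Equation \ref{eqn:cliffConj1}) onto whichever qubits are free. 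The substitution then yields a bona fide $n$-qubit Clifford+Toffoli circuit for $U$, giving $U\in\clifft_n^{Tof}$.

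Combining the two inclusions gives $\clifft_n^{Tof}=\clifft_n^{CS}$ and hence $|\clifft_n^{Tof}|=|\clifft_n^{CS}|$. The main obstacle, as indicated, is the clean-ancilla requirement in the CS-to-Toffoli direction, together with the implicit assumption $n\geq 3$ (on fewer qubits a Toffoli cannot even be applied, so the comparison degenerates); once the catalytic property is in hand, the rest is routine composition of exact-up-to-phase substitutions.
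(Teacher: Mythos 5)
Your proposal is correct and is essentially the paper's own (terse) argument: the theorem is obtained exactly by combining the 3-Toffoli implementation of CS from Section \ref{subsec:implementation} with the 3-CS implementation of Toffoli from the companion paper, and substituting gate-by-gate in both directions (implicitly for $n\geq 3$). The catalytic-ancilla property you rightly flag as the delicate point is automatic in the paper's formulation, since what Section \ref{subsec:implementation} establishes is the exact unitary identity $\chan{\id\otimes\cs}=\left(\prod_{j=1}^{3}\chan{G_{P_{1_j},P_{2_j},P_{3_j}}}\right)\chan{C_0}$ — i.e., the implemented 3-qubit unitary is literally $\id\otimes\cs$, which acts as the identity on the helper qubit for every input.
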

 Here we define the following.
\begin{definition}
For any non-zero $v\in\intg\left[\frac{1}{2}\right]$ the \textbf{smallest 2-denominator exponent}, denoted by $\sde_2$, is the smallest $k\in\nat$ for which $v=\frac{a}{2^k}$, where $a\in 2\intg+1$.
    \label{defn:sde}
\end{definition}
We define $\sde_2(0)=0$. For a $d_1\times d_2$ matrix $M$ with entries over this ring we define
\begin{eqnarray}
    \sde_2(M)&=&\max_{a\in [d_1],b\in [d_2]}\sde_2(M_{ab}).  \nonumber
\end{eqnarray}

\begin{fact}
    Let $v_1,v_2,v_3,v_4\in\intg\left[\frac{1}{2}\right]$ such that $\max\{\sde_2(v_1),\sde_2(v_2),\sde_2(v_3),\sde_2(v_4)\}=k$. Then $\sde_2\left(\frac{1}{2}(v_1\pm v_2\pm v_3\pm v_4)\right)=k+1$ or $\leq k$.
    \label{fact:sdeChangeNoCS}
\end{fact}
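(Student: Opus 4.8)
The plan is to reduce the whole statement to a single parity computation over the integers, mirroring the analogous two-term computation. First I would normalize all four elements to a common denominator. Writing each nonzero $v_i$ as $v_i=\frac{a_i}{2^{x_i}}$ with $a_i\in 2\intg+1$ and $x_i=\sde_2(v_i)\leq k$ (and treating $v_i=0$ separately), I clear denominators to the common value $2^k$ by setting $v_i=\frac{b_i}{2^k}$, where $b_i=a_i2^{k-x_i}\in\intg$ (and $b_i=0$ when $v_i=0$). The crucial observation is that $b_i$ is odd precisely when $x_i=k$, i.e. exactly when $v_i$ attains the maximal exponent $k$; every other $b_i$ is even.

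With this normalization, $v_1\pm v_2\pm v_3\pm v_4=\frac{s}{2^k}$ where $s=\pm b_1\pm b_2\pm b_3\pm b_4\in\intg$, and hence $\frac{1}{2}(v_1\pm v_2\pm v_3\pm v_4)=\frac{s}{2^{k+1}}$. The entire claim then hinges on the parity of $s$. Since neither the signs nor the even $b_i$ affect parity, $s$ is congruent modulo $2$ to the number of odd $b_i$, which by the observation above equals the number of indices $i$ with $\sde_2(v_i)=k$.

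I would then split into two cases. If $s$ is odd, then $\gcd(s,2^{k+1})=1$, so $\frac{s}{2^{k+1}}$ is already in lowest terms and $\sde_2\left(\frac{1}{2}(v_1\pm v_2\pm v_3\pm v_4)\right)=k+1$. If $s$ is even, write $s=2s'$ with $s'\in\intg$; then $\frac{1}{2}(v_1\pm v_2\pm v_3\pm v_4)=\frac{s'}{2^{k}}$, so $\sde_2\left(\frac{1}{2}(v_1\pm v_2\pm v_3\pm v_4)\right)\leq k$ (with equality exactly when $s'$ is odd, and a strictly smaller value otherwise). Because $\sde_2$ is $\nat$-valued, these two cases exhaust all possibilities, which is precisely the claimed dichotomy.

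Finally I would record the refinement that makes the fact usable downstream: since the parity of $s$ equals the number of $v_i$ achieving the maximum, the exponent grows to $k+1$ exactly when an odd number (one or three) of the $v_i$ have $\sde_2=k$, and it stays $\leq k$ when an even number (two or all four) do. I do not anticipate any genuine obstacle here; the only care required is the bookkeeping for vanishing $v_i$ and for the signs, and since both are parity-invariant this is routine once the characterization ``$b_i$ odd $\iff x_i=k$'' is stated precisely.
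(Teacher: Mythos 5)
Your proof is correct and takes essentially the same route as the paper's: clear all denominators to the common maximal exponent and decide by the parity of the resulting integer numerator whether the extra factor of $\frac{1}{2}$ forces the exponent up to $k+1$ (odd numerator, no cancellation) or permits a reduction to at most $k$ (even numerator). The additional parity characterization you record — growth exactly when an odd number of the $v_i$ attain the maximum — goes beyond the paper and is fine for nonzero $v_i$, though as stated it needs the small caveat that zero entries (which have $\sde_2=0$ by the paper's convention) must be excluded from the count in the degenerate case $k=0$.
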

\begin{proof}
Let $v_1=\frac{a}{2^w}$, $v_2=\frac{b}{2^x}$, $v_3=\frac{c}{2^y}$ and $v_4=\frac{d}{2^z}$, where $a,b,c,d\in 2\intg+1$ and $w,x,y,z\in\nat$. Without loss of generality let $\max\{ w,x,y,z \}=x$
\begin{eqnarray}
    \frac{1}{2}(v_1\pm v_2\pm v_3\pm v_4)&=&\frac{1}{2^{x+1}}\left(a\cdot 2^{x-w}\pm b\pm c\cdot 2^{x-y}\pm d\cdot 2^{x-z}\right)  \nonumber
\end{eqnarray}
If the numerator is odd then there is no further reduction of the fraction and $\sde_2$ of the resultant sum is $k+1$. If the numerator is even then there is a reduction of the fraction and the resultant $\sde_2$ is at most $k$.
\end{proof}

We represent each element $v=\frac{a}{2^k}\in\intg\left[\frac{1}{2}\right]$ by a pair $[a,k]$ of integers. In this way, all arithmetic operations involving elements within this ring, for example, matrix multiplication can be done with integer arithmetic. We have given the pseudocode for some of these operations in Appendix \ref{app:pseudocode} (Algorithms \ref{alg:sde2Red} and \ref{alg:add2}).  This also implies that we do not require any floating point operation for our Toffoli-count-optimal synthesis algorithms for exactly implementable unitaries. 

\paragraph{Compact representation of $\chan{G_{P_1,P_2,P_3}}$ : } It is sufficient to represent the $2^{2n}\times 2^{2n}$ matrix $\chan{G_{P_1,P_2,P_3}}$ with an array of length $7\cdot 2^{2n-3}$. For brevity, we write $\chan{G_{P_1,P_2,P_3}}[P_r,P_s]$ as $\chan{G_{P_1,P_2,P_3}}[r,s]$. Each entry of this array is of the form $(r,\pm s, \pm s', \pm s'' )$, which implies $\chan{G_{P_1,P_2,P_3}}[r,r]=\frac{1}{2}$, $\chan{G_{P_1,P_2,P_3}}[r,s]=\pm\frac{1}{2}$, $\chan{G_{P_1,P_2,P_3}}[r,s']=\pm\frac{1}{2}$ and $\chan{G_{P_1,P_2,P_3}}[r,s'']=\pm\frac{1}{2}$. The remaining entries of the matrix can be easily deduced, using Theorem \ref{thm:chanRep}. In Algorithm \ref{alg:chanTOF} (Appendix \ref{app:pseudocode}) we have provided a pseudocode for computing the channel representation of these basis elements and storing them using this compact representation.

\paragraph{Channel representation of $G_{P_1,P_2,P_3}^{\dagger}$ : } Since Toffoli is a self-inverse unitary, so 
$$\chan{G_{P_1,P_2,P_3}}^{\dagger}=\chan{G_{P_1,P_2,P_3}}.$$

\subsection{Multiplication by $\chan{G_{P_1,P_2,P_3}}$}
\label{subsec:mult}

Let $U_p=\chan{G_{P_1,P_2,P_3}}U$, where $U$ is another matrix of dimension $2^{2n}\times 2^{2n}$. Then,
\begin{eqnarray}
    U_p[r,j] = \sum_{k=1}^{2^{2n}} \chan{G_{P_1,P_2,P_3}}[r,k] U[k,j]   \nonumber
\end{eqnarray}
and it can be computed very efficiently with the following observations.

\begin{enumerate}
    \item Suppose $\chan{G_{P_1,P_2,P_3}}[r,r]=1$. From Theorem \ref{thm:chanRep}, $\chan{G_{P_1,P_2,P_3}}[r,s]=0$ for each $s\neq r$. Then,
    \begin{eqnarray}
        U_p[r,j]=\chan{G_{P_1,P_2,P_3}}[r,r]U[r,j]=U[r,j]  \qquad [\forall j\in\{1,\ldots,2^{2n}\} ] \nonumber
    \end{eqnarray}
and so $U_p[r,.]\leftarrow U[r,.]$ i.e. the $r^{th}$ row of $U$ gets copied into the $r^{th}$ row of $U_p$. 

    \item Let $\chan{G_{P_1,P_2,P_3}}[r,r]=\frac{1}{2}$. From Theorem \ref{thm:chanRep}, we know there are 3 other non-zero off-diagonal elements. Let $\chan{G_{P_1,P_2,P_3}}[r,s]=\pm\frac{1}{2}$, $\chan{G_{P_1,P_2,P_3}}[r,s']=\pm\frac{1}{2}$ and $\chan{G_{P_1,P_2,P_3}}[r,s'']=\pm\frac{1}{2}$.
\begin{eqnarray}
    U_p[r,j]&=&\chan{G_{P_1,P_2,P_3}}[r,r]U[r,j]+\chan{G_{P_1,P_2,P_3}}[r,s]U[s,j]+\chan{G_{P_1,P_2,P_3}}[r,s']U[s',j]
    \nonumber \\
    &&+\chan{G_{P_1,P_2,P_3}}[r,s'']U[s'',j] 
     =\frac{1}{2}\left( U[r,j]\pm U[s,j]\pm U[s',j]\pm U[s'',j] \right)   \nonumber
\end{eqnarray}
Thus we see that the $r^{th}$ row of $U_p$ is a linear combination of the $r^{th}, s^{th}, s'^{th}, s''^{th}$ rows of $U$, multiplied by $\frac{1}{2}$. i.e.
$ U_p[r,.]\leftarrow\frac{1}{2}\left( U[r,.]\pm U[s,.]\pm U[s',.]\pm U[s'',.] \right) $.
\end{enumerate}
We have given a pseudocode of this procedure in Algorithm \ref{alg:multGtof} (Appendix \ref{app:pseudocode}).
Since $\frac{1}{8}^{th}$ of the diagonals in $\chan{G_{P_1,P_2,P_3}}$ is 1 (Theorem \ref{thm:chanRep}), so we get the following result.

\begin{theorem}
$U_p=\chan{G_{P_1,P_2,P_3}}U$ can be computed in time $O\left( 7\cdot 2^{4n-3} \right)$.
    \label{thm:chanRepMult}
\end{theorem}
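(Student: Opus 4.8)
The plan is to exploit the extreme sparsity of $\chan{G_{P_1,P_2,P_3}}$ established in Theorem \ref{thm:chanRep}, which reduces the product $U_p = \chan{G_{P_1,P_2,P_3}} U$ from a generic $2^{2n}\times 2^{2n}$ matrix multiplication to a collection of cheap row operations on $U$. The central observation is that each row of $\chan{G_{P_1,P_2,P_3}}$ has at most four non-zero entries, so each row of $U_p$ is determined by a linear combination of at most four rows of $U$, rather than by the full inner-product sum over all $2^{2n}$ columns.

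First I would split the $2^{2n}$ rows of $U_p$ into two classes according to the diagonal entry of $\chan{G_{P_1,P_2,P_3}}$, which can be read off in $O(1)$ from the compact representation described above. For a row $r$ with $\chan{G_{P_1,P_2,P_3}}[r,r]=1$, Theorem \ref{thm:chanRep} forces all off-diagonal entries in that row to vanish, so $U_p[r,.]\leftarrow U[r,.]$ is a single copy of a length-$2^{2n}$ row, costing $O(2^{2n})$. For a row $r$ with $\chan{G_{P_1,P_2,P_3}}[r,r]=\frac{1}{2}$, Theorem \ref{thm:chanRep} guarantees exactly three further non-zero entries equal to $\pm\frac{1}{2}$ in columns $s,s',s''$, whence $U_p[r,.]\leftarrow\frac{1}{2}(U[r,.]\pm U[s,.]\pm U[s',.]\pm U[s'',.])$, a fixed linear combination of four length-$2^{2n}$ rows, again costing $O(2^{2n})$ since the number of rows combined is constant.

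The cost then follows by counting rows of each type using part 4 of Theorem \ref{thm:chanRep}: exactly $2^{2n-3}$ rows are of the first (copy) type and the remaining $7\cdot 2^{2n-3}$ rows are of the second (combination) type. The total work is therefore $2^{2n-3}\cdot O(2^{2n}) + 7\cdot 2^{2n-3}\cdot O(2^{2n})$, in which the second term dominates and yields the claimed $O(7\cdot 2^{4n-3})$.

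I do not expect a genuine obstacle here; the only points that need care are bookkeeping. The column indices $s,s',s''$ and their attached signs must be retrievable in $O(1)$ per row, which is exactly what the compact array of length $7\cdot 2^{2n-3}$ provides once it has been precomputed (Algorithm \ref{alg:chanTOF}); this precomputation is charged separately and does not affect the per-multiplication bound. Finally, since all entries lie in $\intg[\frac{1}{2}]$ and are stored in the $[a,k]$ integer-pair form, the row additions and the multiplication by $\frac{1}{2}$ are carried out with integer arithmetic, so each of the $O(2^{2n})$ entry updates is a constant-size operation and the operation count faithfully reflects the running time.
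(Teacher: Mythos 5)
Your proposal is correct and matches the paper's own argument essentially step for step: both exploit Theorem \ref{thm:chanRep} to replace the full matrix product by a row copy for the $2^{2n-3}$ rows with diagonal entry $1$ and a four-row linear combination for the $7\cdot 2^{2n-3}$ rows with diagonal entry $\frac{1}{2}$, with the compact array supplying the indices and signs in $O(1)$, yielding the dominant cost $7\cdot 2^{2n-3}\cdot O(2^{2n})=O\left(7\cdot 2^{4n-3}\right)$. The only cosmetic difference is that the paper's pseudocode absorbs the copy rows by initializing $U_p\leftarrow U$, while you account for them as an explicit (non-dominant) term.
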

Currently the fastest algorithm to multiply two $2^{2n}\times 2^{2n}$ matrices has a time complexity of $2^{4.7457278n}$ \cite{2014_lG}. This modest asymptotic improvement in the complexity has a significant impact on the actual running time, especially when many such matrix multiplications are performed, as in our algorithms for exactly implementable unitaries. Also, using the representations discussed earlier, we only work with integers. This is not only faster, but also overcomes other floating point arithmetic issues like precision, etc. 

\paragraph{$\sde_2$ of product matrix : } Let $U_p=\chan{G_{P_1,P_2,P_3}}\chan{U}$, where $\chan{U}=\prod_j\chan{G_{P_{1j},P_{2j},P_{3j}}}\chan{C}$ is an exactly implementable unitary and $P_1,P_2,P_3,P_{1j},P_{2j},P_{3j}\in\pauli_n$, $C\in\cliff_n$. From Lemma \ref{lem:chanRepRing}, $\chan{U}[k,\ell]\in\intg\left[\frac{1}{2}\right]$, where $k,\ell\in\{1,\ldots,2^{2n}\}$. Thus entries of $U_p$ are of the form $\frac{1}{2}\left(v_1\pm v_2\pm v_3\pm v_4\right)$, where $v_1,v_2,v_3,v_4\in\intg\left[\frac{1}{2}\right]$. We can apply Fact \ref{fact:sdeChangeNoCS} and conclude that the resulting $\sde_2$ can increase by 1, remain unchanged or decrease.

Now, we know that when we multiply by inverse then the change in sde should be reversed. Thus sde can decrease by at most 1 because it can increase by at most 1. 
\begin{lemma}
    Let $U_p=\chan{G_{P_1,P_2,P_3}}\chan{U}$, where $\chan{U}=\prod_j\chan{G_{P_{1j},P_{2j},P_{3j} } }\chan{C}$  and $P_1,P_2,P_3,P_{1j},P_{2j},P_{3j}\in\pauli_n$, $C\in\cliff_n$. Then $\sde_2(U_p)-\sde_2(\chan{U})\in\{\pm 1,0\}$.
    \label{lem:sdeChangeMat}
\end{lemma}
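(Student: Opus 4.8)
The plan is to prove the two-sided bound by establishing the upper inequality $\sde_2(U_p)\le\sde_2(\chan{U})+1$ directly, and then obtaining the lower inequality $\sde_2(U_p)\ge\sde_2(\chan{U})-1$ almost for free by exploiting that $\chan{G_{P_1,P_2,P_3}}$ is an involution.

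For the upper bound, I would invoke the explicit row structure of $\chan{G_{P_1,P_2,P_3}}$ from Theorem \ref{thm:chanRep}, exactly as used in computing $U_p$ above. Each row $r$ of $U_p$ is either a verbatim copy of row $r$ of $\chan{U}$ (when the diagonal entry is $1$), in which case its entrywise $\sde_2$ is unchanged, or it equals $\frac{1}{2}\left(U[r,.]\pm U[s,.]\pm U[s',.]\pm U[s'',.]\right)$ (when the diagonal entry is $\frac{1}{2}$). In the latter case every entry of $U_p$ has the form $\frac{1}{2}(v_1\pm v_2\pm v_3\pm v_4)$ with $v_1,v_2,v_3,v_4$ entries of $\chan{U}$; since $\chan{U}$ is exactly implementable, these lie in $\intg\left[\frac{1}{2}\right]$ by Lemma \ref{lem:chanRepRing}, so Fact \ref{fact:sdeChangeNoCS} applies and bounds the $\sde_2$ of that entry by $\max_i\sde_2(v_i)+1\le\sde_2(\chan{U})+1$. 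Taking the maximum over all entries yields $\sde_2(U_p)\le\sde_2(\chan{U})+1$.

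For the lower bound, I would use that $\chan{G_{P_1,P_2,P_3}}^{\dagger}=\chan{G_{P_1,P_2,P_3}}$ (Toffoli is self-inverse) together with unitarity of the channel representation to conclude $\chan{G_{P_1,P_2,P_3}}^2=\id$, so that $\chan{U}=\chan{G_{P_1,P_2,P_3}}\,U_p$. Crucially $U_p$ is itself the channel representation of an exactly implementable unitary---it is again a product of generators of $\gen_{Tof}$ times a Clifford---so by Lemma \ref{lem:chanRepRing} its entries also lie in $\intg\left[\frac{1}{2}\right]$. Running the identical argument of the previous paragraph with the roles of $\chan{U}$ and $U_p$ interchanged then gives $\sde_2(\chan{U})\le\sde_2(U_p)+1$, i.e. $\sde_2(U_p)\ge\sde_2(\chan{U})-1$. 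Combining the two inequalities gives $\sde_2(U_p)-\sde_2(\chan{U})\in\{\pm1,0\}$.

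I expect no deep obstacle here; the only point needing care is the symmetric applicability of the upper-bound argument, which hinges on $U_p$ having entries in $\intg\left[\frac{1}{2}\right]$ and on $\chan{G_{P_1,P_2,P_3}}$ being an involution. Both are immediate from the earlier results, so the real content is the reversibility trick: since multiplication by the self-inverse map $\chan{G_{P_1,P_2,P_3}}$ raises $\sde_2$ by at most $1$, undoing the product can lower it by at most $1$ as well, which is precisely the lower bound.
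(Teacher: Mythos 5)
Your proof is correct and takes essentially the same route as the paper: the upper bound $\sde_2(U_p)\leq\sde_2(\chan{U})+1$ via the row structure from Theorem \ref{thm:chanRep} and Fact \ref{fact:sdeChangeNoCS}, and the lower bound by reversing the multiplication, exactly as in the paper's remark that multiplying by the inverse must undo the change. If anything, you make explicit two points the paper leaves implicit---that $U_p$ is itself the channel representation of an exactly implementable unitary (so Lemma \ref{lem:chanRepRing} applies to it) and that $\chan{G_{P_1,P_2,P_3}}$ is an involution---which is a welcome tightening of the paper's terse justification rather than a different argument.
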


\subsection{Implementation results}
\label{subsec:implementation}

We implemented our heuristic algorithm EXACT-TOF-OPT in Python on a machine with Intel(R) Core(TM)
i7-5500K CPU at 2.4GHz, having 8 GB RAM and running Windows 10. This algorithm has been explained in Section \ref{subsubsec:heuristic}. There are two other algorithms described in Section \ref{sec:method} but their complexity is exponential in number of qubits as well as Toffoli-count. So they become intractable on a personal laptop, running on 1 core. In this heuristic algorithm we basically try to guess a sequence $\widetilde{U}=\prod_{j=1}^mG_{P_{1_j},P_{2_j},P_{3_j}}$ such that $U\widetilde{U}^{\dagger} = C_0\in\cliff_n$, where $U$ is the input unitary. We work with the channel representation of unitaries. We prune the search space depending upon the sde and Hamming weight of the intermediate product unitaries. We have probed two rules in order to select the unitaries - Rule A, where we group according to sde and Rule B, where we group according to sde and Hamming weight. More details can be found in Section \ref{subsubsec:heuristic}. 

To test the heuristic algorithm we generated a number of random 3-qubit unitaries with Toffoli-count at most 5 and 10. The procedure for generating the random unitaries has been explained in Appendix \ref{app:pseudocode} (RANDOM-CHAN-REP, Algorithm \ref{alg:randChanRep}). We found that using Rule A, the output Toffoli-count is at most the input Toffoli-count for all the cases, but the time consumed is much higher. In fact, the space utilized also becomes too high. Using Rule B, the output is at most the input for nearly 80$\%$ of the cases, but the time consumed is much less. Here, we mention that we terminate the algorithm if it runs for more than 1 hour. In Table \ref{tab:resultExact} we have given the implementation results. We could not apply Rule A on input unitaries with 10 Toffolis because it soon ran out of space. The average is computed over 10 input unitaries for each group. These include only those whose implementation was not aborted due to time being more than 1 hour or space consumption being too high. We have also mentioned the maximum number of unitaries that need to be stored while executing the program. This gives a sense about the space complexity.

We also implemented  the unitary $\id\otimes\cs$ using both the rules and found that its Toffoli-count is 3. It took roughly 19 secs to get a decomposition and maximum number of unitaries stored is at most 3. For this particular case we did an exhaustive search in order to find a lower Toffoli-count decomposition. But we found that this is the optimal Toffoli-count for CS gate. A circuit implementing $\id\otimes\cs$ has been shown in Figure \ref{ckt:cs}. More specifically, using our heuristic algorithm EXACT-TOF-OPT, using both the rules we found that we obtain,
\begin{eqnarray}
    \chan{\id\otimes\cs} = \left(\prod_{j=1}^3 \chan{G_{P_{1_j},P_{2_j},P_{3_j}}  }  \right)\chan{C_0},\qquad [C_0\in\cliff_3]   \nonumber
\end{eqnarray}
where $(P_{1_1},P_{2_1},P_{3_1}) = (\id\id\Z,\id\Z\id,\Z\id\id)$, $(P_{1_2},P_{2_2},P_{3_2}) = (\id\id\Y,\id\Z\id,\Z\id\id)$ and $(P_{1_3},P_{2_3},P_{3_3}) = (\id\id\X,\id\Z\id,\Z\id\id)$. It is straightforward to verify the above equation. Now, as discussed in Section \ref{subsec:genTOF}, for each generating set element $G_{P_{1_j},P_{2_j},P_{3_j}}$ we find conjugating Cliffords $C_j\in\cliff_3$ such that $C_j(\id\id\X)C_j^{\dagger} = P_{1_j}$, $C_j(\id\Z\id)C_j^{\dagger} = P_{2_j}$ and $C_j(\Z\id\id)C_j^{\dagger} = P_{3_j}$. Since $\had\X\had=\Z$ and $\phase\X\phase^{\dagger}=\Y$, so we obtain the circuit as shown in Figure \ref{ckt:cs}. We have boxed the corresponding parts of the circuit that implement each generating set unitary.

\begin{table}[h]
    \centering
    \scriptsize
    \begin{tabular}{|c|c|c|c|c|}
    \hline
      \textbf{Input count} & \textbf{Avg. time (A)} & \textbf{Avg. time (B)} & \textbf{Max $\#$unitaries (A)} & \textbf{Max $\#$unitaries (B)}  \\
      \hline
     5  & $\approx$ 46 mins & $\approx$ 100 s & 902 & 7  \\
     \hline
     10 & - & $\approx$ 250s & - & 10 \\
     \hline
    \end{tabular}
    \caption{Implementation results of EXACT-TOF-COUNT on random 3-qubit unitaries. In the first column we have the number of Toffolis in the input unitary. The second and third column gives the average time using Divide-and-Select Rule A and B respectively. The last two columns give the maximum number of unitaries that need to be stored using rules A and B respectively.}
    \label{tab:resultExact}
\end{table}

\begin{figure}
 \centering
 \includegraphics[width=6cm, height=2cm]{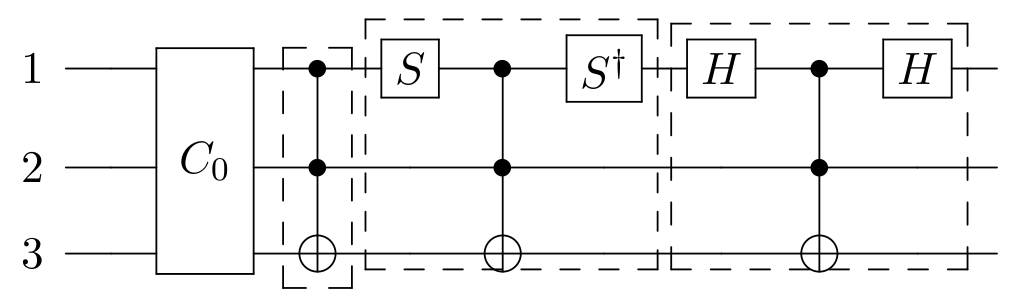}
 \caption{An implementation of $\id\otimes\cs$ with 3 Toffoli gates. $C_0$ is a 3-qubit Clifford. The boxed parts implement the three generating set unitaries. The first box implements $G_{\id\id\X,\id\Z\id,\Z\id\id}$, the second one implements $G_{\id\id\Y,\id\Z\id,\Z\id\id}$ and the third box implements $G_{\id\id\Z,\id\Z\id,\Z\id\id}$.  } 
    \label{ckt:cs}
\end{figure}

\subsection{Discussions}
\label{subsec:discussion}

In this paper we develop some theoretical concepts to better understand the unitaries implementable by the Clifford+Toffoli gate set. First is the generating set with which we can represent any unitary exactly or approximately implmentable by this gate set. We derive a bound on the cardinality of this generating set. Second is the analysis of the channel representation of these generating set elements. With the help of these we are able to derive some interesting results like a bound on the Toffoli-count of arbitrary unitaries, impossibility of the exact implementation of T gate, though we know that Toffoli can be exactly implemented by the Clifford+T gate set. This clearly shows that the cardinality of the set of unitaries exactly implementable by these gate sets is different. Such conclusions will help in resource estimates in various applications. 

With these theoretical concepts we develop Toffoli-count-optimal synthesis algorithms for arbitrary multi-qubit unitaries. Such algorithms were previously developed in the context of T-count-optimal synthesis. Complexity of these algorithms depend on the cardinality of the generating set. $|\gen_{Tof}|$ is exponentially more than the size of the generating set for Clifford+T. Also, when we work with channel representation the size of the unitaries increases quadratically. So these algorithms soon become intractable on a personal laptop. Thus we could not implement any of the provable algorithms, because complexity of these depend exponentially on the Toffoli-count. The heuristic algorithm with polynomial dependence on the Toffoli-count, is much faster but it is not always accurate, especially with Rule B. The accuracy depends on the characteristics of the channel representation of the generating set. Thus inspite of using similar heuristics the performance varied for Clifford+T and Clifford+Toffoli gate set. As mentioned, due to the high complexity of the provable algorithms we could not verify if the count returned by the heuristics were optimal. Since output Toffoli-count was at most the Toffoli-count of the input circuit, so we conjecture that they return optimal decomposition.   

Inspite of this, we did get some nice result by implementing the heuristic algorithm, for example, the exact implementation of CS gate. Due to its small Toffoli count we could verify that in this case the heuristics did return the optimal count. Since in a separate paper we show that Toffoli can be exactly implemented by Clifford+CS, so it implies that the set of exactly implementable unitaries by these two universal gate sets is the same. Thus these heuristics are worth probing and can throw some interesting light on the mathematical properties of the unitaries involved. More parameters can be taken into account while dividing and selecting, the selecting criteria maybe changed,etc. 

As mentioned the theoretical concepts and techniques developed in this paper helped us achieve some nice results and develop algorithms in the context of Clifford+Toffoli, that could not be done before. These techniques have a braoder applicability and can be used for other universal gate sets. More efficient algorithms can be designed, possibly by sacrificing the optimality criterion. These concepts can also be used to design Toffoli-depth-optimal quantum circuits, analogous to the work in \cite{2022_GMM} for T-depth-optimality. We leave these for future research.

\section{Method}
\label{sec:method}

In this section we describe algorithms for Toffoli-count-optimal synthesis of multi-qubit unitaries, both exactly and approximately implementable by Clifford+Toffoli gate set.

\subsection{Toffoli-count-optimal synthesis algorithm for approximately implementable unitaries}
\label{subsec:algoApprox}

Now we describe an algorithm for determining the $\epsilon$-Toffoli-count of an $n$-qubit unitary $W$. Basically we use the procedure in \cite{2022_GMM2}, but with the generating set $\gen_{Tof}$.  This algorithm has space and time complexity $O\left(|\gen_{Tof}|\right)$ and $O\left(|\gen_{Tof}|^{\tofeps}\right)$ respectively.  We briefly describe our algorithm here. More detail pseudocode can be found in Appendix \ref{app:pseudocode}. The optimization version, APPROX-TOF-OPT (Algorithm \ref{alg:min}), that finds the optimal Toffoli-count is an exhaustive search procedure. In each iteration it calls the decision version APPROX-TOF-DECIDE (Algorithm \ref{alg:decide}) in order to decide if $\tofeps(W)=m$, for increasing values of a variable $m$.  

The main idea to solve the decision version is as follows. Suppose we have to test if $\tofeps(W)=m$. From the definitions given in Section \ref{sec:prelim}, we know that if this is true then $\exists U\in\clifft_n^{Tof}$ such that $\tofcount(U)=m$ and $d(U,W)\leq\epsilon$. Let $U=\left(\prod_{j=m}^1G_{P_{1_j},P_{2_j},P_{3_j}}\right)C_0$ where $C_0\in\cliff_n$ and $G_{P_{1_j},P_{2_j},P_{3_j}}\in\gen_{Tof}$. Also, we can write $W=UE$, for some unitary $E$. Let $\widetilde{U}=\prod_{j=m}^1G_{P_{1_j},P_{2_j},P_{3_j}}$.
\begin{eqnarray}
 && d(W,U)=\sqrt{1-\frac{|\tr(W^{\dagger}U)|}{N}}\leq\epsilon 
 \implies |\tr(W^{\dagger}U)| \geq N(1-\epsilon^2)  \label{eqn:dVU} \\
&& \left|\tr\left(E^{\dagger}\right)\right|=\left|\tr\left(E\right)\right|\geq N(1-\epsilon^2)
 \label{eqn:trE}
\end{eqnarray}
The above implies that $d(E,\id)\leq\epsilon$. We have,
\begin{eqnarray}
\left|\tr\left(W^{\dagger}\left(\prod_{j=m}^1G_{P_{1_j},P_{2_j},P_{3_j}}\right)\right)\right| &=& \left|\tr\left(E^{\dagger}U^{\dagger}\left(\prod_{j=m}^1G_{P_{1_j},P_{2_j},P_{3_j}}\right)\right)\right| \nonumber\\
&=& \left|\tr\left(E^{\dagger}e^{-i\phi}C_0^{\dagger}\left(\prod_{j=1}^mG_{P_{1_j},P_{2_j},P_{3_j}}^{\dagger}\right)\left(\prod_{j=m}^1G_{P_{1_j},P_{2_j},P_{3_j}}\right)\right)\right| \nonumber \\
 &=&\left|\tr\left(E^{\dagger}C_0^{\dagger}\right)\right|=\left|\tr\left(EC_0\right)\right|. \nonumber
\end{eqnarray}
 To test if we have guessed the correct $\widetilde{U}$, we perform the following tests.

\textbf{I. Amplitude test : } We use the following theorem, which says that if $E$ is close to identity then distribution of absolute-value-coefficients of $EC_0$ and $C_0$ in the Pauli basis expansion, is almost similar.
\begin{theorem} [\cite{2022_GMM2}]
Let $E\in\mathcal{U}_n$ be such that $\left|\tr\left(E\right)\right|\geq N\left(1-\epsilon^2\right)$, for some $\epsilon\geq 0$. $C_0=\sum_{P\in\pauli_n}r_PP$ is an $n$-qubit Clifford. If $\left|\left\{P:r_P\neq 0\right\}\right|=M$ then 
\begin{eqnarray}
\frac{1-\epsilon^2}{\sqrt{M}}-\sqrt{M}\sqrt{2\epsilon^2-\epsilon^4}&\leq&\left|\tr\left(EC_0P_1\right)/N\right|\leq \frac{1}{\sqrt{M}}+\sqrt{M}\sqrt{2\epsilon^2-\epsilon^4}\quad [\text{if } r_{P_1}\neq 0]   \label{eqn:traceRPn0} \\
\text{and}\quad 0&\leq&\left|\tr\left(EC_0P_1\right)/N\right|\leq \sqrt{M}\sqrt{2\epsilon^2-\epsilon^4}\quad [\text{if }r_{P_1}=0] \label{eqn:traceRP0}
\end{eqnarray}
 \label{thm:coeffEC}
\end{theorem}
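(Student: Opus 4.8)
The plan is to exploit two structural facts: that the $n$-qubit Paulis form an orthonormal basis of the matrix space under the normalized Hilbert--Schmidt inner product $\langle A,B\rangle=\frac{1}{N}\tr(A^{\dagger}B)$, and that the hypothesis $|\tr(E)|\ge N(1-\epsilon^2)$ says precisely that $E$ lies within global-phase-invariant distance $\epsilon$ of the identity. Since the quantity to be bounded, $|\tr(EC_0P_1)/N|$, is invariant under multiplying $E$ by a global phase, I would first replace $E$ by $E_0=e^{-i\arg\tr(E)}E$, so that $\tr(E_0)=|\tr(E)|$ is real and nonnegative while $|\tr(E_0C_0P_1)|=|\tr(EC_0P_1)|$. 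Writing $\lambda=\tr(E_0)/N$, the hypothesis becomes $1-\epsilon^2\le\lambda\le 1$, the upper bound holding because $E_0$ is unitary. This global-phase reduction is what turns the assumption on $|\tr(E)|$ into a usable statement about $\mathrm{Re}\,\tr(E_0)$.

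Next I would decompose $E_0$ orthogonally along $\id$: set $E_0=\lambda\id+E_0^{\perp}$ with $\tr(E_0^{\perp})=0$. Because $E_0$ is unitary, $\|E_0\|_2^2=\tr(\id)=N$, so $\|E_0^{\perp}\|_2^2=N-|\lambda|^2N=N(1-\lambda^2)$. Then $\tr(E_0C_0P_1)=\lambda\,\tr(C_0P_1)+\tr(E_0^{\perp}C_0P_1)$, where the first term is exactly $\lambda N r_{P_1}$ since $r_{P_1}=\frac{1}{N}\tr(C_0P_1)$ by orthonormality of the (Hermitian) Paulis. For the second term, Cauchy--Schwarz in the Hilbert--Schmidt inner product together with $\|C_0P_1\|_2=\sqrt{N}$ (a product of unitaries) gives $|\tr(E_0^{\perp}C_0P_1)|\le\|E_0^{\perp}\|_2\|C_0P_1\|_2=N\sqrt{1-\lambda^2}$. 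Finally I would use $1-\lambda^2\le 1-(1-\epsilon^2)^2=2\epsilon^2-\epsilon^4$ to recognise this error term as the one appearing in the statement.

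It remains to control $|r_{P_1}|$, and here I would invoke Fact \ref{fact:cliffCoeff} (\cite{2010_BS}): the nonzero Pauli coefficients of a Clifford all share the same magnitude, and since $\sum_P|r_P|^2=\frac1N\tr(C_0^{\dagger}C_0)=1$ with exactly $M$ nonzero terms, each nonzero $|r_P|$ equals $1/\sqrt{M}$. Dividing the identity above by $N$ and applying the triangle inequality in both directions then yields, for $r_{P_1}\ne0$, the chain $\lambda|r_{P_1}|-\sqrt{1-\lambda^2}\le|\tr(E_0C_0P_1)/N|\le|r_{P_1}|+\sqrt{1-\lambda^2}$, which after $\lambda\ge1-\epsilon^2$, $\lambda\le1$ becomes $\frac{1-\epsilon^2}{\sqrt M}-\sqrt{2\epsilon^2-\epsilon^4}\le\cdots\le\frac{1}{\sqrt M}+\sqrt{2\epsilon^2-\epsilon^4}$; and for $r_{P_1}=0$ the main term vanishes, leaving $0\le|\tr(E_0C_0P_1)/N|\le\sqrt{2\epsilon^2-\epsilon^4}$. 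The stated bounds, which carry an extra factor $\sqrt M\ge1$ multiplying the error, then follow immediately (this projection argument in fact delivers them with that factor improved to $1$).

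The main obstacle I anticipate is not the inequality chain but correctly isolating the ``signal'' term $\lambda N r_{P_1}$ from the ``noise''. A naive estimate such as $\|E_0-\id\|_2\le\sqrt{2N}\,\epsilon$ combined with Cauchy--Schwarz only produces the looser constant $\sqrt{2}\,\epsilon$; the decisive idea is to project $E_0$ onto the span of $\id$ \emph{before} pairing against $C_0P_1$, since this simultaneously supplies the sharp form $\sqrt{1-\lambda^2}=\sqrt{2\epsilon^2-\epsilon^4}$ and cleanly exposes the Clifford coefficient $r_{P_1}$ on which Fact \ref{fact:cliffCoeff} can act.
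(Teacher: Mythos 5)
Your proof is correct, and it takes a genuinely different route from the source. This paper does not reprove the statement---it cites Theorem 3.1 of \cite{2022_GMM2}---and the argument there works entirely in the Pauli basis: one expands $E=\sum_P e_P P$, observes that $|e_{\id}|\geq 1-\epsilon^2$ and $\sum_{P\neq\id}|e_P|^2\leq 1-(1-\epsilon^2)^2=2\epsilon^2-\epsilon^4$, writes $\tr(EC_0P_1)/N$ as the diagonal term $e_{\id}r_{P_1}$ (up to phase) plus cross terms, and then bounds each of the up to $M$ cross terms individually using $|r_{P'}|=1/\sqrt{M}$ and $|e_P|\leq\sqrt{2\epsilon^2-\epsilon^4}$; that term-by-term estimate is precisely where the factor $\sqrt{M}$ on the error originates. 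Your argument---phase-normalizing $E$ so that $\lambda=\tr(E_0)/N\in[1-\epsilon^2,1]$, splitting off the component along $\id$, and applying Cauchy--Schwarz \emph{once} to the traceless remainder paired against the unitary $C_0P_1$---replaces the term-by-term bound with a single aggregate one, and consequently delivers the error term as $\sqrt{2\epsilon^2-\epsilon^4}$ rather than $\sqrt{M}\sqrt{2\epsilon^2-\epsilon^4}$; since $\sqrt{M}\geq 1$, the stated inequalities follow a fortiori, as you note. All the ingredients you invoke check out: $r_{P_1}=\tr(C_0P_1)/N$ and $\sum_P|r_P|^2=1$ are Fact \ref{app:fact:trCoeff}, the equal magnitudes $|r_P|=1/\sqrt{M}$ come from Fact \ref{fact:cliffCoeff}, and $\|C_0P_1\|_2=\sqrt{N}$ because $C_0P_1$ is unitary. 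The one hypothesis you use silently is $\epsilon\leq 1$, needed so that $\lambda\geq 1-\epsilon^2\geq 0$ can be squared to get $1-\lambda^2\leq 2\epsilon^2-\epsilon^4$; this is harmless since the global phase invariant distance never exceeds $1$, but it is worth stating. A practical payoff of your sharper form is downstream: in the amplitude test and in the derivation of Theorem \ref{thm:tofBound}, the $\sqrt{2M}\epsilon$-type error terms would shrink to $O(\epsilon)$ terms independent of $M$, widening the gap between the sets $\mathcal{S}_0$ and $\mathcal{S}_1$ and mildly improving the constants in the Toffoli-count lower bound.
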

In fact, we can have a more general theorem as follows.
\begin{theorem}[\cite{2022_GMM2}]
Let $E\in\mathcal{U}_n$ be such that $\left|\tr\left(E\right)\right|\geq N\left(1-\epsilon^2\right)$, for some $\epsilon\geq 0$. $Q=\sum_{P\in\pauli_n}q_PP$ is an $n$-qubit unitary. Then for each $P_1\in\pauli_n$,
\begin{eqnarray}
(1-\epsilon^2)|q_{P_1}|-\sum_{P\in\pauli_n\setminus\{P_1\}}|q_P|\sqrt{2\epsilon^2-\epsilon^4} &\leq&\left|\tr\left(EQP_1\right)/N\right|\leq |q_{P_1}|+\sum_{P\in\pauli_n\setminus\{P_1\}}|q_P|\sqrt{2\epsilon^2-\epsilon^4}. \nonumber
\end{eqnarray}

 \label{thm:coeffEQ}
\end{theorem}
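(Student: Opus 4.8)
The plan is to reduce everything to the Pauli-basis expansion of $E$ and to exploit that $E$ is unitary together with the near-maximal trace hypothesis. First I would write $E=\sum_{P\in\pauli_n}e_P P$ with $e_P=\tr(EP)/N$, which is legitimate because the $4^n$ Paulis form an orthonormal basis of the matrix space under the normalized Hilbert--Schmidt inner product $\langle A,B\rangle=\tr(A^{\dagger}B)/N$; in particular $\tr(PP')=N\delta_{P,P'}$ since each Pauli is Hermitian and squares to $\id$. Unitarity of $E$ then yields Parseval's identity $\sum_{P}|e_P|^2=\tr(E^{\dagger}E)/N=1$. The hypothesis $|\tr(E)|\geq N(1-\epsilon^2)$ says exactly that $|e_{\id}|\geq 1-\epsilon^2$, so the off-identity mass is controlled: $\sum_{P\neq\id}|e_P|^2=1-|e_{\id}|^2\leq 1-(1-\epsilon^2)^2=2\epsilon^2-\epsilon^4$. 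The crucial consequence is that every individual nontrivial coefficient is small, namely $|e_{P}|=|\tr(EP)/N|\leq\sqrt{2\epsilon^2-\epsilon^4}$ for all $P\neq\id$, while $|e_{\id}|\in[1-\epsilon^2,1]$.

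Next I would expand the quantity of interest. Writing $Q=\sum_{P}q_P P$ gives $\tr(EQP_1)/N=\sum_{P}q_P\,\tr(EPP_1)/N$. I would isolate the term $P=P_1$: since $P_1^2=\id$ it contributes $q_{P_1}e_{\id}$. For every $P\neq P_1$ the product $PP_1$ equals $\omega_P P''$ for some phase $\omega_P\in\{\pm1,\pm i\}$ and some Pauli $P''$, and one checks $P''\neq\id$ precisely because $P\neq P_1$ (if $P''=\id$ then $P$ would equal $\omega_P^{-1}P_1$, which forces $\omega_P=1$ and $P=P_1$). Hence $\tr(EPP_1)/N=\omega_P e_{P''}$ with $|e_{P''}|\leq\sqrt{2\epsilon^2-\epsilon^4}$, and altogether $\tr(EQP_1)/N=q_{P_1}e_{\id}+\sum_{P\neq P_1}q_P\omega_P e_{P''}$.

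Finally I would apply the triangle inequality to this expansion. For the upper bound, $|\tr(EQP_1)/N|\leq|q_{P_1}||e_{\id}|+\sum_{P\neq P_1}|q_P||e_{P''}|\leq|q_{P_1}|+\sum_{P\neq P_1}|q_P|\sqrt{2\epsilon^2-\epsilon^4}$, using $|e_{\id}|\leq 1$ and $|\omega_P|=1$. For the lower bound, the reverse triangle inequality gives $|\tr(EQP_1)/N|\geq|q_{P_1}||e_{\id}|-\sum_{P\neq P_1}|q_P||e_{P''}|\geq(1-\epsilon^2)|q_{P_1}|-\sum_{P\neq P_1}|q_P|\sqrt{2\epsilon^2-\epsilon^4}$, which is the claimed estimate. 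The argument is short; the only real content is the spectral step bounding the off-identity Pauli coefficients of $E$, so the main thing to get right is the Parseval bookkeeping and the verification that the residual Pauli $P''$ is never the identity (otherwise the small-coefficient bound would not apply to that term). Theorem~\ref{thm:coeffEC} is then recovered as the special case $Q=C_0$ a Clifford, where the nonzero $q_P$ all have magnitude $1/\sqrt{M}$.
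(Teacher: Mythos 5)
Your proof is correct, and it takes essentially the same route as the cited argument in \cite{2022_GMM2}, which this paper imports without reproving: expand $E$ in the Pauli basis, combine Parseval's identity $\sum_P|e_P|^2=1$ with $|e_{\id}|\geq 1-\epsilon^2$ to bound every off-identity coefficient by $\sqrt{2\epsilon^2-\epsilon^4}$, and finish with the triangle and reverse-triangle inequalities after observing that $PP_1$ is a unit phase times a non-identity Pauli whenever $P\neq P_1$. Your two verifications --- that the residual Pauli $P''$ is never $\id$, and that $|e_{\id}|\leq 1$ for the upper bound --- are exactly the right bookkeeping, and the specialization to Theorem \ref{thm:coeffEC} via Fact \ref{fact:cliffCoeff} matches the source as well.
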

So, we calculate $W'=W^{\dagger}\widetilde{U}$ and then calculate the set $\mathcal{S}_c=\left\{|\tr(W'P)/N|:P\in\pauli_n\right\}$, of coefficients. Then we check if we can distinguish the following two subsets $\mathcal{S}_0$ and $\mathcal{S}_1$, for some $1\leq M\leq N^2$. Further details have been given in Algorithm \ref{alg:decide}.
\begin{eqnarray}
 \mathcal{S}_1&=&\left\{\left|\tr\left(EC_0P_1\right)/N\right|:r_{P_1}\neq 0\right\}    \label{eqn:S1} \\
 \mathcal{S}_0&=&\left\{\left|\tr\left(EC_0P_1\right)/N\right|:r_{P_1}= 0\right\}    \label{eqn:S0} 
\end{eqnarray}
After passing this test we have a unitary of the form $E^{\dagger}Q$ where $Q$ is a unitary with \emph{equal} or \emph{nearly equal} amplitudes or coefficients (absolute value) at some points and zero or nearly zero at other points.

\textbf{II. Conjugation test : } 
To ensure $Q$ is a Clifford i.e. $W'=E^{\dagger}C_0$ for some $C_0\in\cliff_n$, we perform the \textbf{conjugation test} (Algorithm \ref{alg:conj}) for further verification. We use the following theorem and its corollary.
\begin{theorem}[\cite{2022_GMM2}]
Let $E,Q\in\mathcal{U}_n$ such that $d(E,\id)\leq\epsilon$. $P'\in\pauli_n\setminus\{\id\}$ such that $QP'Q^{\dagger}=\sum_{P}\alpha_{P}P$, where $\alpha_{P}\in\cmplx$. Then for each $P''\in\pauli_n$,
\begin{eqnarray} 
\min\{0,|\alpha_{P''}|(1-4\epsilon^2+2\epsilon^4) 
-2\epsilon\sum_{P\neq P''}|\alpha_P| \}
&\leq& \left|\tr\left((E^{\dagger}QP'Q^{\dagger}E)P''\right)\right|/N \nonumber \\
&\leq& \max\{ |\alpha_{P''}|+2\epsilon\sum_{P\neq P''}|\alpha_P|, 1\} 
\nonumber   
\end{eqnarray}

\label{thm:conj}
\end{theorem}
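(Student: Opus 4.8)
The plan is to reduce the statement to a linearity computation plus two entrywise estimates on the channel representation $\chan{E}$. First I would translate the hypothesis: since $d(E,\id)\le\epsilon$ means $|\tr(E)|\ge N(1-\epsilon^2)$, and since $QP'Q^{\dagger}=\sum_P\alpha_P P$, linearity of the trace gives
\begin{eqnarray}
\frac{1}{N}\tr\!\left((E^{\dagger}QP'Q^{\dagger}E)P''\right)=\sum_{P\in\pauli_n}\alpha_P\,\frac{1}{N}\tr\!\left(E^{\dagger}PEP''\right). \nonumber
\end{eqnarray}
The key observation I would exploit is that, by cyclicity of the trace, $\tfrac{1}{N}\tr(E^{\dagger}PEP'')=\tfrac{1}{N}\tr(PEP''E^{\dagger})=\chan{E}_{P,P''}$ is exactly an entry of the channel representation of $E$ (Section \ref{subsec:chanRep}), hence a \emph{real} number, and the matrix $\chan{E}$ is orthogonal so every row and column is a unit vector. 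Thus the quantity to be bounded is $\sum_P\alpha_P\,\chan{E}_{P,P''}$, and everything reduces to showing that $\chan{E}$ is close to the identity matrix, with the diagonal entry near $1$ and the off-diagonal entries of size $O(\epsilon)$.

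For the off-diagonal entries ($P\neq P''$) I would use Pauli orthonormality, $\tfrac{1}{N}\tr(PP'')=0$, to write $\chan{E}_{P,P''}=\tfrac{1}{N}\tr\!\big((E^{\dagger}PE-P)P''\big)$, and then apply Cauchy--Schwarz for the Hilbert--Schmidt inner product. Choosing the global phase so that $\tr(E)$ is real and maximal gives $\|E-e^{i\phi}\id\|_F^2=2N-2|\tr(E)|\le 2N\epsilon^2$, and since $E^{\dagger}PE-P=[P,E]$ up to a unitary factor one controls $\|E^{\dagger}PE-P\|_F$ by $\|[P,E]\|_F$. Tracking the constants through this phase-optimized estimate yields $|\chan{E}_{P,P''}|\le 2\epsilon$. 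For the diagonal entry I would instead use the composability of $d$: because $\tr(P''EP'')=\tr(E)$ we have $d(P''EP'',\id)=d(E,\id)\le\epsilon$ (conjugation invariance) and $d(E^{\dagger},\id)=d(E,\id)\le\epsilon$, so applying the refined composability inequality from \cite{2021_M} to the product $E^{\dagger}\,(P''EP'')$ gives $d(E^{\dagger}P''EP'',\id)\le 2\epsilon$ sharpened to second order, whence $\chan{E}_{P'',P''}=\tfrac{1}{N}\tr(E^{\dagger}P''EP'')\ge 1-4\epsilon^2+2\epsilon^4$ (and trivially $\le 1$).

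Finally I would assemble the bound by splitting off the $P=P''$ term and applying the triangle and reverse-triangle inequalities:
\begin{eqnarray}
\left|\sum_{P}\alpha_P\chan{E}_{P,P''}\right|&\le& |\alpha_{P''}|\,|\chan{E}_{P'',P''}|+\sum_{P\neq P''}|\alpha_P|\,|\chan{E}_{P,P''}|\le |\alpha_{P''}|+2\epsilon\!\!\sum_{P\neq P''}\!\!|\alpha_P|, \nonumber \\
\left|\sum_{P}\alpha_P\chan{E}_{P,P''}\right|&\ge& |\alpha_{P''}|\,|\chan{E}_{P'',P''}|-\sum_{P\neq P''}|\alpha_P|\,|\chan{E}_{P,P''}|\ge |\alpha_{P''}|(1-4\epsilon^2+2\epsilon^4)-2\epsilon\!\!\sum_{P\neq P''}\!\!|\alpha_P|. \nonumber
\end{eqnarray}
The operator $E^{\dagger}QP'Q^{\dagger}E$ is unitary, so $\tfrac{1}{N}|\tr(\cdot\,P'')|$ always lies in $[0,1]$; intersecting the two displayed estimates with this trivial range produces precisely the $\min\{0,\cdot\}$ lower form and the $\max\{\cdot,1\}$ upper form in the statement. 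I expect the main obstacle to be pinning down the exact $\epsilon$-order constants, in particular deriving the clean off-diagonal coefficient $2\epsilon$ and the diagonal factor $1-4\epsilon^2+2\epsilon^4$, since the naive Hilbert--Schmidt/Cauchy--Schwarz estimates tend to overshoot by a factor of $\sqrt{2}$; getting the tight constants requires careful phase bookkeeping together with the sharpened composability inequality of \cite{2021_M}, rather than the crude triangle inequality.
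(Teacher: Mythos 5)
The paper does not actually prove this theorem---it is imported verbatim from \cite{2022_GMM2}---so your proposal must be judged against the source's argument, which expands $E=\sum_{P}e_PP$ in the Pauli basis with $|e_{\id}|=|\tr(E)|/N\geq 1-\epsilon^2$ and $\sum_{P\neq\id}|e_P|^2\leq 2\epsilon^2-\epsilon^4$ (Fact \ref{app:fact:trCoeff}) and counts coefficients. Your reduction is the same in different clothing: by linearity and cyclicity the quantity is $\sum_P\alpha_P\chan{E}_{P,P''}$, and your diagonal estimate is correct and matches the source, since $\chan{E}_{P'',P''}=\sum_{P:[P,P'']=0}|e_P|^2-\sum_{P:\{P,P''\}=0}|e_P|^2\geq 2|e_{\id}|^2-1\geq 2(1-\epsilon^2)^2-1=1-4\epsilon^2+2\epsilon^4$, which is exactly the second-order-sharp value your angle-doubling composability argument produces. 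Up to this point the plan is sound.

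The genuine gap is the off-diagonal claim $|\chan{E}_{P,P''}|\leq 2\epsilon$ for $P\neq P''$. You concede your Cauchy--Schwarz gives $2\sqrt{2}\epsilon$ and promise that ``careful phase bookkeeping'' recovers $2\epsilon$; it cannot, because the $2\epsilon$ entrywise bound is false. Take $n=1$ and $E=\cos\theta\,\id+i\sin\theta\,\Y$ with $\cos\theta=1-\epsilon^2$, so $d(E,\id)=\epsilon$ exactly; then $E\Z E^{\dagger}=\cos 2\theta\,\Z-\sin 2\theta\,\X$, so the off-diagonal channel entry has magnitude $\sin 2\theta=2\epsilon(1-\epsilon^2)\sqrt{2-\epsilon^2}\approx 2\sqrt{2}\,\epsilon>2\epsilon$ for small $\epsilon$ (at $\epsilon=0.1$ it is $0.279>0.2$). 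This is the sharp constant, so your route proves the theorem only with $2\epsilon(1-\epsilon^2)\sqrt{2-\epsilon^2}\leq 2\sqrt{2}\epsilon$ in place of $2\epsilon$. Relatedly, your closing step misreads the clamps: as printed, the bounds $\min\{0,x\}\leq\cdot\leq\max\{y,1\}$ are satisfied for trivial reasons (the middle quantity always lies in $[0,1]$, while $\min\{0,x\}\leq 0$ and $\max\{y,1\}\geq 1$), whereas ``intersecting with the trivial range,'' as you describe, would yield $\max\{0,x\}$ and $\min\{y,1\}$---and that stronger version with coefficient $2\epsilon$ is genuinely false: the same $E$ with $QP'Q^{\dagger}=(\X+\Z)/\sqrt{2}$ (realizable, e.g.\ $P'=\Z$) and $P''=\X$ gives $|\tr(E^{\dagger}QP'Q^{\dagger}EP'')|/N=(\cos 2\theta-\sin 2\theta)/\sqrt{2}\approx 0.481$ at $\epsilon=0.1$, below the unclamped lower bound $\approx 0.538$. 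The fix is to carry the honest constant $2\sqrt{2}\epsilon$ through your final triangle/reverse-triangle assembly (which is otherwise fine); this weakens nothing essential downstream, since the amplitude and conjugation tests only need thresholds rescaled by $\sqrt{2}$.
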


\begin{corollary}[\cite{2022_GMM2}]
Let $C_0\in\cliff_n$ and $P'\in\pauli_n$ such that $C_0P'C_0^{\dagger}=\tilde{P}\in\pauli_n$. If $E\in\mathcal{U}_n$ such that $d(E,\id)\leq\epsilon$, then
\begin{eqnarray}
 (1-4\epsilon^2+2\epsilon^4) &\leq&\left|\tr\left((E^{\dagger}C_0P'C_0^{\dagger}E)P''\right)\right|/N\leq 1 \qquad \text{if }  P''=\tilde{P}  \nonumber   \\
0&\leq&\left|\tr\left((E^{\dagger}C_0P'C_0^{\dagger}E)P''\right)\right|/N\leq  2\epsilon  \qquad \text{else. }   \nonumber
\end{eqnarray}
 \label{cor:conjCliff}
\end{corollary}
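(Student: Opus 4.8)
The plan is to obtain the corollary as a direct specialization of Theorem \ref{thm:conj} to the case $Q=C_0\in\cliff_n$, the whole content being carried by the defining property of Clifford operators that they conjugate Paulis to (signed) Paulis.

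First I would invoke the hypothesis $C_0P'C_0^{\dagger}=\tilde{P}\in\pauli_n$. In the notation of Theorem \ref{thm:conj}, writing $QP'Q^{\dagger}=\sum_P\alpha_PP$ with $Q=C_0$, this says the expansion collapses to a single term: $\alpha_{\tilde{P}}=\pm 1$ and $\alpha_P=0$ for all $P\neq\tilde{P}$. Consequently $|\alpha_{\tilde{P}}|=1$ and $\sum_{P\neq\tilde{P}}|\alpha_P|=0$, which is exactly the simplification that makes the error sums in Theorem \ref{thm:conj} either vanish or equal precisely $1$.

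Next I would split into the two cases of the corollary and substitute these coefficient values. For $P''=\tilde{P}$ we have $|\alpha_{P''}|=1$ and $\sum_{P\neq P''}|\alpha_P|=0$, so the lower bound reduces to $1\cdot(1-4\epsilon^2+2\epsilon^4)-0=1-4\epsilon^2+2\epsilon^4$ and the upper bound reduces to $1$, giving the first line. For $P''\neq\tilde{P}$ we instead have $|\alpha_{P''}|=0$ while $\sum_{P\neq P''}|\alpha_P|=|\alpha_{\tilde{P}}|=1$, so the lower bound is $0$ and the upper bound is $0+2\epsilon\cdot 1=2\epsilon$, giving the second line.

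The only point requiring care, and where I expect the sole (minor) friction, is the interpretation of the two endpoints: the quantity being bounded, $\left|\tr\left((E^{\dagger}C_0P'C_0^{\dagger}E)P''\right)\right|/N$, is the absolute value of a Pauli coefficient of a unitary and hence automatically lies in $[0,1]$, so the lower expression should be read as its nonnegative part and the upper expression as its value capped at $1$. With this reading the substitution is immediate and there is no genuine obstacle; any global phase of $E$ is moreover killed by the conjugation $E^{\dagger}\tilde{P}E$, so the bound is consistent with $d(E,\id)\leq\epsilon$ being global-phase invariant, and the result follows at once from Theorem \ref{thm:conj}.
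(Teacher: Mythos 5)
Your proof is correct and matches the paper's (implicit) derivation: the corollary is obtained exactly as you describe, by specializing Theorem \ref{thm:conj} to $Q=C_0$, where the hypothesis $C_0P'C_0^{\dagger}=\tilde{P}\in\pauli_n$ forces $|\alpha_{\tilde{P}}|=1$ and all other coefficients to vanish, after which the two cases $P''=\tilde{P}$ and $P''\neq\tilde{P}$ give the stated bounds by direct substitution. Your observation that the bounds must be read as clipped to $[0,1]$ (this is the role of the $\min$/$\max$ in the statement of Theorem \ref{thm:conj}) is also the intended interpretation, so nothing is missing.
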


The above theorem and corollary basically says that $EC_0$ (or $E^{\dagger}C_0$) \emph{approximately} inherits the conjugation property of $C_0$. For each $P'\in\pauli_n$, if we expand $C_0P'C_0^{\dagger}$ in the Pauli basis then the absolute value of the coefficients has value 1 at one point, 0 in the rest. If we expand $EC_0P'C_0^{\dagger}E^{\dagger}$ in the Pauli basis then one of the coefficients (absolute value) will be almost 1 and the rest will be almost 0. From Theorem \ref{thm:conj} this pattern will not show for at least one Pauli $P'''\in\pauli_n$ if we have $E^{\dagger}Q$, where $Q\notin\cliff_n$. If we expand $EQP'''Q^{\dagger}E^{\dagger}$ or $E^{\dagger}QP'''Q^{\dagger}E$ in the Pauli basis then the \emph{spike in the amplitudes} will be in at least two points. 

\subsubsection{Synthesizing Toffoli-count-optimal circuits}
\label{subsec:cktSynth}

From APPROX-TOF-DECIDE (Algorithm \ref{alg:decide}) we can obtain a sequence $\{U_m,\ldots,U_1\}$ of unitaries such that $U=\left(\prod_{j=m}^1U_j\right)C_0e^{i\phi}$, where $U_j = G_{P_{1_j},P_{2_j},P_{3_j}}\in\gen_{Tof}$, $m = \tofcount(U) = \tofeps(W)$ and $C_0\in\cliff_n$. We can efficiently construct circuits for each $U_j\in\gen_{Tof}$, as discussed earlier. In order to determine the trailing $C_0$, we observe that from Algorithm \ref{alg:decide} we can also obtain the following information : (1) set $\mathcal{S}_1$, as defined in Equation \ref{eqn:S1}, (2) $M=|\mathcal{S}_1|$. Thus we can calculate $r=\frac{1}{\sqrt{M}}$ and from step \ref{decide:Sc} we can actually calculate the set $\widetilde{\mathcal{S}_1}=\left\{(t_P,P): |t_P|=|\tr(W'P)/N|\in\mathcal{S}_1\right\}$, where $W'=W^{\dagger}\widetilde{U} = e^{-i\phi}E^{\dagger}C_0^{\dagger}$. We perform the following steps.
\begin{enumerate}
\item Calculate $a_P=\frac{t_P}{t_{\id}}=\frac{\tr(W'P)}{\tr(W')}$, where $(t_P,P)\in\widetilde{\mathcal{S}_1}$ (or equivalently $|t_P|\in\mathcal{S}_1$). We must remember that $(t_{\id},\id)\in\widetilde{\mathcal{S}_1}$.  
 
 It has been argued in \cite{2022_GMM2} that for small enough $\epsilon$ (say $\ll\frac{1}{M}$), $a_P\approx\frac{\overline{r_P}}{\overline{r_{\id}}}$. From Fact \ref{fact:cliffCoeff} we know that $\frac{|\overline{r_P}|}{|\overline{r_{\id}}|}=\frac{|r_P|}{|r_{\id}|}=1$. So we adjust the fractions such that their absolute value is $1$. For small enough $\epsilon$ this adjustment is not much and so with a slight abuse we use the same notation for the adjusted values.
 
 \item Select $c,d\in\real$ such that $c^2+d^2=r^2$. Let $\widetilde{r_{\id}}=c+di$. Then we claim that the Clifford $\widetilde{C_0}=\widetilde{r_{\id}}\sum_{P:r_P\neq 0}\overline{a_P}P$ is sufficient for our purpose. 
\end{enumerate}
It is not hard to see that $\widetilde{C_0}=e^{i\phi'}C_0$ for some $\phi'\in [0,2\pi)$. Thus if $U'=\left(\prod_{j=m}^1U_j\right)\widetilde{C_0}$, then $\tofcount(U')=\tofcount(U)$ and $d(U',W)\leq\epsilon$. After determining the Clifford, we can efficiently construct a circuit for it, for example, by using results in \cite{2004_AG}.

\subsubsection{Complexity} 

 Now, we analyse the time and space complexity of the above algorithm.

\subsubsection*{Time complexity}

We first analyse the time complexity of $\mathcal{A}_{CONJ}$. The outer and inner loop at steps 2 and 6, respectively, can run at most $2^{2n}* 2^{2n}= 2^{4n}$ times, where $N=2^n$. The complexity of the inner loop is dominated by the multiplications at step 13, which takes $O(2^{2n})$ time. Thus overall complexity of this sub-routine is $O(2^{6n})$.

Now we analyse the time complexity of APPROX-TOF-DECIDE, when testing for a particular Toffoli-count $m$. The algorithm loops over all possible products of $m$ unitaries $G_{P_{1_j},P_{2_j},P_{3_j}} \in\gen_{Tof}$. Thus number of iterations (steps 2-14) is at most $|\gen_{Tof}|^m \in O\left( n^{2m} 64^{(n-2)m} \right) \in O\left( n^{2m} 2^{(6n-12)m} \right)   $, from Theorem \ref{thm:genTOFsize} and this is the most expensive part of this algorithm. The $m$ multiplications at step 2 and the multiplications, trace and sorting at step 4 takes $O(m2^{2n}+2^{4n})$ time. The inner loop 5-13 happens $O(2^{2n})$ times. Each of the list elements is checked and so step 8 has complexity $O(2^{2n})$. Now let within the inner loop the conjugation test is called $k_1$ times. So the loop 5-13 incurs a complexity $O(k_1\cdot 2^{6n}+(2^{2n}-k_1)2^{2n})$, when $k_1>0$, else it is $O(2^{4n})$. Let $k$ is the number of outer loops (steps 2-14) for which conjugation test is invoked in the inner loop 5-13 and $k_1$ is the maximum number of times this test is called within any 5-13 inner loop. Then the overall complexity of APPROX-TOF-DECIDE is $O((n^{2m}2^{(6n-12)m}-k)\cdot (m2^{2n}+2^{4n})+k\cdot (m2^{2n}+2^{4n}+k_12^{6n}+(2^{2n}-k_1)2^{2n} ) )$. 

The conjugation test is invoked only if a unitary passes the amplitude test. We assume that $m<2^{2n}$ and the occurrence of non-Clifford unitaries with equal amplitude is not so frequent such that $kk_1< n^{2m}2^{(6n-12)m}-k$. Thus APPROX-TOF-DECIDE has a complexity of $O(n^{2m}2^{(6n-12)m})$, for one particular value of $m$.  Hence, the overall algorithm APPROX-TOF-OPT has a time complexity $ O\left(n^{2\tofeps(W)}  2^{(6n-12)\tofeps(W)}\right) $.

\subsection*{Space complexity}

The input to our algorithm is a $2^n\times 2^n$ unitary, with space complexity $O(2^{2n})$. We need to store $\gen_{Tof}$, either as triples of Paulis or as matrices. It takes space $O\left(|\gen_{Tof}| \right) \in O\left(n^22^{6n-12}\right)$ (Theorem \ref{thm:genTOFsize}).  All the operations in APPROX-TOF-DECIDE or $\mathcal{A}_{CONJ}$ either involve matrix multiplications or storing at most $2^{2n}$ numbers. Hence the overall space complexity is $O\left(n^22^{6n-12}\right)$.

\subsection{Toffoli-count-optimal synthesis algorithms for exactly implementable unitaries}
\label{subsec:algoExact}

In this section we describe two algorithms for synthesizing Toffoli-count-optimal circuits of exactly implementable unitaries. For the first one we can prove a rigorous bound on the optimality as well as space and time complexity, both of whom are exponential in the Toffoli-count. The exponentially better second algorithm has both space and time complexity polynomial in the Toffoli-count, but the claimed optimality as well as complexity depends on some conjectures. 

\subsubsection{Nested MITM}
\label{subsubsec:nestMITM}

In this section we describe the nested meet-in-the-middle (MITM) framework, introduced in \cite{2021_MM}, that can be used to synthesize Toffoli-count-optimal circuits. It is based on the following result that can be proved with similar arguments as in Lemma 3.1 of \cite{2022_GMM}.
\begin{lemma}
Let $S_i\subset U(2^n)$ be the set of all unitaries implementable in Toffoli-count $i$ over the Clifford+Toffoli gate set. Given a unitary $U$, there exists a Clifford+Toffoli circuit of Toffoli-count $(m_1+m_2)$ implementing $U$ if and only if $S_{m_1}^{\dagger}U\bigcap S_{m_2}\neq\emptyset$.
 \label{lem:provNest}
\end{lemma}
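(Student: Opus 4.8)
The plan is to prove both directions by a direct circuit splitting/concatenation argument, so the only real content is two closure properties of the sets $S_i$. First I would record these. Since the Toffoli gate is self-inverse and $\cliff_n$ is closed under inverse, the inverse of any Clifford+Toffoli circuit with $i$ Toffoli gates is itself such a circuit (reverse the gate order, invert each gate), so $S_i^{\dagger}=S_i$. Moreover, since implementability is only required up to a global phase, each $S_i$ is closed under multiplication by a scalar phase $e^{i\theta}$. With these in hand, the condition $S_{m_1}^{\dagger}U\cap S_{m_2}\neq\emptyset$ is equivalent to the existence of $W\in S_{m_1}$ and $V\in S_{m_2}$ with $W^{\dagger}U=V$, i.e. $U=WV$.

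For the forward direction ($\Rightarrow$), I would take a Clifford+Toffoli circuit of Toffoli-count $m_1+m_2$ realizing $U$ and, exactly as in Theorem \ref{thm:decompose}, write it as a word
\begin{eqnarray}
U=e^{i\phi}\,C_{m_1+m_2}(\tof)C_{m_1+m_2-1}(\tof)\cdots C_1(\tof)C_0,\qquad C_j\in\cliff_n,\ \phi\in[0,2\pi). \nonumber
\end{eqnarray}
Cutting this word immediately after the $m_2$-th Toffoli counted from the right ($C_0$ end) produces a right block $V$ containing exactly $m_2$ Toffoli gates and a left block $W$ containing exactly $m_1$ Toffoli gates, with $U=e^{i\phi}WV$. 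Absorbing the phase $e^{i\phi}$ into $W$ (permissible by phase-closure of $S_{m_1}$) gives $W\in S_{m_1}$ and $V\in S_{m_2}$ with $U=WV$, hence $W^{\dagger}U=V\in S_{m_1}^{\dagger}U\cap S_{m_2}$.

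For the reverse direction ($\Leftarrow$), I would start from a common element: there exist $W\in S_{m_1}$, $V\in S_{m_2}$ with $W^{\dagger}U=V$, so $U=WV$. Concatenating a circuit for $W$ (with $m_1$ Toffolis) and a circuit for $V$ (with $m_2$ Toffolis) yields a Clifford+Toffoli circuit of Toffoli-count $m_1+m_2$ implementing $U$ (up to global phase), as required. The argument carries over verbatim to channel representations via Fact \ref{fact:decompose}, where the phase bookkeeping disappears entirely, which is in fact the form used by the algorithm. There is no substantial obstacle here; the only points demanding care are the closure claims $S_i^{\dagger}=S_i$ and phase-closure, and the observation that the split blocks carry \emph{exactly} $m_1$ and $m_2$ Toffolis by construction, so they genuinely lie in $S_{m_1}$ and $S_{m_2}$ rather than in sets of smaller count.
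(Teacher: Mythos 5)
Your proof is correct and takes essentially the same route the paper intends: the paper gives no proof of Lemma \ref{lem:provNest} at all, saying only that it ``can be proved with similar arguments as in Lemma 3.1 of \cite{2022_GMM}'', and that argument is exactly your splitting/concatenation of the alternating word $e^{i\phi}C_{m_1+m_2}(\tof)\cdots C_1(\tof)C_0$ after the $m_2$-th Toffoli from the right. The two details you flag explicitly --- closure of $S_i$ under inverses and global phases (valid since the paper defines implementability up to global phase, $\tof^{\dagger}=\tof$, and $\cliff_n$ is inverse-closed), and the fact that the split blocks contain exactly $m_1$ and $m_2$ Toffolis --- are precisely what makes both directions of the equivalence go through.
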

The pseudocode of the algorithm has been given in Appendix \ref{app:pseudocode} (Algorithm \ref{app:alg:nestMITM}). We briefly describe this algorithm here. The input consists of the unitary $U$, generating set $\gen_{Tof}$, test Toffoli-count $m$ and $c\geq 2$ that indicates the extent of nesting or recursion we want in our meet-in-the-middle approach. If $U$ is of Toffoli-count at most $m$ then the output consists of a decomposition of $U$ into smaller Toffoli-count unitaries, else the algorithm indicates that $U$ has Toffoli-count more than $m$. 

The algorithm consists of $\left\lceil\frac{m}{c}\right\rceil$ iterations and in the $i^{th}$ such iteration we generate circuits of depth $i$ ($S_{i}$) by extending the circuits of depth $i-1$ ($S_{i-1}$) by one more level. Then we use these two sets to search for circuits of depth at most $ci$. In order to increase the efficiency of the search procedure we define the following. Let $\chan{\clifft_n^{Tof}} = \{\chan{W} : W\in\clifft_n^{Tof} \}$.
\begin{definition}[\textbf{Coset label} \cite{2014_GKMR}]
Let $\chan{W}\in\chan{\clifft_n^{Tof}}$. Its coset label $\chan{W}^{(co)}$ is the matrix obtained by the following procedure.
(1) Rewrite $\chan{W}$ so that each nonzero entry has a common denominator, equal to $\sqrt{2}^{\sde(\chan{W})}$. (2) For each column of $\chan{W}$, look at the first non-zero entry (from top to bottom) which we write as $v=\frac{a}{2^{\sde(\chan{W})}}$. If $a<0$, or if $a=0$ and $b<0$, multiply every element of the column by $-1$. Otherwise, if $a>0$, or $a=0$ and $b>0$, do nothing and move on to the next column. (3) After performing this step on all columns, permute the columns so that they are ordered lexicographically from left to right.
 \label{defn:cosetLabel}
\end{definition}
The following result shows that this is a kind of equivalence function. 
\begin{theorem}[\textbf{Proposition 2 in \cite{2014_GKMR}}]
Let $\chan{W},\chan{V}\in\chan{\clifft_n^{Tof}}$. Then $\chan{W}^{(co)}=\chan{V}^{(co)}$ if and only if $\chan{W}=\chan{V}\chan{C}$ for some $C\in\cliff_n$.
 \label{thm:cosetLabel}
\end{theorem}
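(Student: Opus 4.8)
The plan is to recognize the coset label of Definition \ref{defn:cosetLabel} as a canonical form for the orbits of $\chan{\clifft_n^{Tof}}$ under right multiplication by channel representations of Cliffords, and then to prove both implications by showing that (i) such a right multiplication acts on $\chan{W}$ purely by column sign changes and column permutations, and (ii) the normalization-and-sorting procedure defining the coset label is exactly invariant under, and complete for, this group of column operations. The central tool throughout is Fact \ref{fact:chanRepCliff}, which identifies $\chan{\cliff_n}$ with the set of channel representations that are signed permutation matrices, together with the multiplicativity $\chan{UW}=\chan{U}\chan{W}$ and the relation $\chan{U^\dagger}=(\chan{U})^\dagger$ recorded earlier.

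For the forward (``if'') direction, suppose $\chan{W}=\chan{V}\chan{C}$ with $C\in\cliff_n$. By Fact \ref{fact:chanRepCliff}, $\chan{C}$ is a signed permutation matrix, so right multiplication by $\chan{C}$ permutes the columns of $\chan{V}$ and rescales some of them by $-1$. First I would note that this leaves the multiset of entries of the matrix unchanged, hence $\sde(\chan{W})=\sde(\chan{V})$ and step (1) of the coset-label procedure writes both matrices over the same denominator $\sqrt2^{\sde}$. Next I would check that the sign-normalization of step (2) is insensitive to the overall sign of a column: the first nonzero entry of a sign-flipped column lies in the same row and is merely negated, so after normalization the column is identical to the normalized version of its counterpart in $\chan{V}$. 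Finally, since the two matrices then have identical column multisets, the lexicographic reordering of step (3) returns identical matrices, giving $\chan{W}^{(co)}=\chan{V}^{(co)}$.

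For the reverse (``only if'') direction, I would encode the coset-label algorithm as right multiplication by a (data-dependent) signed permutation matrix: $\chan{W}^{(co)}=\chan{W}M_W$ and $\chan{V}^{(co)}=\chan{V}M_V$, where each $M$ is the product of the diagonal $\pm1$ matrix of step (2) and the permutation matrix of step (3). Assuming $\chan{W}^{(co)}=\chan{V}^{(co)}$ then yields $\chan{W}=\chan{V}\,M_V M_W^{-1}$, so $M_V M_W^{-1}=\chan{V}^\dagger\chan{W}=\chan{V^\dagger W}$ is at once a signed permutation matrix and the channel representation of the unitary $V^\dagger W$. Fact \ref{fact:chanRepCliff} then forces $V^\dagger W$ to be a Clifford $C$, whence $M_V M_W^{-1}=\chan{C}$ and $\chan{W}=\chan{V}\chan{C}$, as required.

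The main obstacle is not any single inequality but establishing rigorously that the three-step procedure is a genuine canonical form, i.e. that equal outputs imply the inputs lie in the same column-operation orbit; this completeness is exactly what makes the signed permutation $M_V M_W^{-1}$ appear in the reverse direction. I would also dispatch one book-keeping point special to the Clifford+Toffoli setting: by Lemma \ref{lem:chanRepRing} the entries live in $\intg\left[\frac{1}{2}\right]\subset\real$, so the first nonzero entry of each column is $v=a/2^{\sde}$ with $a\neq0$ real, and the sign rule of step (2) reduces to flipping precisely when $a<0$ (the auxiliary ``$b$'' clause inherited from the Clifford+T formulation of \cite{2014_GKMR} never triggers). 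With this simplification the normalization is manifestly well defined, and the argument is structurally identical to Proposition 2 of \cite{2014_GKMR}.
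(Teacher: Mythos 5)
Your proof is correct, and it is essentially the standard argument: the paper does not prove this statement itself but imports it verbatim as Proposition~2 of \cite{2014_GKMR}, whose proof likewise observes that right multiplication by $\chan{C}$ is, via Fact~\ref{fact:chanRepCliff}, a signed column permutation under which the three-step normalization is invariant, and conversely encodes the coset label as $\chan{W}^{(co)}=\chan{W}M_W$ with $M_W$ a signed permutation matrix, so that equal labels force $\chan{V}^{\dagger}\chan{W}=M_V M_W^{-1}=\chan{V^{\dagger}W}$ to be a signed permutation and hence the channel representation of a Clifford (note that this algebraic encoding already delivers the ``completeness'' you flagged as the main obstacle, so no separate lemma is needed). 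Your bookkeeping remark that over $\intg\left[\frac{1}{2}\right]$ (Lemma~\ref{lem:chanRepRing}) the first nonzero entry is a real $a/2^{k}$ and the ``$b$'' clause of Definition~\ref{defn:cosetLabel} inherited from the $\intg[1/\sqrt{2}]$ setting never triggers is a correct and worthwhile clarification of the definition as transplanted from Clifford+T.
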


In our algorithm the search is performed iteratively where in the $k^{th}$ ($1\leq k\leq c-1$) round we generate unitaries of Toffoli-count at most $ki$ by taking $k$ unitaries $\chan{W_1},\chan{W_2},\ldots,\chan{W_k}$ where $\chan{W_i}\in S_i$ or $\chan{W_i}\in S_{i-1}$. Let $\chan{W}=\chan{W_1}\chan{W_2}\ldots \chan{W_k}$ and its Toffoli-count is $k'\leq ki$. We search for a unitary $\chan{W'}$ in $S_i$ or $S_{i-1}$ such that $\left(\chan{W}^{\dagger}\chan{U}\right)^{(co)}=\chan{W'}$. By Lemma \ref{lem:provNest} if we find such a unitary it would imply that Toffoli-count of $U$ is $k'+i$ or $k'+i-1$ respectively. In the other direction if the Toffoli-count of $U$ is either $k'+i$ or $k'+i-1$ then there should exist such a unitary $\chan{W'}$ in $S_i$ or $S_{i-1}$ respectively. Thus if the Toffoli-count of $U$ is at most $m$ then the algorithm terminates in one such iteration and returns a decomposition of $U$. This proves the \textbf{correctness} of this algorithm.

\paragraph{Time and space complexity}

With the concept of coset label we can impose a strict lexicographic ordering on unitaries such that a set $S_i$ can be sorted with respect to this ordering in $O\left(|S_i|\log |S_i|\right)$ time and we can search for an element in this set in $O\left(\log |S_i|\right)$ time. Now consider the $k^{th}$ round of the $i^{th}$ iteration (steps \ref{nestMITM:whileStart}-\ref{nestMITM:whileEnd} of Algorithm \ref{app:alg:nestMITM} in Appendix \ref{app:pseudocode}). We build unitaries $\chan{W}$ of Toffoli-count at most $ki$ using elements from $S_i$ or $S_{i-1}$. Number of such unitaries is at most $|S_i|^{k}$. Given a $\chan{W}$, time taken to search for $W'$ in $S_i$ or $S_{i-1}$ such that $\left(\chan{W}^{\dagger}\chan{U}\right)^{(co)}=\chan{W'}$ is $O\left(\log |S_i|\right)$. Since $|S_j|\leq |\gen_{Tof}|^j$, so the $k^{th}$ iteration of the for loop within the $i^{th}$ iteration of the while loop, takes time $O\left(|\gen_{Tof}|^{(c-1)i}\log |\gen_{Tof}| \right)$. Thus the time taken by the algorithm is $O\left(|\gen_{Tof}|^{(c-1)\left\lceil\frac{m}{c}\right\rceil} \log |\gen_{Tof}| \right)$. 

In the algorithm we store unitaries of depth at most $\left\lceil\frac{m}{c}\right\rceil$. So the space complexity of the algorithm is $O\left(|\gen_{Tof}|^{\left\lceil\frac{m}{c}\right\rceil} \right)$. Since $|\gen_{Tof}|\in O\left(n^22^{6n-12} \right)$ (Theorem \ref{thm:genTOFsize}), so we have an algorithm with space complexity $O\left(n^{2\lceil\frac{m}{c}\rceil}2^{(6n-12)\left\lceil\frac{m}{c}\right\rceil} \right)$ and time complexity $O\left(n^{2(c-1)\lceil\frac{m}{c}\rceil}2^{(6n-12)(c-1)\left\lceil\frac{m}{c}\right\rceil} \right)$. 

\subsubsection{A polynomial complexity heuristic algorithm}
\label{subsubsec:heuristic}

In this section we describe a heuristic algorithm that returns a Toffoli-count-optimal circuit with time and space complexity that are polynomial in the Toffoli-count. Our algorithm is motivated by the polynomial complexity T-count-optimal-synthesis algorithm in \cite{2021_MM}. The conjectured exponential advantage comes from an efficient way of pruning the search space. The intermediate unitaries are grouped according to some properties and then one such group is selected according to some criteria. 

In the previous provable nested meet-in-the-middle algorithm we search for a set of $\chan{G_{P_{1_j},P_{2_j},P_{3_j}}}$ such that 
$\chan{U}^{\dag}\prod_{j=\tofcount(U)}^1\chan{G_{P_{1_j},P_{2_j},P_{3_j}} }$ is $\chan{C_0}$ for some $C_0\in\cliff_n$. Alternatively, we can also search for a set of $\chan{G_{P_{1_j},P_{2_j},P_{3_j}} }^{-1}$ such that $\left(\prod_{j=1}^{\tofcount(U)}\chan{G_{P_{1_j},P_{2_j},P_{3_j}} }^{-1}\right)\chan{U}$ is $\chan{C_0}$ for some $C_0\in\cliff_n$, which is the approach taken by our heuristic algorithm. The optimization version EXACT-TOF-OPT (Algorithm \ref{alg:TOFcountOpt} in Appendix \ref{app:pseudocode}) is solved by repeatedly calling EXACT-TOF-DECIDE (Algorithm \ref{alg:TOFcountDecide} in Appendix \ref{app:pseudocode}), the decision version, which tests whether the Toffoli-count of an input unitary is at most an integer $m$ . 

It will be useful to compare the procedure EXACT-TOF-DECIDE with building a tree, whose nodes store some unitary and the edges represent $\chan{G_{P_1,P_2,P_3} }^{-1}$. Thus the unitary in a child node is obtained by multiplying the unitary of the parent node with the $\chan{G_{P_1,P_2,P_3} }^{-1}$ represented by the edge. Number of children nodes of any parent node is at most $|\gen_{Tof}|-1$. The root stores $\chan{U}$ and we assume it is at depth 0 (Figure \ref{fig:heuristic}). We stop building this tree the moment we reach a node which stores $\chan{C_0}$ for some $C_0\in\cliff_n$. This implies that the path from the root to this leaf gives a decomposition of $\chan{U}$. In these kinds of searches the size of the tree is one of the important factors that determine the complexity of the algorithm. To reduce the complexity we prune this tree (Figure \ref{fig:heuristic}).

At each level we group the nodes according to some ``properties'' or ``parameters'' of the unitaries stored in them. We hope that these parameters will ``distinguish'' the ``correct'' nodes at each level or depth of the tree and thus we would get a decomposition. Note there can be more than one decomposition of $\chan{U}$ with the same or different Toffoli-count. By ``correct'' nodes we mean those that occur in a Toffoli-count-optimal decomposition  of $\chan{U}$. If the parameters always select only the correct nodes then we expect to get much fewer nodes at each level of the tree and the number of levels we have to build is $\tofcount(U)$. But the parameters we selected did not always distinguish the correct nodes and there were some false positives. In order for the algorithm to succeed we have to be careful so that we do not lose all correct nodes at any level and to make it efficient we have to ensure that the number of false positives are not too large and they eventually get eliminated.

We selected two parameters - sde and Hamming weight of the unitaries. We know from Lemma \ref{lem:sdeChangeMat} that sde of a child node unitary can differ by at most 1 from its parent node unitary. While building a unitary we start with $\chan{\id_n}$ and multiply by subsequent $\chan{G_{P_{1_j},P_{2_j},P_{3_j}} }$ till we reach $\chan{U}$. We have observed that in most of these multiplications the sde increases by 1 and the Hamming weight also gradually increases until it (Hamming weight) reaches the maximum. So while doing the inverse operations i.e. decomposing $\chan{U}$ we expect that in most of the steps sde will decrease by 1 and as we get close to the identity, the Hamming weight will also likely decrease. If we multiply by a ``wrong'' $\chan{G_{P_{1_j},P_{2_j},P_{3_j}} }^{-1}$ we expect to see the same changes with much less probability, which is the probability of the false positives. This helps us distinguish the ``correct'' and ``wrong'' nodes.

Specifically, at each level we divide the set $S$ of nodes into some subsets and select one of them. Below are two possible ways to divide the nodes that we have found effective. Suppose in one instance of EXACT-TOF-DECIDE, $m$ is the target depth (maximum depth of the tree to be built) and we have built the tree till depth $i$.

\textbf{Divide-and-Select Rules :}
\begin{enumerate}
 \item[A] We divide into two sets - $S_0$ (sde increase) and $S_1$ (sde non-increase). We select the set with the minimum cardinality such that the sde of the unitaries in this set is at most $m-i$.  
 
 
 \item[B.] We divide into 9 sets - $S_{00}$ (both sde and Hamming weight increase), $S_{01}$ (sde increase but Hamming weight decrease), $S_{02}$ (sde increase but Hamming weight same), $S_{10}$ (sde decrease, Hamming weight increase) and $S_{11}$ (both sde and Hamming weight decrease), $S_{12}$ (sde decrease but Hamming weight same), $S_{20}$ (sde same but Hamming weight increase), $S_{21}$ (sde same but Hamming weight decrease), $S_{22}$ (both sde and Hamming weight same). We select the set with the minimum cardinality such that the sde of the unitaries in this set is at most $m-i$.
\end{enumerate}
We follow any one of the above methods of divide-and-select throughout the algorithm. We note that in each of the above methods, if the cardinality of the unitaries in the selected set is more than $m-i$ then it implies we cannot get sde $0$ nodes within the next few levels.

\begin{figure}[h]
\centering
\includegraphics[width=7.5cm, height=3cm]{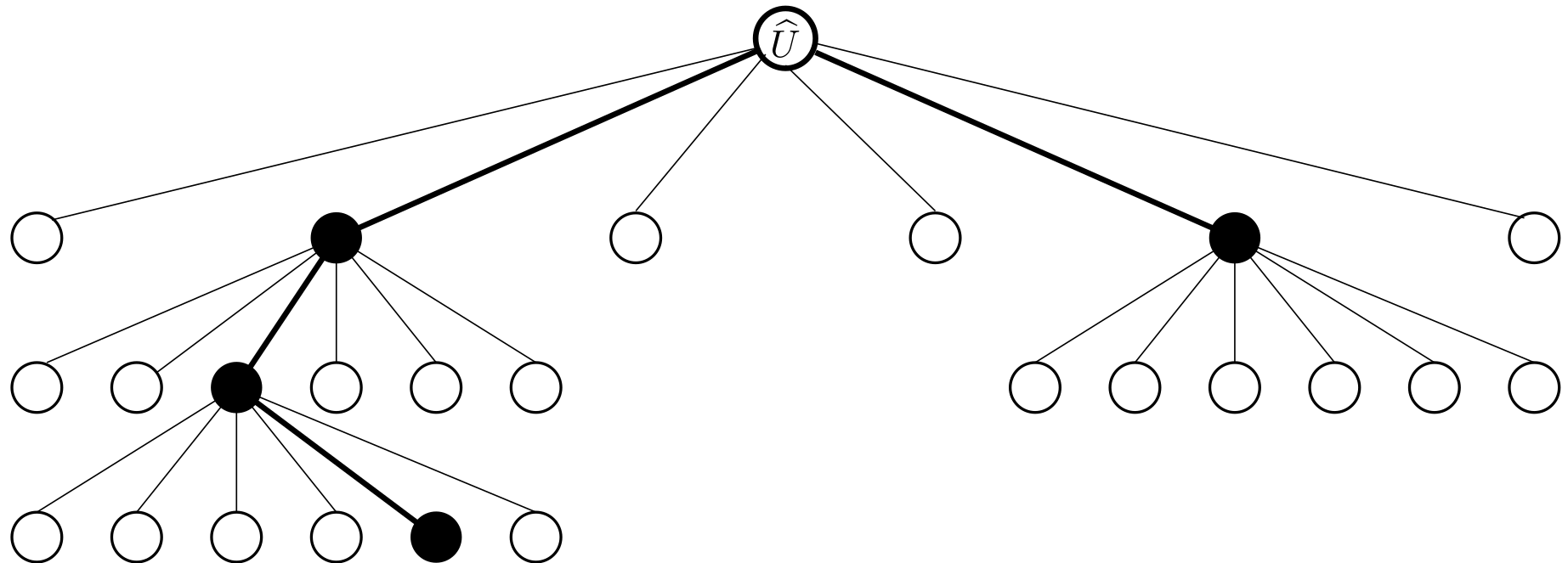}
\caption{The tree built in the heuristic procedure. At each level we select a set of nodes according to some changes in the properties of the child unitaries with respect to their parents, for example change in sde and Hamming weight. Unitaries in the next level are generated from the selected set (black nodes). We stop building the tree as soon as we reach a unitary with sde 0. The path length to this node (in this case 3) is the Toffoli-count of the unitary $U$. It also gives us a Toffoli-count-optimal decomposition of $U$.
}
 \label{fig:heuristic}  
\end{figure}

\paragraph*{Space and time complexity} The analysis of space and time complexity of the algorithm EXACT-TOF-OPT is based on the following assumptions. 
\begin{conjecture}
Let $\widetilde{U}$ is the set of intermediate unitaries stored at any time, or equivalently it is the set of selected children nodes at any level of the above-mentioned tree.
 The cardinality of the set $\widetilde{U}$ in each iteration of EXACT-TOF-DECIDE is at most $\poly(n^22^{6n-12})$, when either method A or B of divide-and-select is applied.  We get at least one Toffoli-count-optimal decomposition.
\label{conj:heuristic}
\end{conjecture}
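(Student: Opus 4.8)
The plan is to separate the conjecture into its two assertions and attack them in order: first the \emph{correctness} claim that the pruned search still contains a Toffoli-count-optimal decomposition, and only then the \emph{cardinality} claim that the stored frontier $\widetilde{U}$ never exceeds $\poly(n^2 2^{6n-12})$ at any level. Correctness is logically prior, since the cardinality bound is only meaningful if the surviving nodes actually lead to an optimal answer.

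For correctness, I would first establish a structural lemma: along at least one Toffoli-count-optimal decomposition of $\chan{U}$, the quantity $\sde_2$ of the partial product obtained by stripping off generators one at a time (from the left, as in EXACT-TOF-DECIDE) decreases by exactly $1$ at every step, terminating at $\sde_2 = 0$ after exactly $m = \tofcount(U)$ steps; a node with $\sde_2 = 0$ is a Clifford by Lemma \ref{lem:chanRepRing} together with Fact \ref{fact:chanRepCliff}. The upper bound ``decrease by at most $1$'' is already supplied by Lemma \ref{lem:sdeChangeMat}; what remains is to rule out, on \emph{some} optimal path, any step that keeps $\sde_2$ fixed or increases it. I would argue this by a shortening/exchange argument: if every optimal decomposition contained a non-decreasing $\sde_2$ step, one would permute or cancel generators using the relations of Lemma \ref{lem:GtofProperties} to produce a strictly shorter decomposition, contradicting optimality. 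Granting this, the correct node at depth $i$ has $\sde_2 = m-i$, so it always satisfies the selection constraint $\sde_2 \le m-i$ and lands in the $\sde_2$-decrease group ($S_1$ under rule A, or the corresponding $\sde_2$-decrease group under rule B); the analogous statement for Hamming weight needed by rule B would require a second, parallel monotonicity lemma.

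For the cardinality bound I would argue inductively on the tree depth $i$, with base $|\widetilde{U}_0| = 1$. The heart of the induction is a counting lemma: for any $\chan{W}\in\chan{\clifft_n^{Tof}}$, the number of generators $\chan{G}\in\gen_{Tof}$ for which $\sde_2(\chan{G}^{\dagger}\chan{W}) = \sde_2(\chan{W}) - 1$ (and, under rule B, the Hamming weight also drops) is bounded by $\poly(n^2 2^{6n-12})$, \emph{and} this $\sde_2$-decrease group is the minimum-cardinality piece of the partition, so that it is precisely the one the rule selects. Whether $\sde_2$ drops is governed entirely by the parity cancellation of Fact \ref{fact:sdeChangeNoCS}: a row combination $\tfrac12(v_1\pm v_2\pm v_3\pm v_4)$ decreases $\sde_2$ only when the four leading numerators cancel. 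Using the compact $\pm\tfrac12$ structure of $\chan{G_{P_1,P_2,P_3}}$ from Theorem \ref{thm:chanRep}, I would count the sign/support patterns among the $|\gen_{Tof}|$ generators that force this cancellation simultaneously across all rows, aiming to show only polynomially many do so, while also bounding the false positives that pass the $\sde_2 \le m-i$ filter but are not on any optimal path.

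The main obstacle is exactly this counting lemma, and specifically the demand that the favorable group be simultaneously \emph{nonempty on correct nodes}, \emph{of minimum cardinality}, and \emph{globally polynomial in $|\gen_{Tof}|$}. None of these three properties follows from the facts established earlier: Lemma \ref{lem:sdeChangeMat} bounds only the per-step change of $\sde_2$, not the \emph{number} of generators achieving a prescribed change, and there is no a priori reason the $\sde_2$-decreasing group should be the smallest rather than merely one of several. The experimental data (maximum stored unitaries of $3$–$10$) strongly suggests the favorable group is tiny, but converting this into a worst-case bound requires an extremal, arithmetic characterization of how the $\pm\tfrac12$ sign patterns of the generators interact with the entries of an arbitrary exactly-implementable $\chan{W}$ in the ring $\intg[\tfrac12]$ of Lemma \ref{lem:chanRepRing}. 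That characterization is precisely what the paper lacks, which is why the statement appears as a conjecture; I would regard establishing it (or a probabilistic relaxation over random $\chan{W}$) as the decisive and by far the hardest step.
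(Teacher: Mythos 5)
You should note at the outset that the paper contains no proof of this statement: it is stated as Conjecture \ref{conj:heuristic} and is used as an \emph{assumption} in the complexity analysis of EXACT-TOF-OPT, supported only by the empirical evidence in Table \ref{tab:resultExact} (maximum stored unitaries of $3$--$10$ on random instances). So there is no paper argument to compare against, and any complete derivation you gave would be new mathematics. Judged on its own terms, your proposal is a research outline rather than a proof, and its one concrete intermediate step --- the structural lemma that along some Toffoli-count-optimal decomposition $\sde_2$ decreases by exactly $1$ at every step --- is false. Since the per-step change is at most $1$ (Lemma \ref{lem:sdeChangeMat}) and the path terminates at $\sde_2=0$ after exactly $m=\tofcount(U)$ steps, your lemma would force $\tofcount(U)=\sde_2(\chan{U})$ for every exactly implementable $U$. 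The paper's own worked example refutes this: $\chan{\id\otimes\cs}$ has $\sde_2=1$ (e.g.\ $\cs(\X\otimes\id)\cs^{\dagger}=\frac{1}{2}(\X\otimes\id+\X\otimes\Z+\Y\otimes\id-\Y\otimes\Z)$, so all entries are $0,\pm 1,\pm\frac{1}{2}$), yet $\tofcount(\id\otimes\cs)=3$. Indeed EXACT-TOF-OPT begins testing at $m=\sde_2(\chan{U})$ and increments precisely because these quantities can differ, and Section \ref{subsubsec:heuristic} says only that sde increases in ``most'' multiplications, not all. Your proposed exchange argument cannot rescue the lemma: the identities of Lemma \ref{lem:GtofProperties} concern a single generator modulo a trailing Clifford, and commuting that Clifford past the remaining prefix conjugates (and hence changes) every subsequent generator, so a non-decreasing $\sde_2$ step yields no cancellation and no contradiction with optimality.

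There is also a mismatch between your correctness argument and what the algorithm actually does. Both Rule A and Rule B select the \emph{minimum-cardinality} group subject to the constraint $\sde_2\leq m-i$ (Algorithm \ref{alg:TOFcountDecide} additionally always retains nodes with $\sde_2=1$); neither rule privileges the sde-decrease group. So even if a monotone optimal path existed, correctness would require that the group containing the correct child is simultaneously the minimum-cardinality one at every level --- which is exactly the content of the conjecture, not something one can assume. For the cardinality half you candidly concede that the counting lemma (polynomially many generators produce the parity cancellation of Fact \ref{fact:sdeChangeNoCS} across all rows, that group is the smallest, and false positives die out) does not follow from anything established, and that concession is accurate: nothing in Theorem \ref{thm:chanRep} or Lemma \ref{lem:chanRepRing} bounds the \emph{number} of generators achieving a prescribed $\sde_2$ change. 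In summary: your diagnosis of where the difficulty lies is essentially correct and matches why the paper leaves this as a conjecture, but the proposal as written rests on a refutable monotonicity claim and on a selection criterion the algorithm does not use, so it does not constitute progress toward a proof in its current form.
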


Consider the sub-routine EXACT-TOF-DECIDE. There are $|\gen_{Tof}|-1$ multiplications by $\chan{G_{P_{1_j},P_{2_j},P_{3_j}} }^{-1}$ in each iteration for each unitary in $\widetilde{U}$. And by the above conjecture $|\widetilde{U}|\in \poly(n^22^{6n-12})$. Thus both the time and space complexity of EXACT-TOF-DECIDE are in $\poly(n^22^{6n-12},m)$.
We call EXACT-TOF-DECIDE at most $\tofcount(U)$ times to solve EXACT-TOF-OPT. Thus space and time complexity of EXACT-TOF-OPT are in $\poly(n^22^{6n-12},\tofcount(U))$.

\section*{Acknowledgements}

P.Mukhopadhyay thanks Nathan Wiebe for helpful discussions.

\section*{Code and data availability}

All codes can be found online at the public repository : https://github.com/PriyankaMukhopadhyay/Toffoli-opt

\newcommand{\etalchar}[1]{$^{#1}$}

\appendix

\section{Some additional preliminaries}
\label{app:prelim}

\subsection{Cliffords and Paulis}
\label{app:clifford}

The \emph{single qubit Pauli matrices} are as follows.
\begin{eqnarray}
 \X=\begin{bmatrix}
     0 & 1 \\
    1 & 0
    \end{bmatrix} \qquad  
 \Y=\begin{bmatrix}
     0 & -i \\
     i & 0
    \end{bmatrix} \qquad 
 \Z=\begin{bmatrix}
     1 & 0 \\
     0 & -1
    \end{bmatrix}\nonumber
\label{eqn:Pauli1}
\end{eqnarray}
Parenthesized subscripts are used to indicate qubits on which an operator acts For example, $\X_{(1)}=\X\otimes\id^{\otimes (n-1)}$ implies that Pauli $\X$ matrix acts on the first qubit and the remaining qubits are unchanged.

The \emph{$n$-qubit Pauli operators} are :
$
 \pauli_n=\{Q_1\otimes Q_2\otimes\ldots\otimes Q_n:Q_i\in\{\id,\X,\Y,\Z\} \}.
$

The \emph{single-qubit Clifford group} $\cliff_1$ is generated by the Hadamard and phase gates :
$
 \cliff_1=\braket{\had,\phase} 
 $
where
\begin{eqnarray}
 \had=\frac{1}{\sqrt{2}}\begin{bmatrix}
       1 & 1 \\
       1 & -1
      \end{bmatrix}\qquad 
 \phase=\begin{bmatrix}
       1 & 0 \\
       0 & i
      \end{bmatrix}.\nonumber
\end{eqnarray}
When $n>1$ the \emph{$n$-qubit Clifford group} $\cliff_n$ is generated by these two gates (acting on any of the $n$ qubits) along with the two-qubit $\CNOT=\ket{0}\bra{0}\otimes\id+\ket{1}\bra{1}\otimes\X$ gate (acting on any pair of qubits). 

The Clifford group is special because of its relationship to the set of $n$-qubit Pauli operators. Cliffords map Paulis to Paulis, up to a possible phase of $-1$, i.e. for any $P\in\pauli_n$ and any $C\in\cliff_n$ we have
$
    CPC^{\dagger}=(-1)^bP'
$
for some $b\in\{0,1\}$ and $P'\in\pauli_n$. In fact, given two Paulis (neither equal to the identity), it is always possible to efficiently find a Clifford which maps one to the other.
\begin{fact}[\cite{2014_GKMR}]
 For any $P,P'\in\pauli_n\setminus\{\id\} $ there exists a Clifford $C\in\cliff_n$ such that $CPC^{\dagger}=P'$. A circuit for $C$ over the gate set $\{\had,\phase,\CNOT\}$ can be computed efficiently (as a function of $n$).
 \label{fact:cliffConj}
\end{fact}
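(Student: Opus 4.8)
The plan is to prove the statement constructively, via reduction to a single canonical Pauli. First I would note that it suffices to exhibit, for every $P\in\pauli_n\setminus\{\id\}$, an efficiently computable Clifford $C_P$ with $C_P P C_P^{\dagger}=\Z_{\q{1}}$, where $\Z_{\q{1}}=\Z\otimes\id^{\otimes(n-1)}$. Indeed, given such $C_P$ and $C_{P'}$, the Clifford $C=C_{P'}^{\dagger}C_P$ satisfies $CPC^{\dagger}=C_{P'}^{\dagger}\left(C_P P C_P^{\dagger}\right)C_{P'}=C_{P'}^{\dagger}\Z_{\q{1}}C_{P'}=P'$, using that $C_{P'}^{\dagger}$ is again a Clifford and that conjugation is a group action. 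So the whole problem collapses to the single-target canonicalization $P\mapsto\Z_{\q{1}}$.

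To build $C_P$ I would proceed in three stages. Writing $P=\bigotimes_{j=1}^nQ_j$ with $Q_j\in\{\id,\X,\Y,\Z\}$, I first conjugate, on each qubit $j$ in the support $\{j:Q_j\neq\id\}$, by a single-qubit Clifford that maps $Q_j$ to $\Z$: for instance $\had$ sends $\X$ to $\Z$ (Equation \ref{eqn:cliffConj1}), the product $\had\phase^{\dagger}$ sends $\Y$ to $\Z$ (using $\phase\X\phase^{\dagger}=\Y$ and $\had\X\had=\Z$), while $\Z$ needs nothing. This turns $P$ into $\bigotimes_{j\in\mathrm{supp}}\Z_{\q{j}}$. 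Second, I collapse the support to a single qubit by repeatedly applying the relation $\CNOT_{\q{j;k}}\left(\Z_{\q{j}}\Z_{\q{k}}\right)\CNOT_{\q{j;k}}=\Z_{\q{k}}$, which is the $\CNOT$ relation of Section \ref{subsec:genTOF} read in reverse (since $\CNOT$ is self-inverse) and which merges two $\Z$ factors into one; iterating over a support of size at most $n$ leaves a single $\Z_{\q{a}}$. Third, a $\swap$ (three $\CNOT$s) moves this factor to qubit $1$, giving $\Z_{\q{1}}$.

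Because the single-qubit normalizations, the $\CNOT$ merges, and the $\swap$ are each sign-free on the relevant $\Z$-type operators, the composite $C_P$ maps $P$ to $\Z_{\q{1}}$ with coefficient exactly $+1$, so no stray factor of $-1$ survives and the composition $C=C_{P'}^{\dagger}C_P$ lands on $P'$ rather than $-P'$. Each stage uses $O(n)$ gates from $\{\had,\phase,\CNOT\}$ and only local bookkeeping over the qubit support, so $C_P$, and hence $C$, has $O(n)$ gates and is computable in time polynomial in $n$, which is the efficiency claim.

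The step I expect to require the most care is the sign management, namely verifying that the chosen normalizing single-qubit Cliffords and the $\CNOT$ collapse never introduce a $-1$, so that the output is $+\Z_{\q{1}}$ and thus exactly $P'$. The cleanest way to make both this and the efficiency claim airtight is to run the whole argument in the symplectic/stabilizer-tableau representation of Cliffords over $\mathbb{F}_2$, in which each conjugation is a linear update of a length-$2n$ binary vector together with a single sign bit, and Gaussian elimination simultaneously yields the explicit circuit and its polynomial size; should a sign nonetheless arise, conjugating by the Clifford $\X_{\q{1}}$ (which satisfies $\X\Z\X=-\Z$) corrects it. Since this is a standard stabilizer-formalism result I would invoke the tableau machinery rather than re-derive it, concentrating the proof on the reduction to $\Z_{\q{1}}$ above.
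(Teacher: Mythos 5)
Your proof is correct, and since the paper states this as a cited fact from \cite{2014_GKMR} without giving its own proof, there is nothing in-paper to diverge from: your canonicalization argument (map each Pauli to $\Z_{\q{1}}$ by single-qubit Cliffords sending $\X,\Y\mapsto\Z$, collapse the support with $\CNOT_{\q{j;k}}(\Z_{\q{j}}\Z_{\q{k}})\CNOT_{\q{j;k}}=\Z_{\q{k}}$, then $\swap$, and compose $C=C_{P'}^{\dagger}C_P$) is the standard one and all your sign checks go through, since $\pauli_n$ here contains only unsigned tensor products. It also mirrors the technique the paper itself uses later when constructing circuits for $G_{P_1,P_2,P_3}$ (reducing commuting Paulis to $\Z$-operators via \cite{2020_dBT} and the $\CNOT$/$\swap$/$\had$ conjugation relations), so your argument is fully consistent with the paper's machinery.
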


\begin{fact}[\cite{2022_GMM2}]
Let $Q=\sum_{P\in\pauli_n}q_PP$ be the expansion of a matrix $Q$ in the Pauli basis. Then
$$q_P=\tr\left(QP\right)/N \qquad [N=2^n].
$$
Further, if $Q$ is a unitary then
$$
    \sum_{P\in\pauli_n}\left|q_P\right|^2=1
$$
 \label{app:fact:trCoeff}
\end{fact}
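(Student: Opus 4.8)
The plan is to derive both assertions from a single orthogonality relation for the Pauli basis, namely $\tr(PP')=N\,\delta_{P,P'}$ for all $P,P'\in\pauli_n$, where $N=2^n$. Once this relation is available, the first claim follows by pairing $Q$ against a fixed Pauli and the second by computing $\tr(Q^\dagger Q)$ in two different ways.

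First I would establish the orthogonality relation, which is where essentially all the work lives. Three elementary properties of the elements of $\pauli_n$ suffice: each $P\in\pauli_n$ is Hermitian (being a tensor product of the Hermitian matrices $\id,\X,\Y,\Z$) and satisfies $P^2=\id$ (since each single-qubit factor squares to the identity); the identity has $\tr(\id)=N$; and every non-identity element of $\pauli_n$ is traceless. For $P=P'$ one then gets $\tr(PP')=\tr(\id)=N$. For $P\neq P'$ I would write $PP'=\bigotimes_{i=1}^n(Q_iQ_i')$ and observe that each single-qubit product $Q_iQ_i'$ is a scalar phase (in $\{\pm 1,\pm i\}$) times an element of $\{\id,\X,\Y,\Z\}$, so that $PP'$ equals a global phase times some $R=\bigotimes_i R_i\in\pauli_n$. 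Since $P\neq P'$ there is at least one index with $Q_i\neq Q_i'$, which forces $R_i\neq\id$ at that index and hence $R\neq\id$, making $R$ (and therefore $PP'$) traceless. This case analysis is the one place requiring care and is the main—though still routine—obstacle; the only subtlety is confirming that distinct tensor factors cannot conspire to reproduce the identity, which is checked factorwise.

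With the orthogonality relation fixed, the first claim is immediate: substituting $Q=\sum_{P'}q_{P'}P'$ gives $\tr(QP)=\sum_{P'}q_{P'}\tr(P'P)=N q_P$, so $q_P=\tr(QP)/N$. For the second claim I would use Hermiticity of the Paulis to write $Q^\dagger=\sum_{P}\overline{q_P}P$, and then compute $\tr(Q^\dagger Q)=\sum_{P,P'}\overline{q_P}q_{P'}\tr(PP')=N\sum_P|q_P|^2$; on the other hand, unitarity gives $\tr(Q^\dagger Q)=\tr(\id)=N$. Equating the two expressions and dividing through by $N$ yields $\sum_{P\in\pauli_n}|q_P|^2=1$, completing the proof.
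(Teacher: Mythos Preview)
Your proof is correct and follows the standard route via the Hilbert--Schmidt orthogonality of the Pauli basis. Note, however, that the paper does not actually prove this statement: it is recorded as a \emph{Fact} with a citation to \cite{2022_GMM2} and no argument is given, so there is no paper proof to compare against. Your derivation is the natural one and would serve well as the missing justification.
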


We observe the following when expanding a Clifford in the Pauli basis. 
\begin{fact}[\cite{2010_BS}]
 If $C\in\cliff_n$ then for each $P\in\pauli_n$ $\exists r_P\in\cmplx$, such that $C=\sum_{P\in\pauli_n}r_PP$. Further, if $r_P, r_{P'}\neq 0$ for any pair of $P,P'$, then $|r_P|=|r_{P'}|=r$, for some $r\in\real$.
 \label{fact:cliffCoeff}
\end{fact}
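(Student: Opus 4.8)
I read the final boxed statement as Fact \ref{fact:cliffCoeff}: every Clifford $C\in\cliff_n$ expands as $C=\sum_{P}r_PP$ over the Pauli basis, and all nonzero $|r_P|$ coincide. The existence of the expansion is immediate, since the $4^n$ (Hermitian) Pauli operators form an orthogonal basis of the space of $2^n\times 2^n$ matrices, so any matrix expands uniquely with $r_P=\tr(PC)/N$ by Fact \ref{app:fact:trCoeff}. The substance is the uniformity claim, and my plan is to study the real function $f(P):=|r_P|^2$ by Fourier analysis on the phase-free Pauli group $\overline{\pauli}_n\cong\fld_2^{2n}$, whose characters are $P\mapsto(-1)^{\langle Q,P\rangle}$, where $\langle Q,P\rangle\in\{0,1\}$ records whether $Q,P$ anticommute (equivalently $QPQ=(-1)^{\langle Q,P\rangle}P$, using $Q^\dagger=Q=Q^{-1}$).

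First I would compute the Walsh--Fourier transform of $f$. From $QCQ=\sum_P r_P(-1)^{\langle Q,P\rangle}P$ and the orthogonality $\tr(PP')/N=\delta_{P,P'}$, one obtains for every $Q\in\pauli_n$
\begin{eqnarray}
\hat{f}(Q):=\sum_{P}|r_P|^2(-1)^{\langle Q,P\rangle}=\tr\!\left(QCQC^\dagger\right)/N. \nonumber
\end{eqnarray}
The decisive use of the Clifford hypothesis enters here: since $CQC^\dagger=\pm P''$ for some $P''\in\pauli_n$, the operator $QCQC^\dagger=Q\,(CQC^\dagger)$ is a Pauli times a unit-modulus phase, so its normalized trace vanishes unless $QCQC^\dagger\propto\id$. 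A short check shows this occurs exactly when $CQC^\dagger=\lambda_Q Q$ with $\lambda_Q=\pm1$, i.e. when $\overline{Q}$ lies in the fixed-point subgroup $K:=\{\overline{Q}:\Phi_C(\overline{Q})=\overline{Q}\}$ of the conjugation automorphism $\Phi_C$, and that then $\hat{f}(Q)=\lambda_Q$. Hence $\hat{f}$ is supported on $K$ and equals $\lambda_Q\in\{\pm1\}$ there.

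The crux is then to observe that $\sigma:=\lambda_{(\cdot)}$ is itself a \emph{character} of $K$: for $Q_1,Q_2\in K$ one has $C(Q_1Q_2)C^\dagger=\lambda_{Q_1}\lambda_{Q_2}Q_1Q_2$, and tracking the Pauli product phase yields $\lambda_{Q_1Q_2}=\lambda_{Q_1}\lambda_{Q_2}$, so $\sigma$ is a homomorphism $K\to\{\pm1\}$. Fourier inversion then gives
\begin{eqnarray}
f(P)=\frac{1}{4^n}\sum_{\overline{Q}\in K}\sigma(\overline{Q})\,(-1)^{\langle P,Q\rangle}, \nonumber
\end{eqnarray}
and because $\overline{Q}\mapsto\sigma(\overline{Q})(-1)^{\langle P,Q\rangle}$ is a character of $K$, the sum equals $|K|$ when that character is trivial and $0$ otherwise. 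Therefore $|r_P|^2\in\{0,\,|K|/4^n\}$, so every nonzero coefficient has the same modulus $r=\sqrt{|K|}/N$, which is the claim; one also sees the support is a single coset of the symplectic complement $K^{\perp}$, consistent with $\sum_P|r_P|^2=1$.

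I expect the main obstacle to be the penultimate step, namely verifying that the sign function $\lambda_Q$ is multiplicative on $K$. This homomorphism property, combined with the Clifford-induced vanishing of $\hat{f}$ off $K$, is exactly what upgrades the otherwise trivial invariance of the \emph{multiset} of magnitudes into genuine uniformity; everything else is bookkeeping with Pauli phases and the symplectic form.
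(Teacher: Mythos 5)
Your proof is correct, but note there is nothing in the paper to compare it against: Fact \ref{fact:cliffCoeff} is stated with a citation to Beigi and Shor \cite{2010_BS} and is never proved in this paper. On its own merits your argument checks out at every step. The transform identity $\hat{f}(Q)=\sum_P|r_P|^2(-1)^{\langle Q,P\rangle}=\tr\left(QCQC^{\dagger}\right)/N$ follows from $QPQ=(-1)^{\langle Q,P\rangle}P$ and Pauli orthogonality; the Clifford hypothesis correctly kills $\hat{f}$ off the fixed subgroup $K$, since $Q\,(CQC^{\dagger})$ is a phase times a Pauli and hence traceless unless $CQC^{\dagger}=\pm Q$, in which case $\hat{f}(Q)=\lambda_Q\in\{\pm 1\}$; the sign function $\lambda$ is well defined on the phase-free quotient (scalar phases pass through conjugation) and multiplicative on $K$ because $C(Q_1Q_2)C^{\dagger}=(CQ_1C^{\dagger})(CQ_2C^{\dagger})$; and the closing character-orthogonality sum over the abelian subgroup $K\leq\fld_2^{2n}$ is standard, with nondegeneracy of the commutation form guaranteeing both the Fourier inversion and $|K|\cdot|K^{\perp}|=4^n$. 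Conceptually, your route is a Fourier-analytic repackaging of the stabilizer-formalism reasoning that underlies \cite{2010_BS}: the pair $(K,\sigma)$ is precisely the Bell-diagonal part of the stabilizer group of the Choi state of $C$, and your character sum is the usual computation of stabilizer-state amplitudes on their affine support. What your version buys beyond the bare statement is quantitative: the common modulus is pinned down as $r=\sqrt{|K|}/2^n$, the support of $P\mapsto r_P$ is identified as a single coset of $K^{\perp}$, and consequently the number $M$ of nonzero coefficients (the quantity appearing in Equations \ref{eqn:nonZeroP}--\ref{eqn:zeroP} and Theorem \ref{thm:tofBoundExact}) is always a power of $2$, namely $4^n/|K|$, with $\sum_P|r_P|^2=1$ holding automatically --- strictly more information than Fact \ref{fact:cliffCoeff} asserts.
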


\begin{fact}
If $[A,B]=0$ and $CAC^{\dagger}=A_1$, $CBC^{\dagger}=B_1$, then $[A_1,B_1]=0$.
    \label{fact:commute}
\end{fact}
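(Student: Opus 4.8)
The plan is to verify the claim by a direct computation of the commutator $[A_1,B_1]$, relying only on the two conjugation hypotheses and the unitarity of the conjugating operator. In the setting of the paper $C\in\cliff_n$ is a Clifford and hence unitary, so $C^{\dagger}C=\id$; this single structural fact, together with the bilinearity of the commutator bracket, is all the argument will need.

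First I would substitute the hypotheses $A_1=CAC^{\dagger}$ and $B_1=CBC^{\dagger}$ into the product $A_1B_1$ and cancel the inner factor $C^{\dagger}C=\id$:
\begin{eqnarray}
A_1B_1=(CAC^{\dagger})(CBC^{\dagger})=CA(C^{\dagger}C)BC^{\dagger}=CABC^{\dagger}, \nonumber
\end{eqnarray}
and symmetrically $B_1A_1=CBAC^{\dagger}$. Subtracting these and factoring $C$ out on the left and $C^{\dagger}$ on the right then gives $[A_1,B_1]=C(AB-BA)C^{\dagger}=C[A,B]C^{\dagger}$.

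Finally I would invoke the hypothesis $[A,B]=0$, so that $C[A,B]C^{\dagger}=C\cdot 0\cdot C^{\dagger}=0$, yielding $[A_1,B_1]=0$ as claimed. There is no genuine obstacle: the whole content is the cancellation $C^{\dagger}C=\id$ (valid since Cliffords are unitary, i.e. $C^{-1}=C^{\dagger}$) applied to a routine expansion of the commutator. The only point worth flagging in the write-up is that the statement is phrased for a generic conjugating operator $C$, so one should note explicitly that the cancellation step uses unitarity, which holds throughout the paper because $C$ ranges over $\cliff_n$.
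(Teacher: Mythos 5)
Your proposal is correct and follows essentially the same route as the paper's proof: both compute $A_1B_1=(CAC^{\dagger})(CBC^{\dagger})=C(AB)C^{\dagger}=C(BA)C^{\dagger}=B_1A_1$ using the cancellation $C^{\dagger}C=\id$. Your explicit remark that this cancellation requires unitarity of $C$ (guaranteed since $C\in\cliff_n$) is a reasonable clarification but does not change the argument.
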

\begin{proof}
We have the following.
\begin{eqnarray}
    A_1B_1&=&(CAC^{\dagger})(CBC^{\dagger})=C(AB)C^{\dagger}=C(BA)C^{\dagger}   \nonumber \\
    &=&(CBC^{\dagger})(CAC^{\dagger})=B_1A_1    \nonumber
\end{eqnarray}
\end{proof}

\section{Generating set $\gen_{Tof}$}
\label{app:genTof}

\begin{lemma}
If $P_1,P_2,P_3\in\pauli_n\setminus\{\id\}$ pair-wise commutes, then
\begin{enumerate}
    \item $G_{P_1,P_2,P_3}$ remains same for any permutation of the Paulis.

    \item $G_{P_1,-P_1P_2,P_3}=G_{P_1,P_2,P_3}$.

    \item $G_{P_1,-P_2,P_3}=G_{P_1,P_2,P_3}C$, for some $C\in\cliff_n$.

    \item $G_{P_1,P_2,P_1P_2}=\id$.

    \item $G_{P_1,P_2,P_1P_2P_3}=G_{P_1,P_2,P_3}$.
\end{enumerate}
    \label{app:lem:GtofProperties}
\end{lemma}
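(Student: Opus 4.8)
The plan is to reduce all five identities to a single factorized form for $G_{P_1,P_2,P_3}$, after which four of them become one-line computations and only the third requires real work. First I would rewrite the defining expression of Equation \ref{eqn:Gp1p2p3}: expanding the product of the (commuting) factors $(\id-P_1)(\id-P_2)(\id-P_3)$ and matching term by term against the definition, one obtains
$$G_{P_1,P_2,P_3} = \id - \tfrac14(\id-P_1)(\id-P_2)(\id-P_3).$$
This identity does most of the heavy lifting. Property 1 is then immediate: since $P_1,P_2,P_3$ pairwise commute the three factors commute, so the product, and hence $G$, is invariant under any permutation of the Paulis.

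Next I would dispatch Properties 2, 4 and 5 by elementary manipulations using only $P_i^2=\id$ and commutativity. The key computation is $(\id-P_1)(\id-P_2)(\id-P_1P_2)=0$, verified by expanding and repeatedly applying $P_1^2=P_2^2=\id$; this gives Property 4 directly, since then $G_{P_1,P_2,P_1P_2}=\id$. Property 5 follows from the same vanishing, because $(\id-P_1)(\id-P_2)\big[(\id-P_3)-(\id-P_1P_2P_3)\big]=(\id-P_1)(\id-P_2)(P_1P_2-\id)P_3=0$. For Property 2 I would check that $(\id-P_1)(\id+P_1P_2)=(\id-P_1)(\id-P_2)$ — both sides equal $\id-P_1-P_2+P_1P_2$ after using $P_1^2=\id$ — so replacing $P_2$ by $-P_1P_2$ leaves the factorized form of $G$ unchanged.

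The hard part will be Property 3, where flipping a sign $P_2\mapsto -P_2$ produces an extra Clifford. Here I would introduce the projector $\Pi=\tfrac14(\id-P_1)(\id-P_2)$ (a product of the two commuting projectors $\tfrac{\id-P_1}{2},\tfrac{\id-P_2}{2}$) and rewrite $G_{P_1,P_2,P_3}=(\id-\Pi)+\Pi P_3$, i.e.\ ``$P_3$ controlled on $\Pi$''; this form also shows $G_{P_1,P_2,P_3}$ is self-inverse, since $\Pi$ commutes with $P_3$ and $P_3^2=\id$. Writing $\Pi'=\tfrac14(\id-P_1)(\id+P_2)$ for the projector attached to $G_{P_1,-P_2,P_3}$, the crucial facts are the orthogonality $\Pi\Pi'=0$ (because $(\id-P_2)(\id+P_2)=\id-P_2^2=0$) and $\Pi+\Pi'=\tfrac12(\id-P_1)$. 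Expanding $G_{P_1,P_2,P_3}\,G_{P_1,-P_2,P_3}$ with all factors commuting and $P_3^2=\id$ then telescopes to
$$G_{P_1,P_2,P_3}\,G_{P_1,-P_2,P_3}=\id-\tfrac12(\id-P_1)(\id-P_3)=:C.$$

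The final step is to recognize $C=\id-2\cdot\tfrac14(\id-P_1)(\id-P_3)$ as a reflection about the simultaneous $(-1,-1)$-eigenspace of the commuting pair $P_1,P_3$, which is exactly a Clifford conjugate of the $\cz$ gate. Choosing a Clifford $D$ with $DP_1D^{\dagger}=\Z_{(1)}$ and $DP_3D^{\dagger}=\Z_{(2)}$ — possible by Fact \ref{fact:cliffConj} together with simultaneous diagonalization of commuting Paulis as in \cite{2020_dBT} — gives $DCD^{\dagger}=\id-2\ketbra{11}{11}=\cz_{(1,2)}\in\cliff_n$, so $C\in\cliff_n$ (and in the degenerate case $P_1=P_3$ the expression collapses to the Pauli $C=P_1$, still a Clifford). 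Since $G_{P_1,P_2,P_3}$ is self-inverse, multiplying $C=G_{P_1,P_2,P_3}\,G_{P_1,-P_2,P_3}$ on the left by $G_{P_1,P_2,P_3}$ yields $G_{P_1,-P_2,P_3}=G_{P_1,P_2,P_3}\,C$ with $C\in\cliff_n$, which is Property 3. I expect the only genuine obstacle to be identifying $C$ as a Clifford; everything else is bookkeeping driven by the factorized formula above.
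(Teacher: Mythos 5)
Your proposal is correct, and it reaches the statement by a genuinely different route than the paper. The paper proves (1) by symmetry of the defining sum, proves (2), (4), (5) by direct expansion of the seven Pauli terms, and for (3) passes to the exponential form $G_{P_1,P_2,P_3}=\exp\left(\tfrac{i\pi}{8}(\id-P_1)(\id-P_2)(\id-P_3)\right)$, writes the ratio $G_{P_1,P_2,P_3}^{-1}G_{P_1,-P_2,P_3}$ as a product of four commuting factors $\exp\left(\pm\tfrac{i\pi}{4}P\right)$ with $P\in\{P_2,P_1P_2,P_2P_3,P_1P_2P_3\}$, and checks each factor is a Clifford by conjugating an arbitrary Pauli. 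Your factorization $G_{P_1,P_2,P_3}=\id-\tfrac14(\id-P_1)(\id-P_2)(\id-P_3)=\id-2\Pi_3$ is the algebraic shadow of that same exponential (the exponent is $i\pi$ times a projector), but you exploit it systematically where the paper only invokes it for (3): properties (1), (2), (4), (5) each collapse to one-line identities about the commuting factors, e.g.\ $(\id-P_1)(\id+P_1P_2)=(\id-P_1)(\id-P_2)$ and $(\id-P_1)(\id-P_2)(\id-P_1P_2)=0$, which is tidier than expanding the eight-term sums. For (3) your projector computation $G_{P_1,P_2,P_3}\,G_{P_1,-P_2,P_3}=\id-\tfrac12(\id-P_1)(\id-P_3)=\tfrac12(\id+P_1+P_3-P_1P_3)$ is correct (I verified it agrees with the paper's four-exponential product), and it buys something the paper's argument does not: an explicit closed-form Clifford, namely a single conjugated $\cz$ (or the Pauli $P_1$ in the degenerate case $P_1=P_3$, which you rightly handle since the lemma's hypotheses do not force $P_1\neq P_3$), plus the self-inverseness of $G_{P_1,P_2,P_3}$ as a free by-product. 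The one point to state carefully is the simultaneous conjugation step: Fact \ref{fact:cliffConj} as quoted covers a single Pauli, so mapping the commuting, independent pair $(P_1,P_3)$ to $(\Z_{\q{1}},\Z_{\q{2}})$ needs the standard symplectic/tableau argument or the cited algorithm of \cite{2020_dBT} — which you acknowledge, and which the paper itself uses elsewhere for triples — so there is no gap; conversely, the paper's route avoids that ingredient entirely at the cost of producing its Clifford only as an unevaluated product of four Pauli rotations.
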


\begin{proof}
    (1) easily follows from the definition of $G_{P_1,P_2,P_3}$ in Equation \ref{eqn:Gp1p2p3}.

    (2) can be proved as follows.
    \begin{eqnarray}
        G_{P_1,-P_1P_2,P_3}&=&\frac{3}{4}\id+\frac{1}{4}\left(P_1-P_1P_2+P_3+P_1P_1P_2+P_1P_2P_3-P_1P_3-P_1P_1P_2P_3\right)  \nonumber \\
        &=&\frac{3}{4}\id+\frac{1}{4}\left(P_1+P_2+P_3-P_1P_2-P_2P_3-P_3P_1+P_1P_2P_3\right)=G_{P_1,P_2,P_3} \nonumber 
    \end{eqnarray}

    Now we prove (3). We observe that $\tof_{i,j;k}$ can be alternatively written as follows.
    \begin{eqnarray}
        \tof_{i,j;k}&=&\exp\left(\frac{i\pi}{8}(\id-\X_{\q{i}})(\id-\Z_{\q{j}})(\id-\Z_{\q{k}})\right)   \nonumber \\
    \implies CU_{tof} C^{\dagger}&=&\exp\left(\frac{i\pi}{8}(\id-Q)\right)=G_{P_1,P_2,P_3}   \nonumber
    \end{eqnarray}
    and so
    \begin{eqnarray}
        G_{P_1,P_2,P_3}&=&C(\tof_{i,j;k})C^{\dagger}  \nonumber \\
        &=&\exp\left(\frac{i\pi}{8} (\id-P_1-P_2-P_3+P_1P_2+P_2P_3+P_3P_1-P_1P_2P_3)  \right).  \nonumber
    \end{eqnarray}
   And,
   \begin{eqnarray}
       G_{P_1,-P_2,P_3}&=&\exp\left(\frac{i\pi}{8}(\id-P_1+P_2-P_3-P_1P_2-P_2P_3+P_1P_3+P_1P_2P_3)\right)   \nonumber \\
       &=&G_{P_1,P_2,P_3}\exp\left(\frac{i\pi}{8}(2P_2-2P_1P_2-2P_2P_3+2P_1P_2P_3)\right)   \nonumber   \\
       &=&G_{P_1,P_2,P_3}\exp\left(\frac{i\pi}{4}P_2\right)\exp\left(-\frac{i\pi}{4}P_1P_2\right)\exp\left(-\frac{i\pi}{4}P_2P_3\right)\exp\left(\frac{i\pi}{4}P_1P_2P_3\right) \nonumber
   \end{eqnarray}
    where the last line follows since each term commutes with the others. We prove that $\exp\left(\frac{i\pi}{4}P_2\right)$, $\exp\left(-\frac{i\pi}{4}P_1P_2\right)$, $\exp\left(-\frac{i\pi}{4}P_2P_3\right)$ and $\exp\left(\frac{i\pi}{4}P_1P_2P_3\right)$ are Cliffords and this will prove (3). Let $B\in\pm\pauli_n$. Then,
    \begin{eqnarray}
    \exp(\frac{i\pi}{4}P_2)B\exp(\frac{i\pi}{4}P_2)^{\dagger}&=&\frac{\id+iP_2}{\sqrt{2}}B\frac{\id-iP_2}{\sqrt{2}} \nonumber\\
    &=&P_2\left(\frac{P_2B+BP_2}{2}\right)+i\left(\frac{P_2B-BP_2}{2}\right)  \nonumber \\
    &=&B\quad \text{if }P_2B=BP_2;\quad iP_2B\quad\text{if }P_2B=-BP_2    \nonumber
\end{eqnarray}
Since $B$ is a Pauli, so $iP_2B$ is also a Paulis This proves that $\exp\left(\frac{i\pi}{4}P_2\right)\in\cliff_n$. 

Similarly, $\exp\left(-\frac{i\pi}{4}P_1P_2\right), \exp\left(-\frac{i\pi}{4}P_2P_3\right), \exp\left(\frac{i\pi}{4}P_1P_2P_3\right) \in\cliff_n$. This proves (3).

(4) can be proved as follows.
\begin{eqnarray}
    G_{P_1,P_2,P_1P_2}&=&\frac{3}{4}\id+\frac{1}{4}\left(P_1+P_2+P_1P_2-P_1P_2-P_2P_1P_2-P_1P_2P_1+P_1P_2P_1P_2\right)   \nonumber \\
    &=&\frac{3}{4}\id+\frac{1}{4}\left(P_1+P_2-P_1-P_2+\id\right)=\id   \nonumber
\end{eqnarray}

(5) can be proved as follows.
\begin{eqnarray}
        G_{P_1,P_2,P_1P_2P_3}&=&\frac{3}{4}\id+\frac{1}{4}\left(P_1+P_2+P_1P_2P_3-P_1P_2-P_2P_1P_2P_3-P_1P_2P_3P_1+P_1P_2P_1P_2P_3\right)    \nonumber \\
        &=&\frac{3}{4}\id+\frac{1}{4}\left(P_1+P_2+P_3-P_1P_2-P_2P_3-P_3P_1+P_1P_2P_3\right)=G_{P_1,P_2,P_3} \nonumber
\end{eqnarray}

\end{proof}

We know that $\gen_{Tof}$ is the set of unitaries 
\begin{eqnarray}
G_{P_1,P_2,P_3}=\frac{3}{4}\id_n+\frac{1}{4}(P_1+P_2+P_3-P_1P_2-P_2P_3-P_3P_1+P_1P_2P_3), \nonumber
\end{eqnarray} such that if $\pi_3$ is a permutation of 3 elements, then
\begin{enumerate}
    \item $P_1,P_2,P_3\in\pauli_n\setminus\{\id\}$, $P_1\neq P_2\neq P_3$ and $[P_1,P_2]=[P_2,P_3]=[P_3,P_1]=0$.

    \item $(P_1,P_2,P_3)\equiv (P_1,\pm P_1P_2,P_3)\equiv (P_1,P_2,P_1P_2P_3)\equiv\pi_3(P_1,P_2,P_3)$. The equivalence implies that we include only one of them in the set.
\end{enumerate}
We call this a \emph{generating set} (modulo Clifford), because any unitary exactly implementable by the Clifford+Toffoli gate set can be written (up to a global phase) as a product of unitaries from this set and a Clifford.

Before proceeding to prove a bound on $|\gen_{Tof}|$, let us consider a set $\gen$ of pairs of Paulis $(P_1,P_2)$ such that  
\begin{enumerate}
    \item $P_1,P_2\in\pauli_n\setminus\{\id\}$, $P_1\neq P_2$ and $[P_1,P_2]=0$.

    \item $(P_1,P_2)\equiv (P_2,P_1)\equiv (P_1,\pm P_1P_2)$. The equivalence implies that we include only one of them in the set.
\end{enumerate}
This set is relevant for constructing a generating set ($\gen_{CS}$) for the unitaries exactly implementable by the Clifford+CS gate set. In fact, a bound on $|\gen|$ provides a bound on $|\gen_{CS}|$, which we have proved in a separate paper. We are reproducing the proof, for completeness. 

\begin{theorem}

\begin{eqnarray}
    |\gen_{CS}|\leq \frac{1}{8}(16^n-13^n-4^n+1)+\frac{1}{12}(12^n-2\cdot 6^n)  \nonumber
\end{eqnarray}

    \label{app:thm:genCSsize}
\end{theorem}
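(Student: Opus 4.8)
The plan is to bound $|\gen_{CS}|$ by counting the equivalence classes of $\gen$ (which, as recalled just before the statement, controls $|\gen_{CS}|$), and to do this by identifying each class with a \emph{commuting Klein four-subgroup} of $\pauli_n$. First I would pin down the size of each class. Given distinct non-identity commuting $P_1,P_2$, set $P_3:=\pm P_1P_2$ (the sign-free representative); it is again a non-identity Pauli and cannot equal $\id$, $P_1$ or $P_2$ (each would force $P_1=P_2$ or one of them to be $\id$), so $\{\id,P_1,P_2,P_3\}$ is a Klein four-group of pairwise-commuting Paulis. Viewing the class as living on the three non-identity elements, the relation $(P_1,P_2)\equiv(P_2,P_1)$ is the transposition exchanging $P_1,P_2$ and $(P_1,P_2)\equiv(P_1,\pm P_1P_2)$ is the transposition exchanging $P_2,P_3$ while fixing $P_1$; these two transpositions generate all of $S_3$. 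Since the stabilizer of an ordered pair is trivial, $S_3$ acts simply transitively on the six ordered pairs of distinct elements, so every class consists of exactly six ordered pairs and the classes biject with commuting Klein four-subgroups.

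Next I would count these subgroups using the symplectic structure of $\pauli_n$ modulo phase, i.e. of $\fld_2^{2n}$ with $[P,P']=0$ encoded by a symplectic inner product $\langle P,P'\rangle\in\fld_2$. Commutation means the number of qubit positions at which the single-qubit factors anticommute is even, and a character sum counts all ordered commuting pairs:
\begin{eqnarray}
\#\{(P_1,P_2):[P_1,P_2]=0\}=\tfrac12\sum_{P_1,P_2\in\pauli_n}\big(1+(-1)^{\langle P_1,P_2\rangle}\big)=\tfrac12\big(16^n+4^n\big),\nonumber
\end{eqnarray}
the cross term collapsing to $4^n$ because for each $P_1\neq\id$ the functional $\langle P_1,\cdot\rangle$ is balanced. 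Removing the degenerate pairs ($P_1=\id$, $P_2=\id$, or $P_1=P_2$) by inclusion--exclusion leaves $\tfrac12(16^n+4^n)-3\cdot 4^n+2$ ordered pairs of distinct non-identity commuting Paulis, and dividing by the orbit size $6$ gives the exact count
\begin{eqnarray}
|\gen|=\frac{16^n-5\cdot 4^n+4}{12}=\frac{(4^n-1)(4^n-4)}{12}.\nonumber
\end{eqnarray}

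Since this closed form is exact and bounds $|\gen_{CS}|$, the theorem reduces to the elementary inequality $\frac{(4^n-1)(4^n-4)}{12}\le \frac{1}{8}(16^n-13^n-4^n+1)+\frac{1}{12}(12^n-2\cdot 6^n)$, which holds with equality for $n\le 2$ and strictly for $n\ge 3$, the right-hand side having leading term $\tfrac{16^n}{8}$ against $\tfrac{16^n}{12}$ with all remaining contributions of lower order. I expect the delicate part to be the orbit analysis rather than the arithmetic: one must verify that $P_1P_2$ is never $\id$, $P_1$, or $P_2$ so that the three non-identity elements are genuinely distinct and the $S_3$-action is free, which is exactly what makes the division by $6$ yield an exact count with no collapses. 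Matching this slick symplectic count to the unsimplified right-hand side of the statement---most naturally recovered by stratifying the same inclusion--exclusion over the single-qubit types of the triple $(P_1,P_2,P_1P_2)$, which produces the $13^n$, $12^n$, and $6^n$ terms---is then only a bookkeeping verification of the displayed inequality.
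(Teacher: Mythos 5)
Your proposal is correct, and it reaches the stated bound by a genuinely different and in fact sharper route than the paper. The paper proves the theorem by stratifying over the support size $m$ of the first Pauli: for each weight-$m$ Pauli $P$ it bounds the admissible commuting partners by factoring them into an off-support part (the set $S_{1Pm}$, of size $4^{n-m}-1$) and an on-support part (the set $S_{2Pm}$, where the identification $(P,Q)\equiv(P,\pm PQ)$ is enforced by halving the anticommuting-position patterns, giving at most $2^{2m-2}$), multiplies these, sums over the roughly $\binom{n}{m}3^m$ weight-$m$ Paulis (with an extra halving at $m=n$), and finally divides by $2$ for the swap symmetry; the unsimplified mixture of $16^n,13^n,12^n,6^n,4^n$ in the statement is the residue of exactly this stratification, and the count carries slack. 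You instead identify each equivalence class with the Klein four-subgroup $\{\id,P_1,P_2,P_3\}$, check freeness (your observation that the sign-free representative $P_3$ of $P_1P_2$ is non-identity and distinct from $P_1,P_2$ is precisely what guarantees all six ordered pairs in an orbit are distinct), note that the two defining relations generate a simply transitive $S_3$-action, and count ordered commuting pairs exactly by a character sum; dividing by $6$ gives the exact value $|\gen|=\frac{(4^n-1)(4^n-4)}{12}$, which is the classical number of two-dimensional totally isotropic subspaces of $\fld_2^{2n}$. This buys an exact count where the paper only gets an upper bound, and it shows the paper's bound is tight only for $n\le 2$: at $n=3$ your count gives $315$ while the paper's expression evaluates to $337.5$ (not even an integer). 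The one spot needing tightening is your closing inequality: the leading-term comparison of $\frac{16^n}{8}$ against $\frac{16^n}{12}$ does not by itself cover $n=3,4,5$, where $16^n<3\cdot 13^n$; rather, $24$ times the difference of the two sides equals
\begin{eqnarray}
16^n-3\cdot 13^n+2\cdot 12^n-4\cdot 6^n+7\cdot 4^n-5, \nonumber
\end{eqnarray}
which vanishes at $n=1,2$, is positive by direct evaluation at $n=3,4,5$, and is positive for all $n\ge 6$ because there $16^n\ge 3\cdot 13^n$ while $2\cdot 12^n\ge 4\cdot 6^n$ and $7\cdot 4^n>5$ always. With that verification supplied, your argument proves the theorem and indeed a strictly stronger exact statement.
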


\begin{proof}
Let $P=\bigotimes_{j=1}^nP_j\in\pauli_n\setminus\{\id_n\}$ such that it is the tensor product of $m$ non-identity single-qubit Paulis. Without loss of generality, let us assume that $P_j\neq\id$ when $1\leq j\leq m$. Let $S_{1Pm}$ is the set of non-identity Paulis that commute with $P$ and have $\id$ on the first $m$ 1-qubit subspaces i.e. $Q=\bigotimes_{j=1}^nQ_j\in S_{1Pm}$ if $Q_j=\id$ for $1\leq j\leq m$. This implies $[P,Q]=0$. When $m+1\leq j\leq n$, then $Q_j\in\pauli_1$ and so there can be 4 possible options. Excluding the condition when $Q_j=\id$ for all $j$, we have $|S_{1Pm}|=4^{n-m}-1$.

Let $S_{2Pm}$ is the set of non-identity Paulis that commute with $P$ and have $\id$ on the last $n-m$ 1-qubit subspaces i.e. $Q'=\bigotimes_{j=1}^nQ_j'\in S_{2Pm}$ if $Q_j'=\id$ for $m+1\leq j\leq n$. Here, we want to enforce the constraint that $(P,P')\equiv (P, \pm PP')$ i.e. $\nexists R\in S_{2Pm}$ such that $Q'=\pm PR$. Now, $[P,Q']=0$ if and only if $k=\left| \{j : P_j\neq Q_j'; 1\leq j\leq m  \} \right|$ is even. In the remaining $m-k$ subspaces $Q_j'=P_j$ or $\id$. There can be at most $2^{m-k}$ combinations. In the $k$ unequal (but neither is identity) subspaces there can be at most $2^k$ combinations, but out of these half can be obtained by myltiplying $P$ with the other half. For example, consider $P=\X\X$, then $Q'\in\{\Y\Y,\Y\Z\}$. Though $\{\Z\Z,\Z\Y\}$ also commute with $P$ but these can be obtained by multiplying $P$ with $\Y\Y$ and $\Y\Z$. Thus, there can be $2^{k-1}\cdot 2^{m-k}=2^{m-1}$ possibilities for $Q'$. Now we can select $k$ subspaces in $\binom{m}{k}$ ways. When $m<n$ then $k$ can vary from $0$ to $m$ or $m-1$, whichever is even. When $m=n$ then $k$ varies from $2$ to $m$ or $m-1$, because we want to avoid the all identity case. Hence,
\begin{eqnarray}
    |S_{2Pm}|&\leq& 2^{m-1}\sum_{k=0}^{m'}\binom{m}{k} := 2^{m-1}\cdot A \quad [m<n, m'=m\text{ or }m-1]  \nonumber \\
    S_{2Pm}|&\leq& 2^{m-1}\sum_{k=2}^{m'}\binom{m}{k}=2^{m-1}\left(\sum_{\ell=0}^k\binom{m}{\ell}-1\right) := 2^{m-1}\cdot (A-1) \quad [m=n, m'=m\text{ or }m-1]  \nonumber 
\end{eqnarray}
When $m$ is even then,
\begin{eqnarray}
    A&=&\sum_{k=0}^m\binom{m}{k}=\frac{1}{2}\left(\sum_{i=0}^m\binom{m}{i}+\sum_{i=0}^m(-1)^i\binom{m}{i} \right)=\frac{1}{2}((1+1)^m+(1-1)^m)=2^{m-1}; \nonumber
\end{eqnarray}
while when $m-1$ is even then,
\begin{eqnarray}
    A&=&\sum_{k=0}^{m-1}\binom{m}{k}=\frac{1}{2}\left(\sum_{i=0}^m\binom{m}{i}+\sum_{i=0}^m(-1)^i\binom{m}{i}\right)=2^{m-1}.   \nonumber    
\end{eqnarray}
So,
\begin{eqnarray}
    |S_{2Pm}|&\leq& 2^{m-1}2^{m-1}=2^{2m-2} \qquad [1\leq m< n]    \\
   |S_{2Pm}| &\leq& 2^{m-1}\left(2^{m-1}-1\right)=2^{2m-2}-2^{m-1}   \qquad [m=n]   \nonumber
\end{eqnarray}
Let $d_{Pm}$ is the number of Paulis that commute with $P$ and we maintain the equivalence $(P,Q)\equiv (P,\pm PQ)$ i.e. we count either one of $P$ and $PQ$. Then,
\begin{eqnarray}
    d_P&\leq& |S_{1Pm}|\leq 4^{n-1}-1 \qquad [m=1]    \nonumber \\
    &\leq& |S_{2Pm}|\leq  2^{2n-2}-2^{n-1} \qquad [m=n]    \nonumber \\
    &\leq& |S_{1Pm}|\cdot |S_{2Pm}|\leq \left(2^{2n-2m}-1\right)2^{2m-2}=2^{2n-2}-2^{2m-2}=\frac{1}{4}\left(4^n-4^m\right)\qquad [1<m<n] \nonumber
\end{eqnarray}
We count the number of Paulis $P'=\bigotimes_{j=1}^nP_j'$, which are tensor product of $m$ non-identity Paulis. We denote it by $d_m$.
There can be $\binom{n}{m}3^m$ of them. When $m=n$, among them there are few which are products of the other. Specifically we can take 2 possible values for $P_1'$ and 3 possible values for the remaining $P_j'$, where $1<j\leq n$. The remaining $n$-qubit Paulis that have $m$ non-identity tensored components, can be obtained as product of these. And we do not count these, due to the equivalence constraint. When $m=1$ there can be 3 possibilities for the only non-identity component. So number of possible Paulis with $m$ non-identity tensored components is at most $3\cdot \binom{n}{1}$ if $m=1$; $\binom{n}{n}2\cdot 3^{n-1}$ if $m=n$ and $\binom{n}{m}3^m$ if $1<m<n$.  Hence, 
\begin{eqnarray}
   d_m &\leq&\binom{n}{1}3(4^{n-1}-1)=3n(4^{n-1}-1) \qquad \text{when } m=1;    \nonumber \\
   d_m &\leq&\binom{n}{m} \frac{3^m}{4}\left(2^{2n}-2^{2m}\right)=\binom{n}{m}\left(4^{n-1}\cdot 3^m-\frac{1}{4}\cdot 12^m\right) \qquad \text{when } n>m\geq 2;  \nonumber \\
    d_m &\leq&\binom{n}{n}2\cdot 3^{n-1} \left(2^{2n-2}-2^{n-1}\right)=2\left(12^{n-1}-6^{n-1}\right) \qquad \text{when } m=n.  \nonumber
\end{eqnarray}

Total number of Pauli pairs is,
\begin{eqnarray}
    \sum_md_m&\leq&3n(4^{n-1}-1)+2(12^{n-1}-6^{n-1})+\sum_{m=2}^{n-1}\binom{n}{m}\left(4^{n-1}\cdot 3^m-\frac{1}{4}\cdot 12^m\right)  \nonumber \\
    &:=&\frac{3n}{4}4^n-3n+\frac{12^n}{6}-\frac{6^n}{3}+B;  \nonumber 
\end{eqnarray}
where
\begin{eqnarray}
    B&=&\frac{4^n}{4}\left[\sum_{m=0}^n\binom{n}{m}3^m-\binom{n}{0}-\binom{n}{1}3-\binom{n}{n}3^n\right]-\frac{1}{4}\left[\sum_{m=0}^n\binom{n}{m}12^m-\binom{n}{0}-\binom{n}{1}12-\binom{n}{n}12^n\right]  \nonumber \\
    &=&\frac{4^n}{4}\left[4^n-1-3n-3^n\right]-\frac{1}{4}\left[13^n-1-12n-12^n\right]=\frac{16^n}{4}-\frac{13^n}{4}-4^n\frac{3n+1}{4}+3n+\frac{1}{4};    \nonumber
\end{eqnarray}
and so,
\begin{eqnarray}
    \sum_md_m&\leq&\frac{16^n}{4}-\frac{13^n}{4}+\frac{12^n}{6}-\frac{6^n}{3}-\frac{4^n}{4}+\frac{1}{4}. \nonumber
\end{eqnarray}
Since the number of pairs are permutation invariant i.e. $(P_1,P_2)\equiv (P_2,P_1)$, so,
\begin{eqnarray}
    |\mathcal{G}_{CS}|=\frac{1}{2}\sum_md_m\leq \frac{16^n}{8}-\frac{13^n}{8}+\frac{12^n}{12}-\frac{6^n}{6}-\frac{4^n}{8}+\frac{1}{8}  .
\end{eqnarray}

This proves the theorem.

\end{proof}

Now we prove a bound on $|\gen_{Tof}|$.
\begin{theorem}

\begin{eqnarray}
    |\gen_{Tof}|&\leq& \frac{1}{384}[64^n-52^n-17(16^n-4^n)]+\frac{1}{576}[48^n-2\cdot 24^n]-\frac{1}{36}[12^n-2\cdot 6^n]+\frac{1}{24}(13^n-1)    \nonumber \\
    &\in& O\left( n^264^{n-2} \right)    \nonumber
\end{eqnarray}

    \label{app:thm:genTOFsize}
\end{theorem}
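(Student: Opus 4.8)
The plan is to count the triples $(P_1,P_2,P_3)$ that survive the equivalences defining $\gen_{Tof}$, reusing the support-based counting of Theorem~\ref{app:thm:genCSsize} as the inner building block. The conceptual starting point is that each element of $\gen_{Tof}$ is determined by the rank-$3$ subgroup $\langle P_1,P_2,P_3\rangle$ of $\pauli_n/\{\pm 1\}$ generated by the triple: by Lemma~\ref{app:lem:GtofProperties} the permutation relation, the relation $(P_1,P_2,P_3)\equiv(P_1,P_1P_2,P_3)$, and the relation $(P_1,P_2,P_3)\equiv(P_1,P_2,P_1P_2P_3)$ realise exactly the elementary column operations on an ordered basis of $\langle P_1,P_2,P_3\rangle\cong\mathbb{F}_2^3$, and these transvections together with permutations generate $\GL_3(\mathbb{F}_2)$; sign changes are absorbed into a trailing Clifford by Lemma~\ref{app:lem:GtofProperties}(3). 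The constraints $P_i\neq\id$, pairwise distinctness, and $P_3\neq P_1P_2$ guarantee independence (rank $3$), so each class corresponds to a $3$-dimensional isotropic (pairwise-commuting) subspace, whose orbit of ordered bases under the listed relations has size $|\GL_3(\mathbb{F}_2)|=168$. Rather than divide by $168$ directly, I would fold the sign and product identifications into the per-Pauli counts (as in the $2^{k-1}$ factor of Theorem~\ref{app:thm:genCSsize}) and divide only by the residual permutation symmetry, with the constants $384,576,36,24$ emerging from combining these normalisations; this conservative route explains why the statement is an inequality rather than the exact subspace count.

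First I would fix $P_1$ to be a tensor product of $m_1$ non-identity single-qubit Paulis and, exactly as in Theorem~\ref{app:thm:genCSsize}, count the admissible $P_2$ in the centralizer of $P_1$ by splitting it into the part supported off $\mathrm{supp}(P_1)$ and the part supported on $\mathrm{supp}(P_1)$, using that $[P_1,Q]=0$ iff $Q$ disagrees with $P_1$ on an even number of the $m_1$ active qubits. This reproduces the pair count and supplies the $13^n$, $12^n$, and $6^n$ contributions. Next --- the genuinely new step --- given an independent commuting pair $(P_1,P_2)$ I would count the admissible $P_3$ lying in the joint centralizer $Z(P_1,P_2)$, which modulo phase has order $2^{2n-2}$, after removing the four elements of $\langle P_1,P_2\rangle$ and quotienting by the identification $P_3\sim P_1P_2P_3$ forced by Lemma~\ref{app:lem:GtofProperties}(5). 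The size of this count depends on $|\mathrm{supp}(P_1)|$, $|\mathrm{supp}(P_2)|$, and their overlap, and it is this three-way support bookkeeping that generates the new exponential bases $64^n=4^{3n}$, $52^n$, $48^n$, and $24^n$.

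I would then assemble the three layers, multiplying the per-Pauli counts, summing over the support sizes by the binomial theorem precisely as in the closing computation of Theorem~\ref{app:thm:genCSsize}, and finally dividing by the order $6$ of the permutation symmetry together with the $2$-power coming from the sign and product identifications. Collecting terms yields the four groups $\tfrac1{384}[64^n-52^n-17(16^n-4^n)]$, $\tfrac1{576}[48^n-2\cdot24^n]$, $-\tfrac1{36}[12^n-2\cdot6^n]$, and $\tfrac1{24}(13^n-1)$; the leading term $\tfrac1{384}64^n$ immediately gives $|\gen_{Tof}|\in O(n^2 64^{n-2})$, the polynomial factor being a deliberately loose matching of $64^{n-2}=2^{6n-12}$.

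The main obstacle I expect is the careful, case-split accounting of the joint centralizer $Z(P_1,P_2)$ as a function of the two supports and their intersection, while simultaneously ensuring that the three overlapping equivalences (permutation, product, sign) are neither double-counted nor under-counted. As in Theorem~\ref{app:thm:genCSsize}, the degenerate boundaries --- where a support fills all $n$ qubits, or where independence nearly fails --- must be subtracted by hand, and these corrections are the source of the lower-order bases $52^n$, $24^n$, $13^n$ and the constant. Keeping those subtractions consistent across all three layers, rather than the binomial algebra itself, is where the real work lies.
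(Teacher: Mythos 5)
Your overall architecture matches the paper's proof: bound the commuting pairs via Theorem~\ref{app:thm:genCSsize}, count admissible third Paulis per pair, then divide by the residual symmetry. But your ``genuinely new step'' contains a genuine misdirection. You claim the count of admissible $P_3$ ``depends on $|\mathrm{supp}(P_1)|$, $|\mathrm{supp}(P_2)|$, and their overlap'' and that this three-way support bookkeeping is what generates the bases $64^n$, $52^n$, $48^n$, $24^n$. This contradicts your own correct observation in the same sentence: the joint centralizer of any independent commuting pair has order $2^{2n-2}$ modulo phase, \emph{uniformly}, with no dependence on supports at all. The paper's proof exploits exactly this uniformity: for every pair the number of admissible third Paulis is bounded by $(4^{n-1}-4)/4=4^{n-2}-1$, and the new exponential bases arise purely mechanically from multiplying the pair bound of Theorem~\ref{app:thm:genCSsize} by this factor ($16^n\cdot 4^n=64^n$, $13^n\cdot 4^n=52^n$, $12^n\cdot 4^n=48^n$, $6^n\cdot 4^n=24^n$); likewise $52^n$, $24^n$, and $13^n$ are not new boundary corrections at the triple level but products of the already-existing pair-level terms with $4^{n-2}$ and with the $-1$ in $4^{n-2}-1$. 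The support-stratified count and binomial summation you plan for the third layer is therefore unnecessary work, and it would not factor into the stated closed form, whose shape is visibly $(\text{pair bound})\cdot(4^{n-2}-1)/3$.

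Your quotient bookkeeping is also off, though it happens to land on the right number. You quotient the third slot only by $P_3\sim P_1P_2P_3$ (a factor of $2$) and then divide by the full permutation order $6$; the paper quotients by the length-$4$ cycle $R\sim P_1R\sim P_2R\sim P_1P_2R$ (which follows from properties (2) and (5) of Lemma~\ref{app:lem:GtofProperties} combined with permutations) and divides only by $3$, because the pair count in Theorem~\ref{app:thm:genCSsize} is already unordered. Since $2\cdot 6=4\cdot 3$, your final expression would coincide with the paper's, but by accident: dividing by $6$ double-counts the swap $P_1\leftrightarrow P_2$ already folded into $|\gen|$, and omitting the identifications $P_3\sim P_1P_3\sim P_2P_3$ undercounts the equivalence by exactly the compensating factor. (Your $\GL_3(\mathbb{F}_2)$ framing is in fact an exact description---each class is a $3$-dimensional pairwise-commuting subspace with all $168$ ordered bases identified---but neither you nor the paper uses it; the paper's $|\gen|\cdot|S_{PQ}|/3$ counts each subspace $7/3$ times, which is consistent with the result being stated only as an inequality.) Your asymptotic step is fine and indeed simpler than the paper's: $\frac{1}{384}64^n\in O(64^{n-2})$ trivially, whereas the paper derives the $n^2$ factor via the inequality $a^m-b^m\le (a-b)\,m\,a^{m-1}$.
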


\begin{proof}
 First we count the number of commuting Pauli pairs $(P,Q)$ such that $(P,Q)\equiv (P,\pm PQ)$. Let $\gen$ denote the set of all such pairs. From Theorem \ref{app:thm:genCSsize} we know that
\begin{eqnarray}
    |\gen|\leq \frac{1}{8}\left(16^n-13^n-4^n+1\right)+\frac{1}{12}\left(12^n-2\cdot 6^n\right). \nonumber
\end{eqnarray}
Now, let us fix a particular $(P,Q)$ and count the number of commuting Paulis $R$ that satisfy the constraints in Lemma \ref{app:lem:GtofProperties}. Now number of Paulis that commute with both $P$ and $Q$ but is not equal to $\id, P,Q, PQ$ is $\frac{4^n}{4}-4=4^{n-1}-4$. The constraints imply that $(P,Q,R)\equiv (P,Q,PR)\equiv (P,Q,QR)\equiv (P,Q,PQR)$. Let $R'$ is a Pauli that commutes with both $P,Q$. If $R'=PQ$ then $PR'=PPQ=Q$, $QR'=PQR$, $PQR'=PQPR=QR$. Thus we get a cycle. Similary, for $R'=QR, PQR$. For each commuting Pauli $R$ we get such a cycle of length 4. Thus if $S_{PQ}$ is the number of commuting Paulis $R$ that satisfy the constraints, then
\begin{eqnarray}
    |S_{PQ}|\leq \frac{4^{n-1}-1}{4}=4^{n-2}-1. \nonumber
\end{eqnarray}
So total number of triples which we can have is
\begin{eqnarray}
    |\gen|\cdot |S_{PQ}|&\leq& (4^{n-2}-1)\left[\frac{1}{8}\left(16^n-13^n-4^n+1\right)+\frac{1}{12}\left(12^n-2\cdot 6^n\right)\right]    \nonumber \\
    &=&\frac{64^n}{128}-\frac{52^n}{128}+\frac{48^n}{192}-\frac{24^n}{96}-\frac{16^n}{128}+\frac{4^n}{128}-\frac{16^n}{8}+\frac{13^n}{8}-\frac{12^n}{12}+\frac{6^n}{6}+\frac{4^n}{8}-\frac{1}{8}    \nonumber \\
    &=&\frac{1}{128}[64^n-52^n-17(16^n-4^n)]+\frac{1}{192}[48^n-2\cdot 24^n]-\frac{1}{12}[12^n-2\cdot 6^n]+\frac{1}{8}(13^n-1) \nonumber
\end{eqnarray}
Now there are $\binom{3}{2}=3$ ways in which we can first place the pair $(P,Q)$ and then the position of $R$ is uniquely determined. Thus,
\begin{eqnarray}
    |\gen_{tof}|&\leq& \frac{1}{3}\left( |\gen|\cdot |S_{PQ}| \right)   \nonumber \\
    &\leq&\frac{1}{384}[64^n-52^n-17(16^n-4^n)]+\frac{1}{576}[48^n-2\cdot 24^n]-\frac{1}{36}[12^n-2\cdot 6^n]+\frac{1}{24}(13^n-1).    \nonumber
\end{eqnarray}

We observe the fact that for any $a,b \in\real$ such that $a\geq b>1$ and $m \in \intg_{+}$ we have
\begin{eqnarray}
    a^m - b^m = (a-b)(a^{m-1}+a^{m-2}b + a^{m-3}b^2+\ldots+b^{m-1}  ),   \nonumber 
\end{eqnarray}
and so
\begin{eqnarray}
    (a-b)mb^{m-1} \leq a^m-b^m \leq (a-b)ma^{m-1}.  \nonumber
\end{eqnarray}
Thus,
\begin{eqnarray}
    |\gen_{Tof}|&\leq&\frac{1}{384}\left[(64-52)n64^{n-1}-17(16-4)n4^{n-1}  \right]+\frac{2(48-24)}{576}n48^{n-1}+\frac{(13-1)}{24}n13^{n-1} \nonumber \\
    &\leq&\frac{12n}{384}\left[64^{n-1}-4^{n-1}\right]+\frac{n48^{n-1}}{12}+\frac{n13^{n-1}}{2} \nonumber \\
    &\leq&\frac{n}{32}(n-1)(64-4)64^{n-2}+\frac{n48^{n-2}48}{12}+\frac{n13^{n-2}13}{2} \nonumber \\
    &=&\frac{15n(n-1)}{8}64^{n-2}+4n48^{n-2}+\frac{13n}{2}13^{n-2} \in O\left(n^264^{n-2}  \right),  \nonumber
\end{eqnarray}
thus proving the theorem.
\end{proof}

\section{Bound on Toffoli-count of arbitrary unitaries}
\label{app:sec:lowBound}

In this section we derive some results required to prove a lower bound on the Toffoli-count of arbitrary $n$-qubit unitaries, as discussed in Section \ref{subsec:bound}. 

\begin{lemma}
If $\widetilde{U}= \prod_{j=1}^mG_{P_{1_j},P_{2_j},P_{3_j}}$, where $G_{P_{1_j},P_{2_j},P_{3_j}} = a\id+bQ_j $, $Q_j = P_{1_j}+P_{2_j}+P_{3_j}-P_{1_j}P_{2_j}-P_{2_j}P_{3_j}-P_{3_j}P_{1_j}+P_{1_j}P_{2_j}P_{3_j}$, $a = \frac{3}{4}$ and $b = \frac{1}{4}$, then
\begin{eqnarray}
    \widetilde{U} &=& a^m\id+a^{m-1}b\left(\sum_{j}Q_j\right)+a^{m-2}b^2\left(\sum_{j_1 < j_2} Q_{j_1}Q_{j_2} \right)+a^{m-3}b^3\left(\sum_{j_1< j_2 < j_3}Q_{j_1}Q_{j_2}Q_{j_3}  \right)+\cdots    \nonumber \\
    &&\cdots+ab^{m-1} \left(\sum_{j_1< j_2< \cdots < j_{m-1}} Q_{j_1}Q_{j_2}\cdots Q_{j_{m-1}}  \right) + b^m \prod_{j=1}^mQ_j  \nonumber
\end{eqnarray}
\label{app:lem:widetildeU}
\end{lemma}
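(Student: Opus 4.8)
The plan is to recognize the statement as the ordered multilinear (distributive) expansion of the product $\widetilde{U}=\prod_{j=1}^m(a\id+bQ_j)$, and to organize the resulting terms according to how many $Q$-factors are selected from the $m$ binomials. I would carry this out by a short induction on $m$, grouping terms by the cardinality $r$ of the selected index set.

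For the base case $m=1$ one simply has $\widetilde{U}=a\id+bQ_1$, which matches the claimed expression (the $r=0$ term $a\id$ together with the $r=1$ term $bQ_1$). For the inductive step, write $\widetilde{U}=\widetilde{U}_{m-1}\,(a\id+bQ_m)$, where $\widetilde{U}_{m-1}=\prod_{j=1}^{m-1}(a\id+bQ_j)$ is assumed to satisfy the formula. Distributing, the factor $a\id$ multiplies each summand of $\widetilde{U}_{m-1}$ by the central scalar $a$, raising the exponent of $a$ by one while leaving the selected index set unchanged; the factor $bQ_m$ appends $Q_m$ on the right of each product, raising the exponent of $b$ by one and adjoining the index $m$. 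Since $m$ exceeds every previously selected index, the increasing order $j_1<\cdots<j_r$ is preserved, and collecting the two contributions reproduces exactly the sum over all subsets of $\{1,\dots,m\}$ grouped by cardinality $r$, which is the asserted identity.

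The only point requiring genuine care is the non-commutativity of the operators $Q_j$: because the $Q_j$ need not commute, the order of the factors in each product $Q_{j_1}Q_{j_2}\cdots Q_{j_r}$ must agree with their order of appearance in $\widetilde{U}$. This is automatic, since every discarded factor contributes only the central operator $a\id$, which commutes with all the $Q_j$ and therefore never disturbs the relative order of the retained $Q$-factors. Consequently I expect no real obstacle here: the lemma is a routine ordered expansion of a product of binomials, and the induction closes immediately once the centrality of $a\id$ is noted. (As a sanity check one may verify the count of summands in the generic $r$-product group is $\binom{m}{r}$, matching the number of increasing index tuples $j_1<\cdots<j_r$.)
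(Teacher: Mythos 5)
Your proposal is correct and follows essentially the same route as the paper: an induction on $m$ that multiplies the inductively expanded $\prod_{j=1}^{m-1}(a\id+bQ_j)$ by $(a\id+bQ_m)$ and regroups the resulting terms by the number of $Q$-factors, exactly as in Lemma \ref{app:lem:widetildeU}. Your explicit remark that centrality of $a\id$ preserves the increasing order of the retained $Q_j$'s despite non-commutativity is a point the paper leaves implicit, but it does not change the argument.
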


\begin{proof}
    We prove this by induction. The statement holds trivially for $m=1$. For $m=2$ we have
    \begin{eqnarray}
        \prod_{j=1}^2G_{P_{1_j},P_{2_j},P_{3_j}}=(a\id+bQ_1)(a\id+bQ_2) = a^2\id+ab(Q_1+Q_2)+b^2Q_1Q_2. \nonumber
    \end{eqnarray}
We assume that the statement holds for $m-1$, that is
\begin{eqnarray}
    \prod_{j=1}^{m-1}G_{P_{1_j},P_{2_j},P_{3_j}} &=& a^{m-1}\id+a^{m-2}b\left(\sum_{j=1}^{m-1}Q_j\right)+a^{m-3}b^2\left(\sum_{\substack{j_1, j_2 =1 \\ j_1<j_2}}^{m-1} Q_{j_1}Q_{j_2} \right)+\cdots\nonumber \\
    &&\cdots+ab^{m-2} \left(\sum_{\substack{j_1,j_2,\ldots,j_{m-2}=1 \\ j_1< j_2< \cdots < j_{m-2}}}^{m-1} Q_{j_1}Q_{j_2}\cdots Q_{j_{m-2}}  \right) + b^{m-1} \prod_{j=1}^{m-1}Q_j.  \nonumber
\end{eqnarray}
Therefore,
\begin{eqnarray}
   \widetilde{U} &=& \left(  \prod_{j=1}^{m-1}G_{P_{1_j},P_{2_j},P_{3_j}}\right)(a\id+bQ_m) \nonumber \\
  &=& \left(a^{m-1}\id+a^{m-2}b\left(\sum_{j=1}^{m-1}Q_j\right)+a^{m-3}b^2\left(\sum_{\substack{j_1, j_2 =1 \\ j_1<j_2}}^{m-1} Q_{j_1}Q_{j_2} \right)+\cdots \right. \nonumber \\
    &&\left.\cdots+ab^{m-2} \left(\sum_{\substack{j_1,j_2,\ldots,j_{m-2}=1 \\ j_1< j_2< \cdots < j_{m-2}}}^{m-1} Q_{j_1}Q_{j_2}\cdots Q_{j_{m-2}}  \right) + b^{m-1} \prod_{j=1}^{m-1}Q_j\right)(a\id+bQ_m)  \nonumber \\
    &=&a^m\id+a^{m-1}b\left(\sum_{j=1}^{m-1}Q_j\right)+a^{m-2}b^2\left(\sum_{\substack{j_1,j_2=1 \\ j_1<j_2}}^{m-1}Q_{j_1}Q_{j_2} \right)+\cdots+a^2b^{m-2} \left(\sum_{\substack{j_1,j_2,\ldots,j_{m-2}=1 \\ j_1< j_2< \cdots < j_{m-2}}}^{m-1} Q_{j_1}Q_{j_2}\cdots Q_{j_{m-2}}  \right)  \nonumber \\
    &&+ab^{m-1}\prod_{j=1}^{m-1}Q_j  +a^{m-1}bQ_m+ a^{m-2}b^2 \left(\sum_{j=1}^{m-1}Q_j \right)Q_m +a^{m-3}b^3 \left(\sum{\substack{j_1,j_2=1 \\ j_1<j_2}}^{m-1}Q_{j_1}Q_{j_2}  \right)Q_m+\cdots\nonumber \\
    &&\cdots+ab^{m-1}\left(\sum_{\substack{j_1,j_2,\ldots,j_{m-2}=1 \\ j_1< j_2< \cdots < j_{m-2}}}^{m-1} Q_{j_1}Q_{j_2}\cdots Q_{j_{m-2}}  \right)Q_m +b^m \left(\prod_{j=1}^{m-1}Q_j\right)Q_m    \nonumber \\
    &=&a^m\id+a^{m-1}b\left(\sum_{j}Q_j\right)+a^{m-2}b^2\left(\sum_{j_1 < j_2} Q_{j_1}Q_{j_2} \right)+a^{m-3}b^3\left(\sum_{j_1< j_2 < j_3}Q_{j_1}Q_{j_2}Q_{j_3}  \right)+\cdots    \nonumber \\
    &&\cdots+ab^{m-1} \left(\sum_{j_1< j_2< \cdots < j_{m-1}} Q_{j_1}Q_{j_2}\cdots Q_{j_{m-1}}  \right) + b^m \prod_{j=1}^mQ_j , \nonumber
\end{eqnarray}
    thus proving the lemma.
\end{proof}

\subsection{Pauli basis expansion of rotation unitaries}
\label{app:subsec:pauliBasis}

Now we derive the Pauli basis expansions of the rotation unitaries considered in Section \ref{subsubsec:boundRot}. For convenience, especially when working with more than 2-qubit unitaries, we use $\diag(d_1,d_2,\ldots,d_N)$ to denote an $N\times N$ diagonal matrix with entries $d_1, d_2, \ldots, d_N$ along the diagonal and 0 in the remaining places.

First we consider the single qubit z-rotation $R_z$ gate. 
\begin{eqnarray}
    R_z(\theta) &=& e^{-i\frac{\theta}{2}} \begin{bmatrix}
    1 & 0 \\
    0 & e^{i\theta}
    \end{bmatrix} = e^{-i\frac{\theta}{2}}\frac{1}{2}\begin{bmatrix} (1+e^{i\theta})+(1-e^{i\theta}) & 0 \\ 0 & (1+e^{i\theta})-(1-e^{i\theta})\end{bmatrix} \nonumber \\
    &=&e^{-i\frac{\theta}{2}}\frac{1}{2}\left((1+e^{i\theta})\begin{bmatrix} 1 & 0 \\ 0 & 1  \end{bmatrix}+(1-e^{i\theta})\begin{bmatrix} 1 & 0 \\ 0 & -1 \end{bmatrix}  \right)    \nonumber \\
    &=&e^{-i\frac{\theta}{2}} \left( \frac{1+e^{i\theta}}{2}\id + \frac{1-e^{i\theta}}{2}\Z \right) 
    \label{eqn:app:Rz}
\end{eqnarray}
$R_z(\theta) = e^{-i\frac{\theta}{2}}R_n(\theta)$ i.e. these two rotation gates are equivalent upto a global phase, but not their controlled versions. Now we consider $R_n(\theta)$, controlled on 1 qubit. With similar derivation as in Equation \ref{eqn:app:Rz}, we first express
\begin{eqnarray}
    cR_n(\theta) &=& \begin{bmatrix}
    1 & 0 & 0 & 0 \\
    0 & 1 & 0 & 0 \\
    0 & 0 & 1 & 0 \\
    0 & 0 & 0 & e^{i\theta}
    \end{bmatrix} = \frac{1+e^{i\theta}}{2} (\id\otimes\id) + \frac{1-e^{i\theta}}{2} \cz,  \label{eqn:app:cRn0}
\end{eqnarray}
where the $CZ$ gate can be expressed,
\begin{eqnarray}
    CZ &=& \begin{bmatrix}
    1 & 0 & 0 & 0 \\
    0 & 1 & 0 & 0 \\
    0 & 0 & 1 & 0 \\
    0 & 0 & 0 & -1
    \end{bmatrix} = \frac{1}{2}\left(\id\otimes\id + \Z\otimes\id + \id\otimes\Z - \Z\otimes\Z \right)   \label{eqn:app:CZ}
\end{eqnarray}
and thus
\begin{eqnarray}
    cR_n(\theta) &=& \frac{1+e^{i\theta}}{2} (\id\otimes\id) + \frac{1-e^{i\theta}}{2} \left( \frac{1}{2}\left(\id\otimes\id + \Z\otimes\id + \id\otimes\Z - \Z\otimes\Z \right) \right)  \nonumber \\
    &=& \frac{3+e^{i\theta}}{4} (\id\otimes\id) + \frac{1-e^{i\theta}}{4} \left( (\Z\otimes\id) + (\id\otimes\Z) - (\Z\otimes\Z)  \right).
    \label{eqn:app:cRn}
\end{eqnarray}
Next, we consider $cR_z(\theta)$ unitary i.e. z-rotation controlled on 1 qubit.
\begin{eqnarray}
cR_z(\theta) &=& \begin{bmatrix}
    1 & 0 & 0 & 0 \\
    0 & 1 & 0 & 0 \\
     0 & 0 & e^{-i\theta/2} & 0 \\
    0 & 0 & 0 & e^{i\theta/2}
    \end{bmatrix} \nonumber \\
    &=& \diag\left(1,1, \cos(\theta/2), \cos(\theta/2) \right)+\diag\left(0,0,-i\sin(\theta/2),i\sin(\theta/2)  \right)   \nonumber \\
    &=&\frac{1}{2}\left( \left(1+\cos\frac{\theta}{2}\right)\diag(1,1,1,1)+\left(1-\cos\frac{\theta}{2}\right)\diag(1,1,-1,-1)  \right)    \nonumber \\
    &&+\frac{i}{2}\left(\left(\sin\frac{\theta}{2}\right)\diag(1,-1,-1,1)-\left(\sin\frac{\theta}{2}\right)\diag(1,-1,1,-1)  \right)    \nonumber \\
    &=&\left(\frac{1+\cos\frac{\theta}{2}}{2} \right)(\id\otimes\id)+\left(\frac{1+\cos\frac{\theta}{2}}{2} \right)(\Z\otimes\id)   \nonumber \\
    &&-i\left(\frac{\sin\frac{\theta}{2}}{2}\right)(\id\otimes\Z)+i\left(\frac{\sin\frac{\theta}{2}}{2}  \right)(\Z\otimes\Z)
    \label{eqn:app:cRz}
\end{eqnarray}
Now, we consider the 2-qubit Given's rotation.
\begin{eqnarray}
    Givens(\theta) &=& \begin{bmatrix}
    1 & 0 & 0 & 0 \\
    0 & \cos(\theta) & -\sin(\theta) & 0 \\
    0 & \sin(\theta) & \cos(\theta) & 0 \\
    0 & 0 & 0 & 1
    \end{bmatrix}   \nonumber \\
    &=&\begin{bmatrix}
    1 & 0 & 0 & 0 \\
    0 & \cos(\theta) & 0 & 0 \\
    0 & 0 & \cos(\theta) & 0 \\
    0 & 0 & 0 & 1
    \end{bmatrix}+ \begin{bmatrix}
    0 & 0 & 0 & 0 \\
    0 & 0 & -\sin(\theta) & 0 \\
    0 & \sin(\theta) & 0 & 0 \\
    0 & 0 & 0 & 0
    \end{bmatrix}:=A+B  \nonumber
\end{eqnarray}
Using arguments as before, we can express $A$ as follows.
\begin{eqnarray}
    A &=& \diag(1,\cos\theta,\cos\theta,1) = \frac{1}{2}\left((1+\cos\theta)\diag(1,1,1,1)+(1-\cos\theta)\diag(1,-1,-1,1)  \right)    \nonumber \\
    &=&\left( \frac{1+\cos\theta}{2} \right)(\id\otimes\id)+\left(\frac{1-\cos\theta}{2}\right)(\Z\otimes\Z)    \nonumber
\end{eqnarray}
We expand $B$ as follows.
\begin{eqnarray}
    B&=&\frac{1}{2i}\left(\sin\theta\begin{bmatrix} 0 & 0 & 0 & i \\ 0 & 0 & -i & 0 \\ 0 & i & 0 & 0 \\ -i & 0 & 0 & 0  \end{bmatrix}+\sin\theta\begin{bmatrix} 0 & 0 & 0 & -i \\ 0 & 0 & -i & 0 \\ 0 & i & 0 & 0 \\ i & 0 & 0 & 0  \end{bmatrix}   \right) \nonumber \\
    &=&-i\frac{\sin\theta}{2}\left(-(\X\otimes\Y)+(\Y\otimes\X)\right)
\end{eqnarray}
So,
\begin{eqnarray}
    Givens(\theta)= \left( \frac{1+\cos\theta}{2} \right)(\id\otimes\id)+i\left(\frac{\sin\theta}{2}\right)(\X\otimes\Y)-i\left(\frac{\sin\theta}{2}\right)(\Y\otimes\X)+\left(\frac{1-\cos\theta}{2}\right)(\Z\otimes\Z)   .\nonumber
\end{eqnarray}

Now we consider the following rotation gates, controlled on 2 qubits. Similar to Equation \ref{eqn:app:cRn0},
\begin{eqnarray}
    ccR_n(\theta) &=& \diag\left(1,1,1,1,1,1,1,e^{i\theta}  \right) = \left(\frac{1+e^{i\theta}}{2}\right)(\id\otimes\id\otimes\id)+\left(\frac{1+e^{i\theta}}{2}\right)CCZ, \nonumber
\end{eqnarray}
where the double-controlled Z unitary can be expanded as,
\begin{eqnarray}
    CCZ&=&\diag\left(1,1,1,1,1,1,1,-1  \right)   
    =\frac{3}{4}(\id\otimes\id\otimes\id)+\frac{1}{4}\left((\id\otimes\id\otimes\Z)+(\id\otimes\Z\otimes\id)+(\Z\otimes\id\otimes\id)  \right)  \nonumber \\
    &&-\frac{1}{4}\left((\id\otimes\Z\otimes\Z)+(\Z\otimes\Z\otimes\id)+(\Z\otimes\id\otimes\Z)  \right)+\frac{1}{4}(\Z\otimes\Z\otimes\Z), \nonumber
\end{eqnarray}
and therefore,
\begin{eqnarray}
    ccR_n(\theta)&=&\left(\frac{7+e^{i\theta}}{8}\right)(\id\otimes\id\otimes\id)+\left(\frac{1-e^{i\theta}}{8}\right)\left((\id\otimes\id\otimes\Z)+(\id\otimes\Z\otimes\id)+(\Z\otimes\id\otimes\id)  \right) \nonumber \\
    &&-\left(\frac{1-e^{i\theta}}{8}\right)\left((\id\otimes\Z\otimes\Z)+(\Z\otimes\Z\otimes\id)+(\Z\otimes\id\otimes\Z)  \right)+\left(\frac{1-e^{i\theta}}{8}\right)(\Z\otimes\Z\otimes\Z).   \nonumber
    \label{eqn:app:ccRn}
\end{eqnarray}
And finally,
\begin{eqnarray}
    ccR_z(\theta) &=&\diag\left(1,1,1,1,1,1,e^{-i\theta/2},e^{i\theta/2}  \right)   \nonumber \\
    &=&\diag\left(1,1,1,1,1,1,\cos\frac{\theta}{2},\cos\frac{\theta}{2}  \right)+i\diag\left(0,0,0,0,0,0,-\sin\frac{\theta}{2},\sin\frac{\theta}{2}  \right)        \nonumber \\
    &:=&A+B .  \nonumber
\end{eqnarray}
We can expand $A$ as,
\begin{eqnarray}
    A&=&\left(\frac{1+\cos\frac{\theta}{2}}{2}\right)(\id\otimes\id\otimes\id)+\left(\frac{1-\cos\frac{\theta}{2}}{2}\right)\diag(1,1,1,1,1,1,-1,-1)   \nonumber \\
    &=&\left(\frac{1+\cos\frac{\theta}{2}}{2}\right)(\id\otimes\id\otimes\id)+\left(\frac{1-\cos\frac{\theta}{2}}{2}\right)(\cz\otimes\id), 
\end{eqnarray}
and since from Equation \ref{eqn:app:CZ},
\begin{eqnarray}
    \cz\otimes\id&=&\frac{1}{2}\left(\id\otimes\id\otimes\id + \Z\otimes\id\otimes\id + \id\otimes\Z\otimes\id -\Z\otimes\Z\otimes\id \right),   \nonumber
\end{eqnarray}
so we have
\begin{eqnarray}
    A&=&\left( \frac{3+\cos\frac{\theta}{2} }{4} \right)(\id\otimes\id\otimes\id)+\left(\frac{1-\cos\frac{\theta}{2}}{4}\right)(\Z\otimes\id\otimes\id + \id\otimes\Z\otimes\id -\Z\otimes\Z\otimes\id). \nonumber
\end{eqnarray}
We can expand $B$ as follows.
\begin{eqnarray}
    B&=&i\frac{\sin\frac{\theta}{2}}{4}\left(-\diag(1,-1,1,-1,1,-1,1,-1)+\diag(1,-1,-1,1,1,-1,-1,1) \right. \nonumber \\
    &&\left.+\diag(1,-1,1,-1,-1,1,-1,1)-\diag(1,-1,-1,1,-1,1,1,-1)  \right) \nonumber \\
    &=&i\frac{\sin\frac{\theta}{2}}{4}\left(-\id\otimes\id\otimes\Z+\id\otimes\Z\otimes\Z+\Z\otimes\id\otimes\Z-\Z\otimes\Z\otimes\Z  \right)
\end{eqnarray}
Therefore, we obtain the following.
\begin{eqnarray}
    ccR_z(\theta)&=&\left( \frac{3+\cos\frac{\theta}{2} }{4} \right)(\id\otimes\id\otimes\id)+\left(\frac{1-\cos\frac{\theta}{2}}{4}\right)(\Z\otimes\id\otimes\id + \id\otimes\Z\otimes\id -\Z\otimes\Z\otimes\id) \nonumber \\
    &&+i\frac{\sin\frac{\theta}{2}}{4}\left(-\id\otimes\id\otimes\Z+\id\otimes\Z\otimes\Z+\Z\otimes\id\otimes\Z-\Z\otimes\Z\otimes\Z  \right)   
    \label{eqn:app:ccRz}
\end{eqnarray}

\section{Channel representation of $\chan{G_{P_1,P_2,P_3}}$}
\label{app:chanRep}

We recall that 
\begin{eqnarray}
G_{P_1,P_2,P_3}=\frac{3}{4}\id+\frac{1}{4}Q,
\label{eqn:app:Gp1p2p3}
\end{eqnarray}
where $Q=P_1+P_2+P_3-P_1P_2-P_2P_3-P_3P_1+P_1P_2P_3$.

\begin{lemma}
    We have $\tr(Q)=0$. Also, $Q^2=7\id-6Q$, implying $\tr(Q^2)=7\cdot 2^n$.
    \label{lem:trQ_Q2}
\end{lemma}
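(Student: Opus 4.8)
The plan is to treat the two claims separately, exploiting both the description of $Q$ as a linear combination of seven Pauli strings and the fact that $G_{P_1,P_2,P_3}$ is a Clifford conjugate of the self-inverse Toffoli gate.

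First I would establish $\tr(Q)=0$ by linearity: $\tr(Q)$ is the sum of the traces of the seven terms $P_1,P_2,P_3,P_1P_2,P_2P_3,P_3P_1,P_1P_2P_3$. Each is, up to a sign $\pm 1$, a non-identity element of $\pauli_n$. Indeed, since $P_1,P_2,P_3$ pairwise commute, each product is Hermitian and hence equals $\pm$ a Pauli string; moreover each is non-identity: $P_i\neq P_j$ forces $P_iP_j\neq\id$ (a Hermitian Pauli squares to $\id$, so $P_iP_j=\id$ only if $P_j=P_i$), and the hypothesis $P_3\neq P_1P_2$ forces $P_1P_2P_3\neq\id$. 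Since every non-identity Pauli is traceless, all seven traces vanish and so does $\tr(Q)$.

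For the relation $Q^2=7\id-6Q$ I would avoid the brute-force expansion of the forty-nine cross terms and instead use $G_{P_1,P_2,P_3}^2=\id$. This follows from Equation \ref{eqn:Gp1p2p3}, which writes $G_{P_1,P_2,P_3}=C(\tof_{i,j;k})C^{\dagger}$ for a Clifford $C$: since Toffoli is self-inverse, $G^2=C\tof^2C^{\dagger}=\id$. Writing $Q=4G-3\id$ (from $G=\frac34\id+\frac14 Q$) then gives
\begin{eqnarray}
Q^2=(4G-3\id)^2=16G^2-24G+9\id=25\id-24G, \nonumber
\end{eqnarray}
and substituting $24G=6(Q+3\id)=6Q+18\id$ yields $Q^2=7\id-6Q$. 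Finally, $\tr(Q^2)=7\tr(\id)-6\tr(Q)=7\cdot 2^n-0=7\cdot 2^n$, using that $\id$ is the $2^n\times 2^n$ identity and $\tr(Q)=0$ from the first part.

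I do not anticipate a serious obstacle: the only step where a direct attack turns messy is the expansion of $Q^2$, and the conjugation trick sidesteps it completely. The one point demanding care is the claim that every summand of $Q$ is a non-identity Pauli up to sign, which is precisely where the distinctness hypotheses $P_1\neq P_2\neq P_3$ and $P_3\neq P_1P_2$ from the definition of $\gen_{Tof}$ enter; I would state explicitly which hypothesis rules out each potential identity term.
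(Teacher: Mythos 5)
Your proof is correct, and for the central identity it takes a genuinely different route from the paper. The paper proves $Q^2=7\id-6Q$ by direct expansion of the square (Appendix \ref{app:chanRep}), collecting all cross terms using only pairwise commutation and $P_i^2=\id$; you instead write $Q=4G_{P_1,P_2,P_3}-3\id$ and use $G_{P_1,P_2,P_3}^2=C\,\tof^2\,C^{\dagger}=\id$, which collapses the computation to two lines. What each buys: your argument is shorter and conceptual, but it imports the fact that $G_{P_1,P_2,P_3}$ is an involution. That is immediate for a triple presented as in Equation \ref{eqn:Gp1p2p3}, but for an abstract triple merely satisfying the constraints of Equation \ref{eqn:genTOF} it requires the paper's circuit-construction argument (the paragraph following Table \ref{tab:genTOF}, using \cite{2020_dBT}) to produce the conjugating Clifford $C$; you should cite that step explicitly. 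The paper's expansion, by contrast, is self-contained and shows $Q^2=7\id-6Q$ is a formal consequence of commutation and involutivity alone, with no appeal to any Toffoli realization --- which is convenient since the lemma is also invoked in the lower-bound analysis of Section \ref{subsec:bound} at the level of the Paulis.

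One caveat on your $\tr(Q)=0$ step, which is otherwise more careful than the paper's one-line assertion. Products of commuting unsigned Pauli strings can acquire a sign, e.g. $(\X\otimes\X)(\Y\otimes\Y)=-\Z\otimes\Z$, so besides $P_1P_2P_3=\id$ (ruled out by $P_3\neq P_1P_2$, as you say) you must also exclude $P_1P_2P_3=-\id$, i.e. $P_3=-P_1P_2$, which the literal hypothesis does not forbid. This is not vacuous: for $(P_1,P_2,P_3)=(\X\X,\Y\Y,\Z\Z)$ all of your stated hypotheses hold, yet $Q=2(\X\X+\Y\Y+\Z\Z)-\id$, so $\tr(Q)=-2^n$ and $\tr(Q^2)=13\cdot 2^n$, and the trace claims of the lemma fail (the identity $Q^2=7\id-6Q$ still holds there, consistent with $G_{\X\X,\Y\Y,\Z\Z}=\swap$ being an involution). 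Such triples are indeed excluded from $\gen_{Tof}$: Algorithm \ref{alg:genTOF} rejects $P_1P_2=\pm P_3$ outright, and the equivalence $(P_1,P_2,P_3)\equiv(P_1,\pm P_1P_2,P_3)$ maps $(\X\X,\Y\Y,\Z\Z)$ to the degenerate $(\X\X,\Z\Z,\Z\Z)$. So the fix is one sentence, but it should be said; note that for the pairwise products your bookkeeping already covers the signed case, since $P_iP_j=-\id$ would force $P_j=-P_i$, impossible for unsigned strings. The paper's own proof silently assumes the same exclusion, so on this point you are no worse off, merely not yet airtight by your own announced standard.
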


\begin{proof}
    $Q$ is sum of non-identity Paulis and so $\tr(Q)=0$.

    \begin{eqnarray}
        Q^2&=&P_1^2+P_2^2+P_3^2+(P_1P_2)^2+(P_2P_3)^2+(P_3P_1)^2+(P_1P_2P_3)^2  \nonumber \\
        &&+2P_1(P_2+P_3-P_1P_2-P_2P_3-P_3P_1+P_1P_2P_3)+2P_2(P_3-P_1P_2-P_2P_3-P_3P_1+P_1P_2P_3)    \nonumber \\
        &&+2P_3(-P_1P_2-P_2P_3-P_3P_1+P_1P_2P_3)-2P_1P_2(-P_2P_3-P_3P_1+P_1P_2P_3)  \nonumber \\
        &&-2P_2P_3(-P_3P_1+P_1P_2P_3)-2P_3P_1(P_1P_2P_3)    \nonumber \\
        &=&7\id+2(P_1P_2+P_1P_3-P_2-P_1P_2P_3-P_3+P_2P_3)+2(P_2P_3-P_1-P_3-P_1P_2P_3+P_3P_1)    \nonumber \\
        &&+2(-P_1P_2P_3-P_2-P_1+P_1P_2)+2(P_1P_3+P_2P_3-P_3)+2(P_1P_2-P_1)-2P_2 \nonumber \\
        &=&7\id-6Q  \nonumber 
    \end{eqnarray}
   Thus $\tr[Q^2]=7\cdot 2^n$. 
\end{proof}

\begin{lemma}
   \begin{eqnarray}
       \tr[(P_rQ)^2]&=&\sum_{j=1}^3\tr[(P_rP_j)^2]+\sum_{j<k}\tr[(P_rP_jP_k)^2]+\tr[(P_rP_1P_2P_3)^2]   \nonumber
   \end{eqnarray} 
   \label{lem:tr_PrQ2}
\end{lemma}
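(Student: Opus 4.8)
The plan is to expand $P_rQ$ into its seven Pauli summands and exploit the orthogonality of distinct Paulis under the trace. Writing $Q=\sum_{i=1}^{7}s_iT_i$, where the $T_i$ are the seven Pauli products $P_1,P_2,P_3,P_1P_2,P_2P_3,P_3P_1,P_1P_2P_3$ and the $s_i\in\{\pm 1\}$ are their signs in $Q$, I would first record
\[
\tr\!\left[(P_rQ)^2\right]=\sum_{i,j}s_is_j\,\tr\!\left[P_rT_iP_rT_j\right].
\]
The whole argument then reduces to showing that only the diagonal terms $i=j$ survive.

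For the individual summands I would use the standard fact that any two elements of $\pauli_n$ either commute or anticommute, so $P_rT_i=\epsilon_iT_iP_r$ with $\epsilon_i\in\{\pm 1\}$. Since $P_r^2=\id$ this gives $P_rT_iP_r=\epsilon_iT_i$, hence $\tr[P_rT_iP_rT_j]=\epsilon_i\tr[T_iT_j]$. Because $P_1,P_2,P_3$ pairwise commute, each $T_i$ is again a Hermitian Pauli and so is every product $T_iT_j$; its trace vanishes unless $T_iT_j\propto\id$, i.e.\ unless $T_i$ and $T_j$ coincide up to sign.

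The key step is therefore to verify that the seven $T_i$ are pairwise non-proportional. This follows from the defining constraints of $\gen_{Tof}$: the $P_i$ are distinct, non-identity and pairwise commuting, and $P_3\neq P_1P_2$, which by commutativity together with $P_i^2=\id$ is equivalent to $P_1\neq P_2P_3$ and $P_2\neq P_1P_3$. A short case check then rules out every coincidence among the seven products; for instance $P_1P_2P_3=P_1$ would force $P_2=P_3$, while $P_1P_2=P_3$ is excluded by hypothesis. Consequently $\tr[T_iT_j]=0$ for all $i\neq j$, so the off-diagonal contribution is identically zero.

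What remains is the diagonal, where $s_i^2=1$ and $\tr[P_rT_iP_rT_i]=\tr[(P_rT_i)^2]$; summing over $i$ reproduces exactly $\sum_{j=1}^3\tr[(P_rP_j)^2]+\sum_{j<k}\tr[(P_rP_jP_k)^2]+\tr[(P_rP_1P_2P_3)^2]$, the minus signs attached to the two-fold products disappearing under squaring. I expect the only genuine subtlety to be the distinctness bookkeeping in the presence of possible global signs in the products $P_iP_j$: the cleanest way to dispose of this is to pass to the phase-free symplectic representation of $\pauli_n$, in which each $T_i$ becomes a vector and the desired non-proportionality is the statement that $v_1,v_2,v_3,v_1+v_2,v_2+v_3,v_3+v_1,v_1+v_2+v_3$ are distinct, which is immediate once $v_1,v_2,v_3$ are nonzero, distinct, and satisfy $v_3\neq v_1+v_2$.
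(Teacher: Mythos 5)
Your proof is correct and follows essentially the same route as the paper's: expand $P_rQ$ into its seven Pauli summands and observe that all cross terms vanish under the trace. The paper's own proof simply asserts $\tr[(P_rQ_j)(P_rQ_k)]=0$ for $j\neq k$ without elaboration, whereas you supply the justification it leaves implicit (commutation/anticommutation with $P_r$, Hermiticity of the products, and the pairwise non-proportionality of the seven products under the constraints $P_i\neq\id$, $P_i\neq P_j$, $P_3\neq P_1P_2$) -- a welcome but not divergent addition.
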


\begin{proof}
    Let $Q=\sum_{j=1}^7Q_j$, where $Q_j$ are the Paulis appearing in the expression of $Q$. Then $P_rQ=\sum_{j=1}^7P_rQ_j$. Thus,
    \begin{eqnarray}
        (P_rQ)^2&=&\sum_{j=1}^7(P_rQ_j)^2+\sum_{j\neq k}(P_rQ_j)(P_rQ_k)    \nonumber \\
        \tr[(P_rQ)^2]&=&\sum_{j=1}^7\tr[(P_rQ_j)^2]+\sum_{j\neq k}\tr[(P_rQ_j)(P_rQ_k)] \nonumber \\
        &=&\sum_{j=1}^7\tr[(P_rQ_j)^2], \nonumber
    \end{eqnarray}
    proving the lemma.
\end{proof}

\begin{theorem}
Let $G_{P_1,P_2,P_3}\in\gen_{Tof}$, where $P_1,P_2,P_3\in\pauli_n\setminus\{\id\}$. Then its channel representation $\chan{G_{P_1,P_2,P_3}}$ has the following properties.
\begin{enumerate}
    \item The diagonal entries are $1$ or $\frac{1}{2}$.

    \item If a diagonal entry is 1 then all other entries in the corresponding row and column is 0.

    \item If a diagonal entry is $\frac{1}{2}$ then there exists three entries in the corresponding row and column that are equal to $\pm\frac{1}{2}$, rest is 0.

    \item Exactly $2^{2n-3}$ i.e. $\frac{1}{8}^{th}$ of the diagonal elements are $1$, while the remaining are $\frac{1}{2}$.
\end{enumerate}
    \label{app:thm:chanRep}
\end{theorem}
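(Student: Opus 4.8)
The plan is to compute the entries $\chan{G_{P_1,P_2,P_3}}_{rs}=\frac{1}{2^n}\tr\!\big(P_r\,G\,P_s\,G\big)$ directly, where I write $G:=G_{P_1,P_2,P_3}=a\id+bQ$ with $a=\tfrac34$, $b=\tfrac14$ and $Q$ as in Lemma~\ref{lem:trQ_Q2}. Since $G$ is a Clifford conjugate of the Hermitian self-inverse Toffoli, $G$ is itself Hermitian with $G^2=\id$, so $\chan{G}$ is a real, symmetric, orthogonal matrix; in particular each row and column is a unit vector. Expanding $GP_sG=a^2P_s+ab(QP_s+P_sQ)+b^2QP_sQ$ and applying $\frac{1}{2^n}\tr(P_r\,\cdot\,)$ to read off the coefficient of $P_r$ is the starting point for all four parts.

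For the diagonal (properties 1 and 4), the cross term contributes $\frac{ab}{2^n}\tr\!\big(P_s(QP_s+P_sQ)\big)=\frac{2ab}{2^n}\tr(Q)=0$ by Lemma~\ref{lem:trQ_Q2}, so $\chan{G}_{ss}=a^2+\frac{b^2}{2^n}\tr[(P_sQ)^2]$. By Lemma~\ref{lem:tr_PrQ2} this trace is the sum of the seven terms $\tr[(P_sR)^2]$ with $R$ ranging over $E:=\{P_1,P_2,P_3,P_1P_2,P_2P_3,P_3P_1,P_1P_2P_3\}$, and each equals $+2^n$ if $P_s$ commutes with $R$ and $-2^n$ if it anticommutes, because $(P_sR)^2=\pm\id$ accordingly. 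The map sending $R$ to its commutation sign with $P_s$ is the restriction of a symplectic linear functional to the group $\mathcal{V}:=\langle P_1,P_2,P_3\rangle\cong\mathbb{F}_2^{\,3}$, which has order $8$ (the hypotheses that $P_1,P_2,P_3$ are distinct, non-identity, pairwise commuting with $P_3\neq P_1P_2$ force independence). Either this functional vanishes on $\mathcal{V}$, so all seven terms give $+2^n$ and $\chan{G}_{ss}=a^2+7b^2=1$; or it is nonzero, its kernel has index $2$, so exactly three of the seven nontrivial elements commute and four anticommute, giving $\chan{G}_{ss}=a^2-b^2=\tfrac12$. This proves property 1. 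For property 4, the first case occurs precisely when $P_s$ lies in the symplectic complement $\mathcal{V}^{\perp}$, a subspace of $\mathbb{F}_2^{\,2n}$ of dimension $2n-3$; hence exactly $2^{2n-3}$ of the $4^n$ Paulis give diagonal entry $1$, i.e. one eighth.

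For the off-diagonal entries (properties 2 and 3) I would first note that $\tr(P_rP_sQ)$, $\tr(P_rQP_s)$ and $\tr(P_rQP_sQ)$ can all be nonzero only when $P_rP_s$ lies, up to phase, in $\mathcal{V}$, i.e. only when $P_r$ and $P_s$ lie in the same coset of $\mathcal{V}$ in $\pauli_n$; thus $\chan{G}$ is block diagonal with respect to the coset partition, with blocks of size $8$. Since all eight Paulis of a coset share the same commutation functional, each block is either an identity block (a coset inside $\mathcal{V}^{\perp}$) or a block with every diagonal entry $\tfrac12$. The identity blocks give property 2 immediately via orthogonality: a unit row containing an entry $1$ has all other entries $0$. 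For a $\tfrac12$-block I would evaluate the admissible off-diagonal positions $P_r\propto P_s g_0$ with $g_0\in E$; the three traces then contribute terms of magnitude $ab=\tfrac{3}{16}$ and $b^2=\tfrac1{16}$ whose signs are fixed by the pairwise commutation relations among $P_s$, $g_0$ and the elements of $E$, and tracking them shows each off-diagonal entry is $0$ or $\pm\tfrac12$. Orthonormality of the row then finishes property 3: with diagonal entry $\tfrac12$ the remaining squared entries sum to $\tfrac34$, forcing exactly three of them to equal $\pm\tfrac12$.

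The main obstacle is precisely this last sign bookkeeping — checking that the signed combination of the $\tfrac{3}{16}$ and $\tfrac1{16}$ contributions always collapses to exactly $0$ or $\pm\tfrac12$ and never to some other dyadic value. A cleaner route that sidesteps it is to use $\chan{G}=\chan{C}\,\chan{\tof}\,\chan{C}^{-1}$ with $\chan{C}$ a signed permutation (Fact~\ref{fact:chanRepCliff}): conjugation by a signed permutation only permutes the diagonal (preserving its entries) and relabels and re-signs each row, so all four properties transfer from $\chan{\tof}$ to $\chan{G}$, and it suffices to verify them once for the standard Toffoli tensored with identity, a single finite $8\times 8$ block check.
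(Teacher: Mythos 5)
Your proposal is correct in substance, and it takes a genuinely different route from the paper's proof. The paper proceeds by brute force: it expands $\chan{G_{P_1,P_2,P_3}}[P_r,P_s]$ into the four traces $a+b+c+d$ and runs an exhaustive case analysis over the commutation pattern of $P_r$ (and $P_s$) with $P_1,P_2,P_3$ — Cases i--iv for the diagonal and the fourteen subcases I(a)--(g), II(a)--(g) for the off-diagonal entries. Your character-theoretic treatment of the diagonal (the commutation functional on $\mathcal{V}=\langle P_1,P_2,P_3\rangle\cong\mathbb{F}_2^{3}$ either vanishes, giving $a^2+7b^2=1$, or has index-$2$ kernel, giving $a^2-b^2=\tfrac12$) compresses the paper's Cases i--iv into one stroke, and your count of diagonal $1$'s via $\dim\mathcal{V}^{\perp}=2n-3$ is equivalent to the paper's successive-halving argument for point (4). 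Two of your devices are actually sharper than the paper's: exploiting that $\chan{G_{P_1,P_2,P_3}}$ is real, symmetric and orthogonal yields point (2) instantly (a unit row containing a $1$ is a standard basis vector), and yields the \emph{exactly three} count in point (3) from $\tfrac14+k\cdot\tfrac14=1$ — a count the paper only asserts at the end of its case analysis ("the above arguments also imply...") rather than derives explicitly. The coset block structure you identify (entries vanish unless $P_rP_s\in\mathcal{V}$ up to phase) is implicit in, but never isolated by, the paper's subcases.

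The one genuine soft spot is the one you flag yourself: in your first route, the claim that every off-diagonal entry collapses to $0$ or $\pm\tfrac12$ (and never to some other dyadic value compatible with a unit row, such as $\pm\tfrac14$) is precisely the content of the paper's fourteen subcases, and without it your orthonormality count for point (3) does not go through. Your fallback route does close this soundly: since conjugation by the signed permutation $\chan{C}$ (Fact \ref{fact:chanRepCliff}) permutes the diagonal with signs cancelling and merely permutes and re-signs rows and columns, all four properties transfer from $\chan{\tof\otimes\id}$ to $\chan{G_{P_1,P_2,P_3}}$. Two points to tighten there. First, you should justify that every element of $\gen_{Tof}$ \emph{is} a Clifford conjugate of a Toffoli; this is where the independence of the triple (i.e.\ $P_3\neq\pm P_1P_2$, enforced by the definition of $\gen_{Tof}$) is consumed, via the paper's circuit-construction argument mapping $(P_1,P_2,P_3)$ to $(\X_{(a)},\Z_{(b)},\Z_{(c)})$. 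Second, the residual verification is not literally a single $8\times 8$ block: for $n=3$ one has $\mathcal{V}=\mathcal{V}^{\perp}$, so $\chan{\tof}$ splits into one identity block and \emph{seven} nontrivial $8\times 8$ blocks indexed by the nonzero characters, which a priori differ in their sign patterns; you should either check all rows of $\chan{\tof}$ directly — easy from identities such as
\begin{eqnarray}
\tof\,\Z_{(3)}\,\tof=\tfrac{1}{2}\left(\Z_{(3)}+\Z_{(1)}\Z_{(3)}+\Z_{(2)}\Z_{(3)}-\Z_{(1)}\Z_{(2)}\Z_{(3)}\right) \nonumber
\end{eqnarray}
and its analogues — or supply a symmetry identifying the blocks. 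With those two repairs made explicit, your argument is a complete and cleaner alternative to the paper's computation.
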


\begin{proof}
    We express $G_{P_,P_2,P_3}$ as in Equation \ref{eqn:app:Gp1p2p3}. $\chan{G_{P_1,P_2,P_3}}$ is a $2^{2n}\times 2^{2n}$ matrix, whose rows and columns are labeled by $n$-qubit Paulis. Let $P_r,P_s\in\pauli_n$. Using cyclic property of trace, we have the following.
    \begin{eqnarray}
    \chan{G_{P_1,P_2,P_3}}[P_r,P_s]&=&\frac{1}{2^n}\tr\left[P_r\left(\frac{3\id+Q}{4}\right)P_s\left(\frac{3\id+Q}{4}\right)\right]   \nonumber \\
    &=&\frac{1}{16\cdot 2^n}\tr\left[9P_rP_s+3P_rP_sQ+3P_rQP_s+P_rQP_sQ\right]  \nonumber \\
    &=&\frac{9}{16\cdot 2^n}\tr(P_rP_s)+\frac{3}{16\cdot 2^n}\tr(P_rP_sQ)+\frac{3}{16\cdot 2^n}\tr(P_sP_rQ)+\frac{1}{16\cdot 2^n}\tr(P_rQP_sQ)  \nonumber \\
    &:=&a+b+c+d \label{eqn:abcd}
    \end{eqnarray}

\paragraph{First row and column : } We look at the first entry i.e. $P_r=P_s=\id$. Using Lemma \ref{lem:trQ_Q2} we have
\begin{eqnarray}
    \chan{G_{P_1,P_2,P_3}}[\id,\id]&=&\frac{1}{16\cdot 2^n}[9\tr(\id)+6\tr(Q)+\tr(Q^2)]=1.   \nonumber
\end{eqnarray}
Let us consider any other entry of the first row and column. Let $P_r=\id$ and $P_s\neq\id$. 
\begin{eqnarray}
    \chan{G_{P_1,P_2,P_3}}[\id,P_s]&=&\frac{1}{16\cdot 2^n}[9\tr(P_s)+6\tr(P_SQ)+\tr(Q^2P_s)] \nonumber \\
    &=&\frac{1}{16\cdot 2^n}[9\tr(P_s)+6\tr(P_SQ)+\tr((7\id - 6Q)P_s)] \qquad [\text{Lemma \ref{lem:trQ_Q2}}]\nonumber \\
    &=&\frac{1}{16\cdot 2^n}[6\tr(P_sQ)+7\tr(P_s)-6\tr(P_sQ)]=0 \qquad [\text{Cyclic property of trace}]    \nonumber
\end{eqnarray}
Similarly $\chan{G_{P_1,P_2,P_3}}[P_r,\id]=0$. Thus all entries in the first row and column, except the first one is 0. The first one is 1.

\paragraph{Diagonal : } Let $P_r=P_s$. From Equation \ref{eqn:abcd} we have,
\begin{eqnarray}
    a&=&\frac{9}{16\cdot 2^n}\tr[P_r^2]=\frac{9}{16}    \nonumber \\
    b+c&=&\frac{3}{16\cdot 2^n}[\tr(P_r^2Q)-\tr(QP_r^2)]=0  \nonumber \\
    d&=&\frac{1}{16\cdot 2^n}\tr[(P_rQ)^2]  \nonumber \\
    &=&\frac{1}{16\cdot 2^n}[\tr[(P_rP_1)^2]+\tr[(P_rP_2)^2]+\tr[(P_rP_3)^2]+\tr[(P_rP_1P_2)^2]+\tr[(P_rP_2P_3)^2]  \nonumber \\
    &&+\tr[(P_rP_3P_1)^2]+\tr[(P_rP_1P_2P_3)^2] ]   \nonumber
\end{eqnarray}

\textbf{Case i : } If $P_r$ commmutes with each of $P_1, P_2, P_3$, then $d=\frac{7}{16}$ and $a+b+c+d=1$.

\textbf{Case ii : } $P_r$ anti-commutes with one of the Paulis, let $[P_r,P_1]\neq 0$. Then
\begin{eqnarray}
    d&=&\frac{1}{16}(-1+1+1-1+1-1-1)=-\frac{1}{16}  \nonumber 
\end{eqnarray}

\textbf{Case iii : } $P_r$ anti-commutes with two of the Paulis, let $[P_r,P_1],[P_r,P_2]\neq 0$. Then
\begin{eqnarray}
    d&=&\frac{1}{16}(-1-1+1+1-1-1+1)=-\frac{1}{16}  \nonumber 
\end{eqnarray}

\textbf{Case iv : } $P_r$ anti-commutes with all the Paulis, let $[P_r,P_1],[P_r,P_2],[P_r,P_3]\neq 0$. Then
\begin{eqnarray}
    d&=&\frac{1}{16}(-1-1-1+1+1+1-1)=-\frac{1}{16}  \nonumber 
\end{eqnarray}
So, in cases ii, iii and iv we have $a+b+c+d=\frac{1}{2}$. Therefore, diagonal elements are in $\{1,\frac{1}{2}\}$. 

\paragraph{Off-diagonal elements : } In this case $P_r\neq P_s$. So from Equation \ref{eqn:abcd} we have
\begin{eqnarray}
    a&=&\frac{9}{16\cdot 2^n}\tr[P_rP_s]=0,  \nonumber \\
    b+c&=&\frac{3}{16\cdot 2^n}[\tr(P_rP_sQ)+\tr(P_sP_rQ)],  \nonumber \\
    d&=&\frac{1}{16\cdot 2^n}\tr[P_rQP_sQ],  \nonumber \\
    &=&\frac{1}{16\cdot 2^n}[\tr(P_rP_1P_sQ)+\tr(P_rP_2P_sQ)+\tr(P_rP_3P_sQ)-\tr(P_rP_1P_2P_sQ) \nonumber \\
    &&-\tr(P_rP_2P_3P_sQ)-\tr(P_rP_3P_1P_sQ)+\tr(P_rP_1P_2P_3P_sQ). \label{eqn:offDiagCaseI}
\end{eqnarray}

\textbf{Case I : $\mathbf{[P_r,P_s]\neq 0}$ : } In this case $b+c=0$ from the above equation. 

\textbf{Case Ia : $\mathbf{P_r=\pm iP_sP_1}$ : } Then from the above equation we have
\begin{eqnarray}
    d&=&\pm\frac{i}{16\cdot 2^n}[\tr(Q)+\tr(P_sP_1P_2P_sQ)+\tr(P_sP_1P_3P_sQ)-\tr(P_sP_2P_sQ)   \nonumber \\
    &&-\tr(P_sP_1P_2P_3P_sQ)-\tr(P_sP_3P_sQ)+\tr(P_sP_2P_3P_sQ) ]   \nonumber \\
    &=&\pm\frac{i}{16\cdot 2^n}[-\tr[(P_sP_1P_2)^2]-\tr[(P_sP_1P_3)^2]-\tr[(P_sP_2)^2]   \nonumber \\
    &&-\tr[(P_sP_1P_2P_3)^2]-\tr[(P_sP_3)^2]-\tr[(P_sP_2P_3)^2] ].   \nonumber
\end{eqnarray}
In this case $[P_s,P_1]\neq 0$. If $[P_s,P_2]=[P_s,P_3]=0$ then $d=\pm\frac{i}{16\cdot 2^n}(+1+1-1+1-1-1)=0$. If any one of them anti-commutes, say $[P_s,P_2]\neq 0$ but $[P_s,P_3]=0$. Then $d=\pm\frac{i}{16\cdot 2^n}(-1+1+1-1-1+1)=0$. If $[P_s,P_2],[P_s,P_3]\neq 0$ then $d=\pm\frac{i}{16\cdot 2^n}(-1-1+1+1+1-1)=0$. 

\textbf{Case Ib : $\mathbf{P_r=\pm iP_sP_2}$ } and \textbf{Case Ic : $\mathbf{P_r=\pm iP_sP_3}$ :} Similarly it can be shown that $d=0$. 

\textbf{Case Id : $\mathbf{P_r=\pm iP_sP_1P_2}$ : } From Equation \ref{eqn:offDiagCaseI} we have
\begin{eqnarray}
    d&=&\pm\frac{i}{16\cdot 2^n}\left[\tr(P_sP_2P_sQ)+\tr(P_sP_1P_sQ)+\tr(P_sP_1P_2P_3P_sQ)-\tr(Q)  \right. \nonumber \\
    &&\left. -\tr(P_sP_1P_3P_sQ)-\tr(P_sP_2P_3P_sQ)+\tr(P_sP_3P_sQ) \right] \nonumber \\
    &=&\pm\frac{i}{16\cdot 2^n}\left[\tr[(P_sP_2)^2]+\tr[(P_sP_1)^2]+\tr[(P_sP_1P_2P_3)^2]+\tr[(P_sP_1P_3)^2] \right.   \nonumber \\
    &&\left.+\tr[(P_sP_2P_3)^2]+\tr[(P_sP_3)^2]  \right].   \nonumber
\end{eqnarray}
Here $P_s$ anti-commutes with either one of $P_1,P_2$. Let $[P_s,P_1]\neq 0$ and $[P_s,P_2]=0$. First, we assume that $[P_s,P_3]=0$. Then $d=\pm\frac{i}{16}(1-1-1-1+1+1)=0$. Next, we assume that $[P_s,P_3]\neq 0$. Then $d=\pm\frac{i}{16}(1-1+1+1-1-1)=0$.

\textbf{Case Ie : $\mathbf{P_r=\pm iP_sP_2P_3}$ } and \textbf{Case If : $\mathbf{P_r=\pm iP_sP_3P_1}$ : } With similar arguments we can show that $d=0$. 

\textbf{Case Ig : $\mathbf{P_r=\pm iP_sP_1P_2P_3}$ : } Here from Equation \ref{eqn:offDiagCaseI} we have 
\begin{eqnarray}
    d&=&\pm\frac{i}{16\cdot 2^n}\left[\tr(P_sP_2P_3P_sQ)+\tr(P_sP_1P_3P_sQ)+\tr(P_sP_1P_2P_sQ)-\tr(P_sP_3P_sQ)  \right. \nonumber \\
      &&\left.-\tr(P_sP_1P_sQ)-\tr(P_sP_2P_sQ)+\tr(Q)  \right]  \nonumber \\
      &=&\pm\frac{i}{16\cdot 2^n}\left[-\tr[(P_sP_2P_3)^2]-\tr[(P_sP_1P_3)^2]-\tr[(P_sP_1P_2)^2]-\tr[(P_sP_3)^2]  \right. \nonumber \\
      &&\left.-\tr[(P_sP_1)^2]-\tr[(P_sP_2)^2]  \right].  \nonumber
\end{eqnarray}
In this case $P_s$ anti-commutes with either all three of $P_1,P_2,P_3$ or any of these. First, let us assume that $[P_s,P_1],[P_s,P_2],[P_s,P_3]\neq 0$. Then $d=\pm\frac{i}{16}(-1-1-1+1+1+1)=0$. Next, without loss of generality assume that $[P_s,P_1]\neq 0$ but $[P_s,P_1]=[P_s,P_2]=0$. Then $d=\pm\frac{i}{16}(-1+1+1-1+1-1)=0$.

Hence for every condition in Case I we have $a+b+c+d = 0$.

\textbf{Case II : $\mathbf{[P_r,P_s]=0 }$ : } In this case
\begin{eqnarray}
    b+c&=&\frac{6}{16\cdot 2^n}\tr[P_rP_sQ]. \label{eqn:offDiagCaseII}
\end{eqnarray}

\textbf{Case IIa : $\mathbf{P_r=\pm P_sP_1}$ :} From Equations \ref{eqn:offDiagCaseI} and \ref{eqn:offDiagCaseII} we have
\begin{eqnarray}
    b+c&=&\pm\frac{6}{16\cdot 2^n}\tr[P_1^2]=\pm\frac{6}{16},    \nonumber \\
    d&=&\pm\frac{1}{16\cdot 2^n}[-\tr[(P_sP_1P_2)^2]-\tr[(P_sP_1P_3)^2]-\tr[(P_sP_2)^2]   \nonumber \\
    &&-\tr[(P_sP_1P_2P_3)^2]-\tr[(P_sP_3)^2]-\tr[(P_sP_2P_3)^2] ].   \nonumber
\end{eqnarray}
In this case $[P_s,P_1]=0$. If both $[P_s,P_2]=[P_s,P_3]=0$ then $d=\mp\frac{6}{16}$ and $a+b+c+d=0$. If any one anti-commutes i.e. $[P_s,P_2]\neq 0$ then $d=\pm\frac{1}{16}(+1-1+1+1-1+1)=\pm\frac{2}{16}$ and $a+b+c+d=\pm\frac{1}{2}$. If both $[P_s,P_3],[P_s,P_2]\neq 0$ then $d=\pm\frac{1}{16}(+1+1+1-1+1-1)=\pm\frac{2}{16}$ and $a+b+c+d=\pm\frac{1}{2}$.

\textbf{Case IIb : $\mathbf{P_r=\pm P_sP_2}$ } and  \textbf{Case IIc : $\mathbf{P_r=\pm P_sP_3}$ :} Similar observations can be made.

\textbf{Case IId : $\mathbf{P_r=\pm P_sP_1P_2}$ :} Then from Equations \ref{eqn:offDiagCaseI} and \ref{eqn:offDiagCaseII},
\begin{eqnarray}
    b+c&=&\mp\frac{6}{16\cdot 2^n}\tr[(P_1P_2)^2]=\mp\frac{6}{16}   \nonumber \\
    d&=&\pm\frac{1}{16\cdot 2^n}[\tr[(P_sP_2)^2]+\tr[(P_sP_1)^2]+\tr[(P_sP_1P_2P_3)^2]-\tr[Q]+\tr[(P_sP_3P_1)^2]    \nonumber \\
    &&+\tr[(P_sP_2P_3)^2]+\tr[(P_SP_3)^2]  ].    \nonumber
\end{eqnarray}
In this case either both $[P_s,P_1],[P_s,P_2]\neq 0$ or $[P_s,P_1]=[P_s,P_2]=0$. If both are zero consider the case when $[P_s,P_3]=0$. Then $d=\pm\frac{1}{16}(1+1+1+1+1+1)=\pm\frac{6}{16}$ and hence $a+b+c+d=0$. Next consider $[P_s,P_3]\neq 0$. Then $d=\pm\frac{6}{16}(1+1-1-1-1-1)=\mp\frac{6}{16}$ and hence $a+b+c+d=\mp\frac{1}{2}$.

Now consider if both $[P_s,P_1],[P_s,P_2]\neq 0$. Then if $[P_s,P_3]=0$ then $d=\pm\frac{1}{16}(-1-1+1-1-1+1)=\mp\frac{2}{16}$ and so $a+b+c+d=\mp\frac{1}{2}$. If $[P_s,P_3]\neq 0$ then $d=\pm\frac{1}{16}(-1-1-1+1+1-1)=\mp\frac{2}{16}$ and hence $a+b+c+d=\mp\frac{1}{2}$. 

\textbf{Case IIe : $\mathbf{P_r=\pm P_sP_2P_3}$ } and \textbf{Case IIf : $\mathbf{P_r=\pm P_sP_3P_1}$ :} Similar conclusions can be drawn.

\textbf{Case IIg : $\mathbf{P_r=\pm P_sP_1P_2P_3}$ :} Then from Equations \ref{eqn:offDiagCaseI} and \ref{eqn:offDiagCaseII},
\begin{eqnarray}
    b+c&=&\pm\frac{6}{16\cdot 2^n}\tr[(P_1P_2P_3)^2]=\pm\frac{6}{16}   \nonumber \\
    d&=&\pm\frac{1}{16\cdot 2^n}[-\tr[(P_sP_1P_2)^2]-\tr[(P_sP_2P_3)^2]-\tr[(P_sP_1P_3)^2]-\tr[(P_sP_3)^2]-\tr[(P_sP_1)^2]    \nonumber \\
    &&-\tr[(P_sP_2)^2]+\tr[Q]  ].    \nonumber
\end{eqnarray}
In this case either $[P_s,P_1P_2P_3]=0$, implying either all commutes with $P_s$ or any two anti-commutes with $P_s$. In the former case, $d=\pm\frac{1}{16}(-1-1-1-1-1-1)=\mp\frac{6}{16}$ and so $a+b+c+d=0$. For the latter let $[P_s,P_1],[P_s,P_2]\neq 0$ and $[P_s,P_3]=0$. Then $d=\pm\frac{1}{16}(-1+1+1-1+1+1)=\pm\frac{2}{16}$ and so $a+b+c+d=\pm\frac{1}{2}$. 

To summarize we have the following inferences.
\begin{enumerate}
    \item $\chan{G_{P_1,P_2,P_3}}[P_r,P_r]=1$ when $[P_r,P_1]=[P_r,P_2]=[P_r,P_3]=0$, else it is $\frac{1}{2}$. This proves point (1) of the theorem. Now $P_1$ commutes with half of $\pauli_n$ i.e. $|\{P_r:[P_1,P_r]=0\}|=\frac{4^n}{2}$. Out of these $P_2$ commutes with half of them i.e. $|\{P_r:[P_1,P_r]=[P_2,P_r]=0\}|=\frac{4^n}{4}$. Out of these, $P_3$ commutes with half of them i.e. $|\{P_r:[P_1,P_r]=[P_2,P_r]=[P_3,P_r]=0\}|=\frac{4^n}{8}=2^{2n-3}$. This proves point (4) of the theorem.

    \item The constraints imply that an off-diagonal element $\chan{G_{P_1,P_2,P_3}}[P_r,P_s]=0$ when $[P_r,P_1]=[P_r,P_2]=[P_r,P_3]=0$. This proves point (2).

    \item The above arguments also imply that an off-diagonal element $\chan{G_{P_1,P_2,P_3}}[P_r,P_s]=\pm\frac{1}{2}$ whenever $P_r$ anti-commutes with at least one of $P_1, P_2$ or $P_3$. The sign depends on the commutativity relations between $P_s$ and $P_1,P_2,P_3$. This proves point (3) of the theorem.
    
\end{enumerate}
 
\end{proof}

\section{Pseudocode}
\label{app:pseudocode}

In this section we explain briefly the main pseudocodes for the implementation of the algorithms described in Sections \ref{sec:results} and \ref{sec:method}. We have not included the pseudocode for some small sub-routines. We have simply described them in relevant places. We denote $N=2^n$, where $n$ is the number of qubits. 

\paragraph{I. Generating set $\mathbf{\gen_{Tof}}$ : } In Section \ref{subsec:genTOF} we have introduced the generating set $\gen_{Tof}$ and described its elements that are unitaries represented by a triple of distinct Paulis, satisfying some properties given by Lemma \ref{lem:GtofProperties}. In Algorithm \ref{alg:genTOF} (\textbf{GEN-TOF}) we have given the pseudocode for constructing $\gen_{Tof}$. The output are the triples of Paulis that represent the unitaries. We have eliminated triples using the equalities listed in Lemma \ref{lem:GtofProperties} and some of their implications. For example, $G_{P_1,-P_1P_2,-P_1P_3}=G_{P_1,-P_1P_2,-P_2P_3}=G_{P_1,-P_1P_2,P_1P_2P_3}$. Also we know that if $[P_3,P_1P_2]=0$ then $P_3$ either commutes or anti-commutes with both $P_1$ and $P_2$. Since each triple in $\gen_{Tof}$ consists of mutually commuting Paulis, so if we consider all triples such as $(P_1,P_2,.)$, we need not consider triples such as $(P_1,P_1P_2,.)$. So the check in step \ref{genTOF:checkP1P2} takes care of this redundancy as well as the equality in points (2) and (3) of Lemma \ref{lem:GtofProperties}. We use the sub-routine PAULI-COMM in order to test if two $n$-qubit Paulis commute. This can be done by checking if there are even number of qubit-wise anti-commutations.

\paragraph{II. Channel representation : } In Algorithm \ref{alg:chanTOF} (\textbf{CHAN-TOF}) we have given the pseudocode for computing the channel representation of each element in $\gen_{Tof}$. $\chan{G_{P_1,P_2,P_3}}$ is stored as an array $A_{P_1,P_2,P_3}$ of size $7\cdot 2^{2n-3}$, as discussed in Section \ref{subsec:chanRep}. Each element of this array is of the form $[i,\pm j_1,\pm j_2, \pm j_3]$, implying $\chan{G_{P_1,P_2,P_3}}[i,i]=\frac{1}{2}$, $\chan{G_{P_1,P_2,P_3}}[i,j_1]=\pm\frac{1}{2}$, $\chan{G_{P_1,P_2,P_3}}[i,j_2]=\pm\frac{1}{2}$ and $\chan{G_{P_1,P_2,P_3}}[i,j_3]=\pm\frac{1}{2}$. 

Alternatively, we can compute the channel representation using only the commutation relations, as shown in the proof of Theorem \ref{app:thm:chanRep}. For example, we know that $\chan{G_{P_1,P_2,P_3}}[P_r,P_r]=1$ when $[P_r,P_1]=[P_r,P_2]=[P_r,P_3]=0$, else it is $\frac{1}{2}$. For the latter case the position of the off-diagonal non-zero entries will be at $P_s$, where $P_s = \pm (i)^x P_r P' $, $x\in\{0,1\}$ and $P'\in\{P_1,P_2,P_3,P_1P_2,P_2P_3,P_3P_1,P_1P_2P_3\}$. The sign is determined by the multiplication, as described explicitly in the proof of Theorem \ref{app:thm:chanRep}. This procedure is more efficient that CHAN-TOF.

\paragraph{III. Multiplication by $\mathbf{\chan{G_{P_1,P_2,P_3}}}$ : } In \textbf{MULT-}$\mathbf{\mathcal{G}_{Tof}}$ (Algorithm \ref{alg:multGtof}) we give the pseudocode for efficiently multiplying a unitary with $\chan{G_{P_1,P_2,P_3}}$, as explained in Section \ref{subsec:mult}. We keep in mind that this algorithm is used while working with exactly implementable unitaries, whose channel representation are matrices with elements in $\intg\left[\frac{1}{2}\right]$ and hence are represented as $[a,k]$, implying $\frac{a}{2^k}$.
It calls the sub-routine \textbf{ADD-2} (Algorithm \ref{alg:add2}), which adds two elements in $\intg\left[\frac{1}{2}\right]$, where each element $v=\frac{a}{2^k}$ is represented as $[a,k]$. This in turn calls Algorithm \ref{alg:sde2Red} (\textbf{sde}$\mathbf{_2}$\textbf{-REDUCE}), which reduces a fraction $v=\frac{a}{2^k}$ to $\frac{a'}{2^{k'}}$ such that $k'=\sde_2(v)$.

\paragraph{IV. Approximately implementable unitaries : } The pseudocodes for the implementation of the algorithm to find optimal Toffoli-count of approximately implementable unitaries, as outlined in Section \ref{subsec:algoApprox}, are as follows.

The optimization version, \textbf{APPROX-TOF-OPT} (Algorithm \ref{alg:min}),takes as input a unitary $W\in\mathcal{U}_n$ and precision or error parameter $\epsilon > 0$. It iteratively calls the decision version, \textbf{APPROX-TOF-DECIDE} (Algorithm \ref{alg:decide}), that has an additional parameter $m \in\intg_{>0}$ and outputs YES if there exists an exactly implementable unitary $U$ within distance $\epsilon$ of $W$ and with Toffoli-count at most $m$. It loops over all products of $m$ unitaries from $\gen_{Tof}$ and in each iteration it calculates a set of amplitudes (step \ref{decide:Sc}) in order to perform the amplitude test (step 8), as described in Section \ref{subsec:algoApprox}. If it passes this test then it performs a conjugation test $\mathbf{\mathcal{A}}_{\mathbf{CONJ}}$ (Algorithm \ref{alg:conj}). It decides if an input unitary $W'$ is ``close to a Clifford'' i.e. $W'=EC_0$, where $C_0\in\cliff_n, E\in\mathcal{U}_n$ and $d(E,\id)\leq\epsilon$. APPROX-TOF-DECIDE returns a YES if it passes both the tests.

\paragraph{V. Nested meet-in-the-middle : } We have given the pseudocode for the nested meet-in-the-middle (MITM) algorithm described in Section \ref{subsubsec:nestMITM} in Algorithm \ref{app:alg:nestMITM} (\textbf{Nested MITM}).

\paragraph{VI. Exactly implementable unitaries : } The pseudocodes for the implementation of the heuristic algorithm to find a Toffoli-count-optimal decomposition of exactly implementable unitaries, as explained in Section \ref{subsubsec:heuristic}, are as follows.

The optimization version, \textbf{EXACT-TOF-OPT} (Algorithm \ref{alg:TOFcountOpt}) iteratively calls the decision version, \textbf{EXACT-TOF-DECIDE} (Algorithm \ref{alg:TOFcountDecide}). Since the $\sde_2$ of a unitary can change by at most 1 after multiplication by $\chan{G_{P_1,P_2,P_3}}$ (Lemma \ref{lem:sdeChangeMat}), so we start testing from $\sde_2(\chan{U})$, where $U$ is the input unitary. 

Algorithm \ref{alg:TOFcountDecide} tests if the Toffoli-count of an input unitary is at most a certain integer $m$, and if so it returns a decomposition. The procedure is described in Section \ref{subsubsec:heuristic}, so we give a brief explanation here. In short, it builds a pruned tree, where the input unitary is the root and a Clifford is a leaf. Each edge is multiplication by a generating set element (step \ref{exactDecide:multGv}). We use a $3\times 3$ integer matrix $SH$, whose rows index sde increase, unchanged and decrease. The columns index hamming weight increase, unchanged and decrease. This matrix is used in order to divide the children nodes according to their change in sde and hamming weight with respect to the parent node (step \ref{exactDecide:updateSH}). Then we select a subset of the children nodes which either belong to the minimum cardinality set from the SH matrix (steps \ref{exactDecide:minSH}, \ref{exactDecide:par2}) or have sde 1 (step \ref{exactDecide:par1}). If sde of any node is 0 (implying a Clifford or leaf) then we return the decomposition (step \ref{exactDecide:reachCliff}).  

We use a number of sub-routines. \textbf{HAM-WT-MAT} finds the Hamming weight of an input unitary. \textbf{UPDATE-SH} updates the SH matrix according to the Divide-and-Select rule used. \textbf{MIN-SH} returns the row and column index of the minimum non-zero entry of the SH matrix. 

\paragraph{VII. Random channel representation : } In Algorithm \ref{alg:randChanRep} (\textbf{RANDOM-CHAN-REP}) we generate the channel representation of a random unitary whose Toffoli-count is at most some input integer $tof_{in}$. First $tof_{in}$ number of unitaries are randomly selected from $\chan{\gen_{Tof}}$ and multiplied. Then we randomly permute the columns of the resultant unitary. After that we multiply each column by -1 with probability 1/2. The last two steps reflect multiplication by a random Clifford.

\begin{algorithm}
\scriptsize
 \caption{GEN-TOF}
 \label{alg:genTOF}

  \KwIn{$n$=number of qubits.}
 
 \KwOut{ Generating set $\gen_{Tof}$ as set of triples of Paulis that represent the unitaries in this set.}

$\gen_{Tof} = []$ \;

\For{each $P_1\in\pauli_n\setminus\{\id\}$}
{
    \For{each $P_2\in\pauli_n\setminus\{\id\}$ and $P_2>P_1$}
    {
        \For{each $P_3\in\pauli_n\setminus\{\id\}$ and $P_3>P_2>P_1$}
        {
            \If{PAULI-COMM($P_1,P_2$)==PAULI-COMM($P_2,P_3$)==PAULI-COMM($P_3,P_1$)==YES}
            {
                \If{$P_1P_2=\pm P_3$}
                {
                    \textbf{continue}   \;
                }
                $flag = 1$  \;
                \For{each $(P_a,P_b,P_c)\in \gen_{Tof}$}
                {
                    $prod_{12}=P_aP_b$; $prod_{23}=P_bP_c$; $prod_{31}=P_cP_a$; $prod_{123}=prod_{12}P_c$; $match = 0$ \;
                    $R=[P_a,P_b,P_c]$;    $R'=[P_1,P_2,P_3]$; $Q=[\id,\id,\id]$; $Q'=[\id,\id,\id]$; $indx_1 = [0,1,2]$; $indx_2 = [0,1,2]$ \;
                    \For{$j=0,1,2$}
                    {
                        \For{$k=0,1,2$}
                        {
                            \If{$R[j]==R'[k]$}
                            {
                                $Q[j] = R'[k]$; $Q'[j] = R'[k]$ \; 
                                $match = 1$; $indx_1 = indx_1\setminus\{j\}$; $indx_2 = indx_2\setminus\{k\}$  \;
                            }
                        }
                    }
                    \If{$match==1$}
                    {
                        $\ell =$ size.($indx_1$);     \;
                        \If{$\ell==2$}
                        {
                            $Q[indx_1[0]] = R'[indx_2[0]]$; $Q[indx_1[1]] = R'[indx_2[1]]$  \;
                            $Q'[indx_1[1]] = R'[indx_2[0]]$; $Q'[indx_1[0]] = R'[indx_2[1]]$  \;
                        }
                        \If{$\ell==1$}
                        {
                            $Q[indx_1[0]] = R'[indx_2[0]]$  \;                       
                        }
                        \If{($prod_{12}==\pm Q[0]$ or $\pm Q[1]$) or ($prod_{23}==\pm Q[1]$ or $\pm Q[2]$) or ($prod_{31}==\pm Q[2]$ or $\pm Q[0]$) \label{genTOF:checkP1P2} }
                        {
                            $flag = 0$  \;
                            \textbf{break}  \;
                        }
                        \If{($\ell==1$) and $prod_{123}==\pm Q[0]$ or $\pm Q[1]$ or $\pm Q[2]$}
                        {
                            $flag = 0$  \;
                            \textbf{break}  \;
                        }
                        \If{$\ell==2$}
                        {
                            \If{($prod_{12}==\pm Q'[0]$ or $\pm Q'[1]$) or ($prod_{23}==\pm Q'[1]$ or $\pm Q'[2]$) or ($prod_{31}==\pm Q'[2]$ or $\pm Q'[0]$) }
                            {
                                $flag = 0$  \;
                                \textbf{break}  \;
                            }
                        }    
                    }
                }
                \If{$flag==1$}
                {
                     $\gen_{Tof}$.append($P_1,P_2,P_3$) \;
                }
            }
        }
    }
}

\textbf{return} $\gen_{Tof}$ \;

\end{algorithm} 

 \begin{algorithm}
\scriptsize 
 \caption{CHAN-TOF}
 \label{alg:chanTOF}

 \KwIn{(i) $n$ = number of qubits; (ii) $\gen_{Tof}$}
 
 \KwOut{ $\chan{\gen_{Tof}} = \{A_{P_1,P_2,P_3} : P_1,P_2,P_3\in\pauli_n;\quad\text{array }A_{P_1,P_2,P_3}\text{ represents }\chan{G_{P_1,P_2,P_3}} \}$.}

 $\chan{\gen_{Tof}} = []$ \;
 
 \For{$(P_1,P_2,P_3) \in \gen_{Tof}$}
 {
    $A_{P_1,P_2,P_3} = []$; $\qquad Q = P_1+P_2+P_3-P_1P_2-P_2P_3-P_3P_1+P_1P_2P_3$  \;
    \For{$P_r\in\pauli_n$}
    {
        $tuple = []$  \;
        $val = \frac{1}{2^n}\left(\frac{9}{16}\tr(\id)+\frac{1}{16}\tr((P_rQ)^2)  \right)$   \;
        \If{$val == \frac{1}{2}$}
        {
            $ tuple$.append$(P_r)$; $\qquad num = 0$ \;
            \For{$P_s\in\pauli_n\setminus\{P_r\}$}
            {
                $val = \frac{1}{2^n}\left(\frac{3}{16}\tr(P_rP_sQ)+\frac{3}{16}\tr(P_sP_rQ)+\frac{1}{16}\tr(P_rQP_sQ)  \right)$   \;
                \If{$val == \pm \frac{1}{2}$}
                {
                    $tuple$.append$(\pm P_s)$; $\qquad num = num+1$ \;
                    \If{$num == 3$}
                    {
                        $A_{P_1,P_2,P_3}$.append$(tuple)$   \;
                        \textbf{break}  \;
                    }
                }
            }
        }
    }
    $\chan{\gen_{Tof}}$.append$(A_{P_1,P_2,P_3})$  \;
 }

\textbf{return} $\chan{\gen_{Tof}}$ \;

\end{algorithm}

 \begin{algorithm}
\scriptsize 
 \caption{$\sde_2$-REDUCE}
 \label{alg:sde2Red}

  \KwIn{$v=(a,k)\in\intg\left[\frac{1}{2}\right]$.}
 
 \KwOut{ $v=(a',k')$ such that $k'=\sde_2(v)$.}

 \While{$a\% 2== 0$ and $k!=0$}
 {
    $a\leftarrow a/2$   \;
    $k\leftarrow k-1$   \;
 }

 \textbf{return} $(a,k)$    \;

\end{algorithm} 

\begin{algorithm}
    \scriptsize  
    \caption{ADD-2}
    \label{alg:add2}

    \KwIn{$v_1=(a_1,k_1)$, $v_2=(a_2,k_2)\in\intg\left[\frac{1}{2}\right]$.}

    \KwOut{$v=(a,k)=v_1+v_2$}

    \eIf{$k_1\geq k_2$}
    {
        $num = a_1+a_2\cdot 2^{k_1-k_2} $    \;
        $den = k_1$ \;
    }
    {
        $num = a_1\cdot 2^{k_2-k_1}+a_2 $    \;
        $den = k_2$ \;
    }

    \textbf{return } $\sde_2$-REDUCE $((num,den))$  \;
    
\end{algorithm}

\begin{algorithm}
    \scriptsize 
    \caption{MULT-$\gen_{Tof}$}
    \label{alg:multGtof}

    \KwIn{(i) $\chan{G_{P_1,P_2,P_3}}$ as array $A_{P_1,P_2,P_3}$, (ii) $U$ - both of size $N^2\times N^2$.}

    \KwOut{(i) $U_p=\chan{G_{P_1,P_2,P_3}}U$.}

    $U_p\leftarrow U$   \;
    \For{$i=1,\ldots,\frac{7N^2}{8}$}
    {
        $diag = A_{P_1,P_2,P_3}[i][0]$; $\quad oDiag_1=A_{P_1,P_2,P_3}[i][1]$; $\quad oDiag_2=A_{P_1,P_2,P_3}[i][2]$; $\quad oDiag_3=A_{P_1,P_2,P_3}[i][3]$ \;
        \eIf{$oDiag_1 < 0$}    
        {
            $oDiagIndx_1 = -oDiag_1$    \;
        }
        {
            $oDiagIndx_1 = oDiag_1$ \;
        }
        \eIf{$oDiag_2 < 0$}    
        {
            $oDiagIndx_2 = -oDiag_2$    \;
        }
        {
            $oDiagIndx_2 = oDiag_2$ \;
        }
        \eIf{$oDiag_3 < 0$}    
        {
            $oDiagIndx_3 = -oDiag_3$    \;
        }
        {
            $oDiagIndx_3 = oDiag_3$ \;
        }
        \For{$j=1,\ldots,N^2$}
        {
            $v_1 = (U[diag,j][0],U[diag,j][1]+1)$ \;
            \eIf{$oDiag_1<0$}
            {
                $v_2 = (-U[oDiagIndx_1,j][0],U[oDiagIndx_1,j][1]+1)$    \;
            }
            {
                $v_2 = (U[oDiagIndx_1,j][0],U[oDiagIndx_1,j][1]+1)$    \;
            }
            \eIf{$oDiag_2<0$}
            {
                $v_3 = (-U[oDiagIndx_2,j][0],U[oDiagIndx_2,j][1]+1)$    \;
            }
            {
                $v_3 = (U[oDiagIndx_2,j][0],U[oDiagIndx_2,j][1]+1)$    \;
            }
            \eIf{$oDiag_3<0$}
            {
                $v_4 = (-U[oDiagIndx_3,j][0],U[oDiagIndx_3,j][1]+1)$    \;
            }
            {
                $v_4 = (U[oDiagIndx_3,j][0],U[oDiagIndx_3,j][1]+1)$    \;
            }
            $U_p[diag,j] = $ ADD-2(ADD-2$(v_1,v_2)$, ADD-2$(v_3,v_4)$)  \;
        }
        
    }
    \textbf{return} $U_p$  \;
    
\end{algorithm}


\begin{algorithm}
\scriptsize
 \caption{APPROX-TOF-OPT}
 \label{alg:min}
 
 \KwIn{(i) $W\in \mathcal{U}_n$, (ii) $\epsilon\geq 0$.}
 
 \KwOut{ $\tofeps(W)$.}

  $m\leftarrow 1$, $decision\leftarrow\text{NO}$    \;
 \While{(1)}
 {
    $decision\leftarrow$ APPROX-TOF-DECIDE $(W,m,\epsilon)$ \;
    \eIf{$decision==\text{YES}$}
    {
        \textbf{return} $m$ \;
    }
    {
        $m\leftarrow m+1$   \;
    }
 }
\end{algorithm}

 \begin{algorithm}
\scriptsize 
 \caption{APPROX-TOF-DECIDE}
 \label{alg:decide}
 
 \KwIn{(i) $W\in \mathcal{U}_n$, (ii) integer $m > 0$, (iii) $\epsilon\geq 0$.}
 
 \KwOut{ $\text{YES}$ if $\exists U\in\clifft_n^{Tof}$ such that $d(U,W)\leq\epsilon$ and $\tofcount(U)\leq m$; else $\text{NO}$.}
 
 $\gen_{Tof}\leftarrow $ GEN-TOF$(n)$ \label{Adec:S}\;
 
 \For{every $\widetilde{U}=\prod_{j=m}^1G_{P_{1_j},P_{2_j},P_{3_j}}$ such that $G_{P_{1_j},P_{2_j},P_{3_j} }\in\gen_{Tof}$ and $P_{1j}\neq P_{1,j+1}$,$P_{2j}\neq P_{2,j+1}$,$P_{3j}\neq P_{3,j+1}$ }
 {
    $W'=W^{\dagger}\widetilde{U}$    \label{decide:W}\;
    $\mathcal{S}_c\leftarrow\{\left|\tr\left(W'P\right)/N\right|:P\in\pauli_n\}$ and sort this set in descending order \label{decide:Sc}\;
    \For{$M=1,2\ldots N^2$}
    {
        $\mathcal{S}_1\leftarrow $ First $M$ terms in $\mathcal{S}_c$  \label{decide:S1}\;
        $\mathcal{S}_0=\mathcal{S}_c\setminus\mathcal{S}_1$ \label{decide:S0}\;
        \If{each term in $\mathcal{S}_1 \in \left[\frac{1-\epsilon^2}{\sqrt{M}}-\sqrt{M(2\epsilon^2-\epsilon^4)},\frac{1}{\sqrt{M}}+\sqrt{M(2\epsilon^2-\epsilon^4)}\right]$ and each term in $\mathcal{S}_0\in\left[0,\sqrt{M(2\epsilon^2-\epsilon^4)}\right]$}
        {
            \If{YES $\leftarrow \mathcal{A}_{CONJ}(W',\epsilon)$}
            {
            \textbf{return} $\text{YES}$ \;
            }
        }
    }
 }
 \textbf{return} $\text{NO}$\;
\end{algorithm}

\begin{algorithm}
\scriptsize 
 \caption{$\mathcal{A}_{CONJ}$}
 \label{alg:conj}
 
 \KwIn{(i) $W'\in \mathcal{U}_n$, (ii) $\epsilon\geq 0$.}
 
 \KwOut{ $\text{YES}$ if $\exists C_0\in\cliff_n, E\in\mathcal{U}_n$ such that $W'=E^{\dagger}C_0$, where $d(E,\id)\leq\epsilon$ ; else $\text{NO}$.}
 
 $p\leftarrow 1$    \;
 
 \For{every $P_{out}\in \pauli_n$}
 {
    \If{$p==1$}
    {
        $p\leftarrow 0$ \;
    }

    \For{every $P_{in}\in\pauli_n$}
    {
        \If{$(1-4\epsilon^2+2\epsilon^4) \leq\left|\tr\left(W'P_{out}W'^{\dagger}P_{in}\right)\right|/N\leq 1$}
        {
            $p\leftarrow p+1$   \;
            \If{$p>1$}
             {
                \textbf{return} NO  \;
            }
        }
        \If{$2\epsilon <\left|\tr\left(W'P_{out}W'^{\dagger}P_{in}\right)\right|/N < (1-4\epsilon^2+2\epsilon^4) $}
        {
            \textbf{return} NO \;
        }
    }
 }
 
 \textbf{return} YES    \;
 \end{algorithm} 

\begin{algorithm}
\scriptsize
 \caption{Nested MITM}
 \label{app:alg:nestMITM}
 \KwIn{(i) A unitary $U\in\clifft_n^{Tof}$, (ii)  generating set $\gen_{Tof}$, (iii) test-count $m$, (iv) $c\geq 2$}
 \KwOut{A circuit (if it exists) for $U$ such that Toffoli-count is at most $m$. }
 
 $S_0 \leftarrow \{\id\}$;  $\quad i\leftarrow 1$     \;  \label{nestMITM:start}
 \While{$i\leq \left\lceil\frac{m}{c}\right\rceil$ \label{nestMITM:whileStart}}
 {
    $S_i\leftarrow \{ \left(\chan{G_{P_1,P_2,P_3}}\chan{W}\right)^{(co)} : G_{P_1,P_2,P_3}\in\gen_{Tof}, W\in S_{i-1}  \} $; $\qquad k=c-1$ \; \label{nestMITM:depthInc}
    
        \For{$\chan{W}=\chan{W_1}\chan{W_2}\ldots \chan{W_k}$ where $\chan{W_i}\in S_i$ or $\chan{W_i}\in S_{i-1}$ \label{nestMITM:for}}
        {
            \If{$\exists \chan{W'}\in S_i$ such that $\left(\chan{W}^{\dagger}\chan{U}\right)^{(co)}=\chan{W'}$ \label{nestMITM:check1}}
            {
                  \Return $\chan{W_1},\chan{W_2},\ldots,\chan{W_k},\chan{W'}$  \label{nestMITM:return1}\;  
                   break \; 
            }
            \ElseIf{$\exists \chan{W'}\in S_{i-1}$ such that $\left(\chan{W}^{\dagger}\chan{U}\right)^{(co)}=\chan{W'}$ \label{nestMITM:check2}}
            {
                  \Return $\chan{W_1},\chan{W_2},\ldots,\chan{W_k},\chan{W'}$  \label{nestMITM:return2}\;  
                   break \; 
            }
        }\label{nestMITM:forEnd}
   $i\leftarrow i+1$ \;
 }  \label{nestMITM:whileEnd}
 \If{no decomposition found}
 {
    \Return "$U$ has Toffoli-count more than $m$."  \;
 }
 
\end{algorithm}
    
\begin{algorithm}
    \scriptsize
    \caption{EXACT-TOF-OPT}
    \label{alg:TOFcountOpt}

    \KwIn{$\chan{U}$, where $U\in\clifft_n^{Tof}$}

    \KwOut{$m$, the minimum Toffoli-count of $U$ and its decomposition $\chan{U}=\left(\prod_{j=m}^1\chan{G_{P_{1_j},P_{2_j},P_{3_j} }} \right)\chan{C_0} $, where $G_{P_{1_j},P_{2_j},P_{3_j}}\in\gen_{Tof}$.}

 $m=\sde_{\chan{U}}$    \;
 \eIf{$m==0$}
 {
    \textbf{return } 0  \;
 }
 {
     \While{1}
    {
        $(m',\mathcal{D})\leftarrow$ EXACT-TOF-DECIDE$(\chan{U},m)$    \;
        \eIf{$m'==-1$}
        {
            $m\leftarrow m+1$   \;
        }
        {
            \textbf{return} $(m',\mathcal{D})$  \;
        }
    }
 }

\end{algorithm}

\begin{algorithm}
    \scriptsize
    \caption{EXACT-TOF-DECIDE}
    \label{alg:TOFcountDecide}

    \KwIn{(i)$\chan{U}$ where $U\in\clifft_n^{Tof}$; (ii) $m\in\intg_{>0}$; (iii) $\chan{\gen_{Tof}}$.}

    \KwOut{$(m',\mathcal{D})$, where $m'\in\intg$ is -1 if Toffoli-count is more than $m$ and $\mathcal{D}$ is a decomposition if Toffoli-count is $m'\geq 1$.}

     $ \widetilde{U} = []$; $\qquad parNode=[\chan{U},\Path_{\chan{U}},\sde_{\chan{U}},\ham_{\chan{U}}]$;  $\qquad parNode =\{\widetilde{U}\}$ \tcp*[h]{Root node} \;

    \For{$i=1,\ldots,m$}
    {
        $childNode = []$; $ \qquad SH=[0]_{3\times 3}$   \;
        \For{each $\widetilde{U}=[\chan{U},\Path_{\chan{U}},\sde_{\chan{U}},\ham_{\chan{U}}] \in parNode$}
        {
            $P_{prev} = \Path_{\chan{U}}[i-1]$; $\quad \sde_{par} = \sde_{\chan{U}}$; $\quad \ham_{par} = \ham_{\chan{U}}$   \;
            \For{each $\chan{G_{P_1,P_2,P_3}} \in \chan{\gen_{Tof}}$}
            {
                \If{$(P_1,P_2,P_3)\neq P_{prev}$}
                {
                    $W \leftarrow$ MULT-$\gen_{Tof} (\chan{G_{P_1,P_2,P_3}}, \chan{U})$; $\quad\ham_W\leftarrow$ HAM-WT-MAT$(W)$; $\quad\sde_W\leftarrow$ GET-SDE$(W)$ \label{exactDecide:multGv}  \;
                    \eIf{$\sde_W == 0$}
                    {
                        \textbf{return} $(i,\Path_{\chan{U}}\cup \{ (P_1,P_2,P_3) \} )  $ \tcp*[h]{Reached Clifford} \label{exactDecide:reachCliff} \;
                    }
                    {
                        $(SH,s,h)\leftarrow$UPDATE-SH $(SH,\sde_W,\ham_W,\sde_{par},\ham_{par},rule)$ \label{exactDecide:updateSH}   \;
                    }
                    $\Path_W = \Path_{\chan{U}}\cup \{ (P_1,P_2,P_3) \}$  \;
                    $childNode$.append$(W,s,h,\Path_W,\sde_W,\ham_W )$  \label{exactDecide:childAppend}  \;
                }
            }
        }
        $(s_{indx},h_{indx})\leftarrow$ MIN-SH$(SH)$ \label{exactDecide:minSH}     \;
        \For{each $\widetilde{W}=[W,s,h,\Path_W,\sde_W,\ham_W]\in childNode$}
        {
            \If{$\sde_W > m+1-i$}
            {
                \textbf{continue}   \;
            }
            \If{$\sde_W==1$}
            {
                $parNode$.append$([W,\Path_W,\sde_W,\ham_W])$ \label{exactDecide:par1} \;
            }
            \If{$s==s_{indx}$ and $h==h_{indx}$}
            {
                $parNode$.append$([W,\Path_W,\sde_W,\ham_W])$  \label{exactDecide:par2} \;
            }
        }
    }

    \textbf{return} $(-1,[])$   \;
    
\end{algorithm}

\begin{algorithm}
    \scriptsize
    \caption{RANDOM-CHAN-REP}
    \label{alg:randChanRep}

    \KwIn{$tof_{in}$ : Input Toffoli-count}

    \KwOut{$\chan{U}$ : Channel representation of a unitary with Toffoli-count at most $tof_{in}$}
    
    $i=0$; $\qquad\chan{U}=\chan{\id}$   \;
    \While{$i < tof_{in}$}
    {
        Randomly sample $(P_1,P_2,P_3)$ from $\gen_{Tof}$ \tcp*[h]{As output by GEN-TOF (Algorithm \ref{alg:genTOF} )}\;
        \eIf{$i==0$}
        {
            $P_{prev}=(P_1,P_2,P_3)$    \;
            $\chan{U}\leftarrow$ MULT-$\gen_{Tof}(\chan{G_{P_1,P_2,P_3}},\chan{U})$    \;
        }
        {
            \If{$(P_1,P_2,P_3)\neq P_{prev}$}
            {
                $P_{prev} = (P_1,P_2,P_3)$  \;
                $\chan{U}\leftarrow$ MULT-$\gen_{Tof}(\chan{G_{P_1,P_2,P_3}},\chan{U})$    \;
            }
        }
    }
    
    Randomly permute the columns of $\chan{U}$  \;
    Multiply each column of $\chan{U}$ with $-1$ with probability $\frac{1}{2}$   \;
    \textbf{return} $\chan{U}$  \;
    
\end{algorithm}

\end{document}